\setlist{nosep} 
\def\namedlabel#1#2{\begingroup
    #2%
    \def\@currentlabel{#2}%
    \phantomsection\label{#1}\endgroup
}
\newcommand{\ed}[1]{ #1 }
\newcommand{\edb}[1]{ #1 }
\newcommand{\Z}{\mathbb{Z}}
\newcommand{\R}{\mathbb{R}}
\newtheorem{theo}{Theorem}[section]
\newtheorem{prop}[theo]{Proposition}
\newtheorem{lem}[theo]{Lemma}
\newtheorem{coro}[theo]{Corollary}
\theoremstyle{remark}
\theoremstyle{plain}
\numberwithin{equation}{section}
\def\t0{\rightarrow 0}
\def\ti{\rightarrow \infty}
\newcommand{\f}{\frac}
\newcommand{\hal}{\frac{1}{2}}
\def\1{\mathbf{1}} 
\def \mc{\mathcal}
\renewcommand{\epsilon}{\varepsilon}
\def\Rd{\R^d} 
\def \ZNbeta{Z_{N,\beta}} 
\def \carr{K} 
\def \om{\overline{m}} 
\def\({\left(}
\def\){\right)}
\def\config{\mathcal{X}} 
\def\dconfig{d_{\config}}
\def \probas{\mathcal{M}}
\def\P{\mathbb{P}} 
\def \Pst{P} 
\def \PNbeta{\P_{N, \beta}} 
\def \PgN2{\mathbf{P}_{N,2}} 
\def \HN{\mathcal{H}_N}
\def\Esp{\mathbf{E}} 
\def \E{\Esp}
\def \Ent{\mathrm{Ent}}   
\def \ERS{\mathsf{ent}} 
\def \Leb{\mathbf{Leb}}
\def \Poisson{\mathbf{\Pi}}
\def \fbarbeta{\overline{\mathcal{F}}_{\beta}} 
\def \dist{d}
\def \dist{\mathrm{dist}}
\def\XXint#1#2#3{{\setbox0=\hbox{$#1{#2#3}{\int}$}
     \vcenter{\hbox{$#2#3$}}\kern-.5\wd0}}
\def \Int{\mathrm{Int}}
\def\Cabs{\mathcal{C}}
\newcommand{\crd}[1]{|#1|}
\def \XN{\vec{X}_N}
\def \C{\mathcal{C}}
\def \Ac{\Omega}
\def \config{\mathcal{X}} 
\def \dconfig{d_{\config}} 
\def \bconfig{\overline{\config}} 
\def \dbconfig{d_{\bconfig}} 
\def \pconfig{\mathcal{P}(\config)} 
\def \P{P} 
\def \psconfig{\mathcal{P}_{stat}(\config)} 
\def \psmconfig{\mathcal{P}_{stat,m}(\config)} 
\def \psunconfig{\mathcal{P}_{stat,1}(\config)} 
\def \pbconfig{\overline{\mathcal{M}}(\bconfig)} 
\def \bP{\overline{P}} 
\def \bPx{\overline{P}^x} 
\def \psbconfig{\overline{\mathcal{M}}_{stat}(\bconfig)} 
\def \psunbconfig{\overline{\mathcal{M}}_{stat,1}(\bconfig)}
\def \om{\omega}
\def \Emp{\mathrm{Emp}}
\def \bEmp{\overline{\Emp}}
\def \Dens{\mathrm{Dens}}
\def \Intens{\mathrm{Intens}}
\def \bIntens{\overline{\mathrm{Intens}}}
\def \Int{\mathrm{Int}}
\def \bsigrho{\overline{\sigma}_{\rho}}
\def \Es{E_s}
\def \Esl{E_{s, \Lambda}}
\def \EslN{\mathcal{E}_{s,\Lambda}(N)}
\def \Csd{C_{s,d}}
\def \Ws{\mathcal{W}_s}
\def \WsP{\mathbb{W}_s}
\def \bWsP{\overline{\mathbb{W}}_s}
\def \fbarbeta{\overline{\mathcal{F}}_{\beta}}
\def \Oml{\mathbf{D}_{\Lambda}}
\def \PNbeta{\mathbb{P}_{N,\beta}}
\def \bQN{\bar{\mathfrak{Q}}_N}
\def \fPNbeta{\overline{\mathfrak{P}}_{N, \beta}}
\def \bV{\overline{\mathbb{V}}}
\def \tK{\tilde{K}}
\def \bPd{\overline{P}_d}
\def \bPc{\overline{P}_c}
\def \Rp{R'}
\def \Aabs{\mathcal{A}}
\def \Ibeta{I_{\beta}}
\def \tIntens{\widetilde{\mathrm{Intens}}}
\def \emp{\mathrm{emp}}
\def \femp{\mathfrak{emp}}
\def \muVb{\mu_{V, \beta}}
\def \brho{\bar{\rho}}
\def \rhounif{\rho_{\mathrm{unif}}}
\def \Leba{\Leb_{\Ac^N}}
\title{Large Deviation Principles for Hypersingular Riesz Gases}
\author{Douglas P. Hardin, Thomas Lebl\'e, Edward B. Saff, Sylvia Serfaty}
\address[Douglas P. Hardin]{Center for Constructive Approximation,  Department of Mathematics,  Vanderbilt University, Nashville, TN, 37240, USA}
\email{doug.hardin@vanderbilt.edu}
\address[Thomas Lebl\'e]{Courant Institute of Mathematical Sciences, 251 Mercer Street, New York University,
New York, NY 10012-1110, USA}
\email{thomasl@math.nyu.edu}
\address[Edward B. Saff]{Center for Constructive Approximation,  Department of Mathematics,  Vanderbilt University, Nashville, TN, 37240, USA}
\email{edward.b.saff@vanderbilt.edu}
\address[Sylvia Serfaty]{Courant Institute of Mathematical Sciences\\ 251 Mercer Street, New York University\\
New York, NY 10012-1110, USA  \&  Institut Universitaire de France  \& Sorbonne Universit\'es, UPMC Univ. Paris 06, CNRS, UMR 7598, Laboratoire Jacques-Louis Lions, 4, place Jussieu 75005, Paris, France.}
\email{serfaty@cims.nyu.edu}
\thanks{The research of the first and third authors was supported, in part,
by the U. S. National Science Foundation under the grant DMS-1516400 and was facilitated by the hospitality and support of  
 the Laboratoire Jacques-Louis Lions at UPMC}
\begin{document}
\date{\today}
\maketitle 

{\bf Keywords:} Riesz gases, Gibbs measure, Large deviation principle, Empirical measures, Minimal energy \\

\noindent
{\bf Mathematics Subject Classification:} Primary 82D10, 82B05  Secondary 31C20, 28A78  
\begin{abstract}
We study $N$-particle systems in $\R^d$ whose interactions are governed by a \edb{hypersingular Riesz potential $|x-y|^{-s}$, $s>d$,  and subject to  an external field.}  We provide both macroscopic results as well as microscopic results in the limit as $N\to \infty$
for random point configurations with respect to the associated Gibbs measure at \ed{scaled inverse} temperature $\beta$.  
We show that a large deviation principle holds with a rate function of the form `$\beta$-Energy +Entropy', yielding that the microscopic
behavior (on the scale $N^{-1/d}$) of such $N$-point systems is asymptotically determined by the minimizers of this rate function.  In 
contrast to the asymptotic behavior in the integrable case $s<d$, where on the macroscopic scale $N$-point empirical measures have
limiting density independent of $\beta$, the limiting density for $s>d$ is strongly $\beta$-dependent.  
\end{abstract}
{\bf Keywords:} Riesz gases, Gibbs measure, Large deviation principle, Empirical measures, Minimal energy \\

\noindent
{\bf Mathematics Subject Classification:} Primary 82D10, 82B05  Secondary 31C20, 28A78

\section{Introduction and main results}
\subsection{Hypersingular Riesz gases} \label{sec-gensetting}
Let $d \geq 1$ and $s$ be a real number with $s > d$. We consider a system of $N$ points in the Euclidean space $\Rd$ with \textit{hypersingular} Riesz pairwise interactions, in an external field $V$. The particles are assumed to live in a \textit{confinement set} $\Ac \subseteq \R^d$. The energy $\HN(\XN)$ of the system in a given state $\XN = (x_1, \dots, x_N)  \in (\R^d)^N$ is defined to be
\begin{equation} \label{def:HN}
\HN(\XN) := \sum_{1 \leq i \neq j \leq N} \frac{1}{|x_i-x_j|^s} + N^{s/d} \sum_{i=1}^N V(x_i).
\end{equation}
The external field $V$ is a confining potential, growing at infinity, on which we shall make assumptions later. The term \textit{hypersingular} corresponds to the fact that the singularity of the kernel $|x-y|^{-s}$ is non-integrable with respect to the Lebesgue measure on $\Rd$.

For any $\beta > 0$, the canonical Gibbs measure associated to \eqref{def:HN} at inverse temperature $\beta$ and for particles living on $\Ac$ is given by
\begin{equation}\label{def:PNbeta}
d\PNbeta(\XN) = \frac{1}{\ZNbeta} \exp\left( - \beta N^{-s/d} \HN(\XN)\right) \mathbf{1}_{\Ac^N}(\XN) d\XN,
\end{equation}
where $d\XN$ is the Lebesgue measure on $(\R^d)^N$, $\mathbf{1}_{\Ac^N}(\XN)$ is the indicatrix function of $\Ac^N$, and $\ZNbeta$ is the “partition function”; i.e., the normalizing factor  
\begin{equation}
\label{def:ZNbeta} \ZNbeta := \int_{\Ac^N} \exp\left( - \beta N^{-s/d} \HN(\XN)\right) d\XN.
\end{equation}

We will call the statistical physics system described by \eqref{def:HN} and \eqref{def:PNbeta} a \medskip \textit{hypersingular Riesz gas}.

\edb{For  Riesz potentials in the case $s>d$, ground state configurations (or Riesz energy minimizers) of $N$-particle systems  (with or without the external field $V$)  have been extensively studied in the large $N$ limit, see \cite{HSAdv, HSNotices, Hardin:2016kq} and the references therein.
Furthermore, for the case of positive temperature, the statistical mechanics of Riesz gases have been investigated in \cite{LebSer} but for a different range of the parameter $s$, namely $\max(d-2, 0) \leq s < d$.  In that paper,  a large deviation principle for the empirical process (which encodes the microsopic behavior of the particles at scale $N^{-1/d}$, averaged in a certain way) was derived. The main goal of the present paper is to extend that work to the hypersingular case.  By combining the approaches of the above mentioned papers we obtain a large deviation principle describing  macroscopic as well as microscopic properties for hypersingular Riesz gases.}

\ed{
Studying Riesz interactions for the whole range of $s$ from $0$ to infinity is of interest in approximation theory and coding theory, as it connects logarithmic \edb{interactions}, Coulomb  interactions, and \edb{(in the limit $s \to \infty$)} packing problems, see \cite{HSNotices,saff1997distributing}.}\edb{ Investigating}  \ed{such systems with temperature is also a natural question for statistical mechanics, as it improves our understanding of the behavior of systems with long-range vs. short-range interactions (see, for instance, \cite{draw,cdr,MR2673930}  where the interest of such questions is stressed and \cite{bloom2016large} and  \cite[Section 4.2]{mazars2011long} for \edb{additional} results).} Analyzing the case $s > d$ \edb{is also} a first step toward the study of physically more relevant interactions as the Lennard-Jones \medskip potential.

\ed{The hypersingular \edb{Riesz} case $s>d$ and the \edb{integrable Riesz} case $s<d$ have important differences.}
\edb{For} $s < d$ (which can be thought of as long-range) and, more generally, \edb{for integrable interaction kernels $g$} (which includes regular interactions) 
 the global, macroscopic behavior can be studied using classical potential theory\edb{. Namely,} the empirical measure $\frac{1}{N} \sum_{i=1}^N \delta_{x_i}$ is found to converge rapidly to some equilibrium measure determined uniquely by $\Ac$ and $V$ and obtained as the unique minimizer of the potential-theoretic functional
\begin{equation*}
\iint_{\R^d \times \R^d} g(x-y) d\mu(x) d\mu(y) + \int_{\R^d} V d\mu
\end{equation*}which can be seen as a mean-field energy with a non-local term.
We refer e.g. to \cite{safftotik} or \cite[Chap. 2]{MR3309890} for a treatment of this question (among others). 

\ed{In these \edb{integrable} cases, if temperature is scaled in the same way as here, 
the macroscopic behavior is governed by the equilibrium measure \edb{ and thus} is independent of the temperature \edb{so that} no knowledge of the microscopic distribution of points is necessary to determine the macroscopic distribution.  At the next order in energy, which governs the microscopic distribution of the points, a dependency on $\beta$ appears. As seen in \cite{LebSer}, in  the Coulomb and potential Riesz cases (it is important in the method that the interaction kernel be reducible to the kernel of a local operator, which is known only for these particular interactions),  the microscopic distribution around a point is given by a problem in a whole space with a neutralizing background, fixing the local density as equal to that of the equilibrium measure at that point.  The microscopic distribution is found to minimize the sum of a (renormalized) Riesz energy term and a relative entropy term.  A crucial ingredient in the proof is a ``screening" construction \edb{showing that  energy can be computed additively over large disjoint microscopic boxes; i.e., interactions between configurations in different large microscopic boxes are negligible to this order. }}

\ed{The hypersingular case can be seen as more delicate \edb{than} the \edb{integrable} case  due to the absence of an equilibrium measure. The limit of the empirical measure has to be identified differently. In the case of \edb{ground state configurations} (minimizers), this was done in \cite{Hardin:2016kq}. \edb{For positive temperature}, in contrast with the \edb {above described integrable case}, \edb{we shall show} the \edb{empirical } limit \edb{measure} is obtained as a by-product of the study at the microscopic scale and depends  on $\beta $ in quite an indirect way (see Theorem \ref{theo:LDPmesure}). The microscopic profiles minimize a full-space version of the problem, giving an energy that depends on the local density,  and the macroscopic distribution   can then be found by a local density approximation, by minimizing the sum of its energy and that due to the confinement \edb{potential}. Since the energy is easily seen to scale like $N^{1+s/d}$, the choice of the temperature scaling $\beta N^{-s/d}$ is made so that the energy and the entropy for the microscopic distributions carry the same weight of order $N$. Other choices of temperature scalings  are possible, but would lead to degenerate versions of the situation we are examining, with either all the entropy terms asymptotically disappearing for small temperatures, or the effect of the energy altogether disappearing for large temperatures.
  Note that going to the microscopic scale in order to derive the behavior at the macroscopic scale was already the approach needed in \cite{leble2015large} for the case of the ``two-component plasma", a system of two-dimensional particles of positive and negative charges interacting logarithmically for which no a priori knowledge of the equilibrium measure can be found.
 
 On the other hand, the hypersingular case  is also easier in the sense that the interactions decay faster at infinity, \edb{implying that}  long-range interactions between large microscopic ``boxes" \edb{are negligible and do not require any sophisticated screening procedures}. Our proofs will make crucial use of this ``self-screening" property.}

\ed{To describe the system at the microscopic scale, } we define a Riesz energy $\bWsP$ \edb{(see subsection~\ref{sec:energyerandompoint})} for infinite random point configurations which is the counterpart of the renormalized energy of \cite{petrache2014next,LebSer,leble2016logarithmic} (defined for $s < d$). It  is \edb{conjectured} to be minimized by lattices \ed{ \edb{for certain  low dimensions, but this is a completely open problem  with the exception of  dimension 1 (see \cite{blanc2015crystallization} and the discussion following \eqref{perEnLim}).}}

 To any sequence of configurations $\{\XN\}_N$, we associate an ``empirical process" whose limit  (a random tagged point process) describes the point configurations $\XN$ at scale $N^{-1/d}$. Our main result will be that there is a Large Deviations Principle for the law of this empirical process with rate function equal to (a variant of) the energy  $\beta \bWsP$ plus the relative entropy of the empirical process with respect to the Poisson point process.
 
 For minimizers of the Riesz energy $\HN$, we show that the limiting empirical processes must minimize $\bWsP$, thus describing their  microscopic \medskip structure.

\ed{The question of treating more general interactions than the Riesz ones remains widely  open. The fact that the interaction has a precise homogeneity under rescaling is crucial for the hypersingular case treated here. On the other hand, in the \edb{integrable} case, we do not know how to circumvent the need for expressing the energy via the potential generated by the points; i.e., the need  for the Caffarelli-Silvestre representation  of the interaction as the kernel of a local operator (achieved by adding a space dimension).
 }

\subsection{Assumptions and notation}
\label{sec:assumptions}

\subsubsection{Assumptions}
In the rest of the paper, we assume that $\Ac\subset \R^d$ is closed with positive $d$-dimensional Lebesgue measure and that
\begin{align} \label{ass:regAc}
&\text{$\partial \Ac$ is $C^1$,} \\
\label{ass:regV} & \text{$V$ is a continuous,  non-negative real valued function on $\Ac$.}
\end{align}
Furthermore if $\Ac$ is unbounded, we assume that 
\begin{align} \label{ass:croissanceV}
& \lim_{|x| \ti} V(x) = + \infty, \\
\label{ass:integr} & \exists M > 0  \text { such that } \int \exp\left(- M V(x)\right) dx < + \infty.
\end{align}

The assumption \eqref{ass:regAc} on the regularity of $\partial \Ac$ is mostly technical and we believe that it could be relaxed to e.g. $\partial \Ac$ is locally the graph of, say, a Hölder function in $\R^d$. However it is unclear to us what the minimal assumption could be (e.g., is it enough to assume that $\partial \Ac$ has zero measure?). An interesting direction would be to study the case where $\Ac$ is a $p$-rectifiable set in $\R^d$ with $p < d$ (see e.g. \cite{Borodachov:2016kx, Hardin:2016kq}). 

Assumption \eqref{ass:regV} is quite mild (in comparison e.g. with the corresponding assumption in the $s < d$ case, where one wants to ensure some regularity of the so-called equilibrium measure, which is essentially two orders lower than that for $V$) and we believe it to be sharp for our purposes. Assumption \eqref{ass:croissanceV} is an additional confinement assumption, and \eqref{ass:integr} ensures that the partition function $\ZNbeta$, defined in \eqref{def:ZNbeta}, is finite (at least for $N$ large enough). Indeed the interaction energy is non-negative, hence for $N$ large enough \eqref{ass:integr} ensures that the integral defining the partition function is convergent.

\subsubsection{General notation}

We let $\config$ be the space of point configurations in $\R^d$ (see Section \ref{sec:pointconfig} for a precise definition). If $X$ is some measurable space and $x \in X$ we denote by $\delta_x$ the Dirac mass at $x$.

\subsubsection{Empirical measure and empirical \ed{processes}}
Let $\XN = (x_1, \dots, x_N)$ in $\Ac^N$ be fixed. 

\begin{itemize}
\item We define the empirical measure $\emp(\XN)$ as
\begin{equation} \label{def:emp}
\emp(\XN) := \frac{1}{N} \sum_{i=1}^N \delta_{x_i}.
\end{equation}
It is a probability measure on $\Ac$.

\item We define $\XN'$ as the finite configuration rescaled by a factor $N^{1/d}$ 
\begin{equation} \label{def:ompN}
\XN' := \sum_{i=1}^N \delta_{N^{1/d} x_i}.
\end{equation}
It is a point configuration (an element of $\config$), which represents the $N$-tuple of particles $\XN$ seen at microscopic scale.

\item We define the \textit{tagged empirical process} $\bEmp_N(\XN)$ as
\begin{equation}
\label{def:Emp}
\bEmp_N(\XN) :=  \int_{\Ac} \delta_{\left(x,\,  \theta_{N^{1/d}x} \cdot \XN' \right)} dx,
\end{equation}
where $\theta_x$ denotes the translation by $- x$.  It is a positive measure on $\Ac \times \config$. \\
\end{itemize}

Let us now briefly explain the meaning of the last definition \eqref{def:Emp}. 
For any $x \in \Ac$, $\theta_{N^{1/d}x} \cdot \XN'$ is an element of $\config$ which represents the $N$-tuple of particles $\XN$ centered at $x$ and seen at microscopic scale (or, equivalently, seen at microscopic scale and then centered at $N^{1/d} x$). In particular any information about this point configuration in a given ball (around the origin) translates to an information about $\XN'$ around $x$. We may thus think of $\theta_{N^{1/d}x} \cdot \XN'$ as encoding the behavior of $\XN'$ around $x$.

The measure 
\begin{equation} \label{empiricalfield}
\int_{\Ac} \delta_{\theta_{N^{1/d}x} \cdot \XN'} dx
\end{equation}
is a measure on $\config$ which encodes the behaviour of $\XN'$ around each point $x \in \Ac$. We may think of it as the “averaged” microscopic behavior (although it is not, in general, a probability measure, and its mass can be infinite). The measure defined by \eqref{empiricalfield} would correspond to what is called the “empirical field”.

The tagged empirical process $\bEmp_N(\XN)$ is a finer object, because for each $x \in \Ac$ we keep track of the centering point $x$ as well as of the microscopic information $\theta_{N^{1/d}x} \cdot \XN'$ around $x$. It yields a measure on $\Ac \times \config$ whose first marginal is the Lebesgue measure on $\Ac$ and whose second marginal is the (non-tagged) empirical process defined above in \eqref{empiricalfield}. Keeping track of this additional information allows one to test $\bEmp_N(\XN)$ against functions $F(x, \C) \in C^0(\Ac \times \config)$ which may be of the form
$$
F(x, \C) = \chi(x) \tilde{F}(\C),
$$
where $\chi$ is a smooth function localized in a small neighborhood of a given point of $\Ac$, and $\tilde{F}(\C)$ is a continuous function on the space of point configurations. Using such test functions, we may thus study the microsopic behavior of the system after a small average (on a small domain of $\Ac$), whereas the empirical process only allows one to study the microscopic behavior after averaging over the whole $\Ac$.

\ed{The study of empirical processes, or \textit{empirical fields}, as natural quantities to encode the averaged microscopic behavior appear e.g. in \cite{FollmerOrey} for particles without interaction or \cite{Georgii1} in the interacting case.}

\subsubsection{Large deviations principle}
Let us recall that a sequence $\{\mu_N\}_N$ of probability measures on a metric space $X$ is said to satisfy a Large Deviation Principle (LDP) at speed $r_N$ with rate function $I : X \to [0, +\infty]$ if the following holds for any Borel set $A \subset X$
$$
- \inf_{\mathring{A}} I \leq \liminf_{N \ti}\frac{1}{r_N} \log \mu_N(A) \leq \limsup_{N \ti}\frac{1}{r_N} \log \mu_N(A) \leq - \inf_{\bar{A}} I,
$$
where $\mathring{A}$ (resp. $\bar{A}$) denotes the interior (resp. the closure) of $A$. The functional $I$ is said to be a {\it good rate function} if it is lower semi-continuous and has compact sub-level sets. We refer to \cite{MR2571413} and \cite{Varadhan2016} for   detailed treatments of the theory of large deviations and to \cite{MR3309619} for an introduction to the applications of LDP's in the statistical physics setting.

Roughly speaking, a LDP at speed $r_N$ with rate function $I$ expresses the following fact: the probability measures $\mu_N$ concentrate around the points where $I$ vanishes, and any point $x \in X$ such that $I(x) > 0$ is not “seen” with probability $1 - \exp(-N I(x))$.  

\subsection{Main results}
\subsubsection{Large deviations of the empirical processes}
We let $\fPNbeta$ be the push-forward of the Gibbs measure $\PNbeta$ (defined in \eqref{def:PNbeta}) by the map $\bEmp_N$ defined in \eqref{def:Emp}. In other words, $\fPNbeta$ is the law of the random variable “tagged empirical process” when $\XN$ is distributed following $\PNbeta$.

The following theorem, which is the main result of this paper, involves the functional $\fbarbeta=\overline{\mathcal{F}}_{\beta,s}$ defined in \eqref{def:fbarbeta}. It is a free energy functional of the type “$\beta$ Energy + Entropy” (see Section \ref{sec:ERS}, \ref{sec:energy} and \ref{sec:ratefunction} for precise definitions).  The theorem expresses the fact  that the microscopic behavior of the system of particles is determined by the minimization of the functional $\fbarbeta$ and that configurations $\XN$ having empirical processes $\bEmp(\XN)$  far from a minimizer of $\fbarbeta$, have negligible probability of order $\exp(-N)$.
\begin{theo} \label{theo:LDPemp}
For any $\beta > 0$, the  sequence $\{\fPNbeta\}_N$ satisfies a large deviation principle at speed $N$ with good rate function $\fbarbeta - \min \fbarbeta$.
\end{theo}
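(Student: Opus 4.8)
The plan is to follow the standard route for proving LDPs in Gibbsian systems: combine a large-deviations upper and lower bound derived from precise asymptotics of the partition function $\ZNbeta$ together with exponential tightness. The key analytic input is a sharp splitting of the energy $\HN(\XN)$ into a ``self-interaction at microscopic scale'' term plus a confinement term. First I would establish, following the approach of \cite{Hardin:2016kq} for the ground state and adapting it to positive temperature as in \cite{LebSer}, that for configurations whose tagged empirical process is close to a given tagged point process $\bar{P}$ on $\Ac \times \config$, the rescaled energy $N^{-1-s/d}\HN(\XN)$ is, up to $o(1)$, equal to $\int_{\Ac} \WsP(\bar P^x)\,dx + \int_\Ac V\,d(\text{first marginal law})$, where $\WsP$ is the Riesz energy functional for infinite configurations defined in subsection~\ref{sec:energyerandompoint}. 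The point is that because $s>d$ the interaction decays fast enough at infinity that cross-interactions between well-separated microscopic boxes are negligible — this is the ``self-screening'' property advertised in the introduction — so no delicate screening construction is needed, only a localization/subadditivity argument on a mesoscopic partition of $\Ac$.

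Next I would handle the entropy side. The reference measure $\Leb_{\Ac^N}d\XN$ pushed forward under $\bEmp_N$ carries, at exponential scale $N$, a relative entropy cost: a configuration with tagged empirical process near $\bar P$ occupies a phase-space volume whose logarithm is $\approx -N\, \ERS(\bar P)$, where $\ERS$ is the specific relative entropy of $\bar P$ (averaged over $x\in\Ac$) with respect to the Poisson point process of the appropriate intensity. This is where one invokes the classical machinery (Föllmer--Orey \cite{FollmerOrey}, Georgii \cite{Georgii1}) on large deviations for empirical fields of Poisson/i.i.d. points, localized via the tagging. Combining the two pieces, the exponential rate of $\fPNbeta(\text{neighborhood of }\bar P)$ is governed by $\beta\big(\int_\Ac \WsP(\bar P^x)dx + \int V\big) + \ERS(\bar P)$ minus the normalization $\frac{1}{N}\log \ZNbeta$, which by a matching argument converges to $-\min \fbarbeta$. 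One then upgrades the neighborhood estimates to a genuine LDP via the standard covering/projective-limit argument, using that $\config$ with its vague/local topology and the space of tagged processes are Polish, and one proves exponential tightness of $\{\fPNbeta\}$ from the confinement bounds \eqref{ass:croissanceV}--\eqref{ass:integr} on $V$ (these force the empirical measure to stay on a compact set with overwhelming probability) together with a bound preventing points from clustering too much (again a consequence of $s>d$ and the energy penalty). Lower semicontinuity and compact sub-level sets of $\fbarbeta$ — i.e.\ that it is a \emph{good} rate function — follow from lower semicontinuity of $\WsP$ and of the specific entropy and from the coercivity the entropy provides.

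The structure of the argument is therefore: (i) prove the energy splitting/$\Gamma$-convergence lower bound $\liminf N^{-1-s/d}\HN \ge \int\WsP + \int V$ along sequences with converging empirical processes, and a matching upper bound constructed by placing near-optimal microscopic patterns in each mesoscopic cell (a ``screening-free'' construction exploiting fast decay); (ii) combine with the Poissonian entropy asymptotics to get, for each $\bar P$, $\lim_{\delta\to0}\lim_{N}\frac1N\log\fPNbeta(B(\bar P,\delta)) = -(\fbarbeta(\bar P) - \min\fbarbeta)$, giving the weak LDP; (iii) prove exponential tightness; (iv) check $\fbarbeta$ is a good rate function. Steps (i) and (iii) rely on the deterministic energy estimates and are where the hypersingularity $s>d$ is essential; step (ii) is where the interplay between the energy scaling $N^{1+s/d}$ and the entropy scaling $N$ (which dictated the temperature scaling $\beta N^{-s/d}$) makes the two contributions comparable.

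The main obstacle I expect is the \textbf{upper bound in the energy splitting together with the construction realizing it}: one must show that a configuration with empirical process near $\bar P$ really can be built so that its energy is no larger than $\int\WsP(\bar P^x)\,dx$ plus an error that vanishes, \emph{and} that it occupies enough phase-space volume to contribute the claimed entropy. Even though $s>d$ removes the need for the full screening machinery of the Coulomb/integrable case, one still has to control boundary effects near $\partial\Ac$ (using \eqref{ass:regAc}), the non-uniformity of the target intensity across $\Ac$, and the fact that $\WsP$ is defined via a somewhat indirect limiting procedure over large boxes — reconciling that definition with an actual finite-$N$ configuration (a ``regularization + discretization'' step) is the technical heart. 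Relatedly, proving that $\frac1N\log\ZNbeta \to -\min\fbarbeta$ (rather than merely bounding it) requires both directions of the variational estimate and is the point where all the pieces must be made quantitative and glued together.
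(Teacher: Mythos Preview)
Your proposal is correct and follows essentially the same route as the paper: a Sanov-type volume estimate for the reference measure (Proposition~\ref{prop:Sanovreference}), an LDP upper bound via the $\Gamma$-$\liminf$ of the energy (Propositions~\ref{prop:glinf} and~\ref{prop:LDPUB}), an LDP lower bound via an explicit construction of microstates with controlled energy and sufficient volume (Propositions~\ref{prop:quasicontinu} and~\ref{prop:LDPLB}), and exponential tightness to upgrade to a full LDP. You correctly identify the lower-bound construction as the technical heart, and the paper indeed resolves it through the two mechanisms you anticipate: a truncation/regularization to handle the singularity at the origin, and a ``shrinking'' of configurations within mesoscopic boxes to exploit the fast decay for $s>d$ (the self-screening you mention). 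One minor point: exponential tightness in the paper does not rely on the growth of $V$ or an anti-clustering energy bound as you suggest, but comes essentially for free because the total number of points is fixed at $N$, which controls the expected number of points per unit volume under the empirical process (see \cite[Section~4.1]{LebSer}).
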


\begin{coro} \label{coro:ZNbeta}
The first-order expansion of $\log \ZNbeta$ as $N\to \infty$ is
\begin{equation*}
\log \ZNbeta  = - N \min \fbarbeta  + o(N).
\end{equation*}
\end{coro}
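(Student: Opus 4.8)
The plan is to extract the asymptotics of $\ZNbeta$ from the \emph{un-normalized} one-sided estimates that underlie the proof of Theorem~\ref{theo:LDPemp}. Let $X$ denote the metric space of tagged point processes on which $\fbarbeta$ is defined and on which $\fPNbeta$ lives, and let $\nu_N$ be the push-forward by $\bEmp_N$ of the (un-normalized) measure $\exp\!\bigl(-\beta N^{-s/d}\HN(\XN)\bigr)\mathbf 1_{\Ac^N}(\XN)\,d\XN$, so that $\fPNbeta=\ZNbeta^{-1}\,\nu_N$ and, crucially, $\ZNbeta=\nu_N(X)$ is the \emph{total mass} of $\nu_N$. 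As in the standard Varadhan-type derivation of a Gibbsian LDP, the proof of Theorem~\ref{theo:LDPemp} produces, before normalization, an upper bound
\begin{equation}
\limsup_{N\to\infty}\tfrac1N\log\nu_N(F)\ \le\ -\inf_{F}\fbarbeta\qquad\text{for every closed }F\subseteq X,
\end{equation}
together with the matching lower bound $\liminf_{N\to\infty}\tfrac1N\log\nu_N(O)\ge-\inf_{O}\fbarbeta$ for every open $O\subseteq X$; dividing by $\ZNbeta$ and combining the two is exactly what yields the LDP for $\{\fPNbeta\}_N$.

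First I would apply the un-normalized upper bound with $F=X$: since $\inf_X\fbarbeta=\min\fbarbeta$ (the minimum being attained because $\fbarbeta$ is a good rate function, hence lower semi-continuous with compact sub-level sets, as asserted in Theorem~\ref{theo:LDPemp}), this gives $\limsup_N\tfrac1N\log\ZNbeta\le-\min\fbarbeta$. For the reverse inequality I would apply the un-normalized lower bound with $O=X$; concretely this bound is obtained by fixing a minimizer $\bar P$ of $\fbarbeta$ and building, for each $N$, a sufficiently large family of configurations $\XN\in\Ac^N$ whose tagged empirical processes $\bEmp_N(\XN)$ are close to $\bar P$, with energy $\HN(\XN)$ controlled and with enough phase-space volume that $\tfrac1N\log\int\exp(-\beta N^{-s/d}\HN)\,d\XN$ over that family is $\ge-\fbarbeta(\bar P)-o(1)=-\min\fbarbeta-o(1)$. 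The two inequalities together give $\tfrac1N\log\ZNbeta\to-\min\fbarbeta$, i.e.\ $\log\ZNbeta=-N\min\fbarbeta+o(N)$.

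The essential point to keep in mind is that the statement of Theorem~\ref{theo:LDPemp} for the \emph{normalized} measures $\fPNbeta$ cannot by itself yield the corollary, since normalization erases the multiplicative constant $\ZNbeta$; one must invoke the proof of the theorem at the level of the un-normalized measures $\nu_N$. Thus the genuine obstacle is not in the corollary but is already overcome inside the proof of Theorem~\ref{theo:LDPemp}: namely, the sharp energy lower bounds (phrased through $\bWsP$ and the self-screening/subadditivity structure) combined with the volume/entropy estimates on the one hand, and the explicit near-optimal construction on the other. Granting these ingredients, Corollary~\ref{coro:ZNbeta} is immediate upon specializing the two-sided bounds to $F=O=X$.
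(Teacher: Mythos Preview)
Your proposal is correct and is essentially the paper's own argument: the proofs of Propositions~\ref{prop:LDPUB} and~\ref{prop:LDPLB} really establish the un-normalized bounds $\limsup_N\frac1N\log\nu_N(B(\bP,\epsilon))\le -\fbarbeta(\bP)+o_\epsilon(1)$ and $\liminf_N\frac1N\log\nu_N(B(\bP,\epsilon))\ge -\fbarbeta(\bP)$, and the paper then combines these (via exponential tightness) and specializes to the whole space exactly as you describe. Your explicit remark that one must work at the level of the un-normalized measure $\nu_N$ rather than $\fPNbeta$ is precisely the point behind the paper's ``we easily deduce''.
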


\subsubsection{Large deviations of the empirical measure}
As a byproduct of our microscopic study, we derive a large deviation principle which governs the asymptotics of the empirical measure (which is a macroscropic quantity). Let us denote by $\femp_{N,\beta}$ the law of the random variable $\emp(\XN)$ when $\XN$ is distributed according to $\PNbeta$. The rate function $\Ibeta=I_{\beta,s}$, defined in Section \ref{sec:ratefunction} (see \eqref{def:Ibeta}), \ed{has the  form 
\begin{equation}\label{formeI}
\Ibeta(\rho)= \int_{\Omega} f_\beta(\rho(x)) \rho(x)\, dx + \beta \int_{\Omega} V(x) \rho(x) \, dx + \int_{\Omega} \rho(x) \log \rho(x)\, dx,
 \end{equation}
 and is a local density approximation. The function $f_\beta$  in this expression is determined by a minimization problem over the {\it microscopic} empirical processes.}

\begin{theo} \label{theo:LDPmesure}
For any $\beta > 0$, the sequence $\{\femp_{N,\beta}\}_{N}$ obeys a large deviation principle at speed $N$ with good rate function $\Ibeta - \min \Ibeta$. In particular, the empirical measure converges almost surely to the unique minimizer of $\Ibeta$.
\end{theo}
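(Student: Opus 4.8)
\textbf{Proof strategy for Theorem~\ref{theo:LDPmesure}.} The plan is to obtain Theorem~\ref{theo:LDPmesure} from Theorem~\ref{theo:LDPemp} by contraction, the key point being that the macroscopic empirical measure is, up to a negligible error, a fixed function of the microscopic tagged empirical process. To a tagged point process $\bar P$ on $\Ac \times \config$ with first marginal $\mathrm{Leb}_\Ac$, disintegrated as $\bar P = \int_\Ac \delta_x \otimes \bar P^x\, dx$, I associate the positive measure $\bIntens(\bar P)$ on $\Ac$ whose density at $x$ is the intensity of $\bar P^x$ (its mean number of points per unit volume near the origin). Unwinding the definition~\eqref{def:Emp}, one checks that for a fixed bounded Lipschitz $\chi$ on $\Ac$,
\[
\int_{\Ac} \chi \, d\bIntens\big(\bEmp_N(\XN)\big) = \frac{1}{N}\sum_{i=1}^N \chi(x_i) + O\!\left(N^{-1/d}\right) = \int_{\Ac}\chi\, d\,\emp(\XN) + O\!\left(N^{-1/d}\right),
\]
the error coming from replacing the average of $\chi$ over a ball of radius $N^{-1/d}$ by its value at the center, plus boundary effects controlled by the $C^1$ regularity of $\partial\Ac$; similarly the total masses agree up to $o(1)$. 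Hence $\emp(\XN)$ and $\bIntens(\bEmp_N(\XN))$ are deterministically, hence exponentially, equivalent, so $\femp_{N,\beta}$ has the same large deviations as the push-forward of $\fPNbeta$ by $\bIntens\circ\bEmp_N$, and it suffices to contract the LDP of Theorem~\ref{theo:LDPemp} through $\bIntens$.

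\textbf{Main obstacle.} The map $\bIntens$ is \emph{not} continuous for the relevant topologies: the number of points of a configuration in a fixed ball is an unbounded functional, only lower semicontinuous on $\config$, so mass could a priori escape in the limit. The remedy is to exploit that $\fbarbeta$ is a \emph{good} rate function, whose sub-level sets $\{\fbarbeta \le C\}$ are compact and on which the local point density is a priori controlled: finiteness of the renormalized energy $\bWsP$ forbids clustering of points and bounds the fiberwise intensity from above, while finiteness of the specific relative entropy $\bERS$ with respect to the Poisson process prevents the number of points from escaping to infinity in $\config$. On each $\{\fbarbeta \le C\}$ the restriction of $\bIntens$ is therefore continuous into the space of finite measures on $\Ac$ with the weak topology, and, the LDP of Theorem~\ref{theo:LDPemp} coming with a good rate function (hence exponential tightness), a standard truncation argument transfers the LDP along $\bIntens$. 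This produces an LDP for $\femp_{N,\beta}$ at speed $N$ with good rate function $\rho \mapsto \inf\{\fbarbeta(\bar P) : \bIntens(\bar P)=\rho\} - \min\fbarbeta$.

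\textbf{Identification of the rate function.} It then remains to recognize this contracted functional as $\Ibeta$, which is essentially built into the definition of $\Ibeta$ and of the profile $f_\beta$ in Section~\ref{sec:ratefunction}. Indeed $\fbarbeta$ splits fiberwise over $x\in\Omega$: its energy part depends on $\bar P^x$ only through a whole-space Riesz energy at the local intensity $\rho(x)$, and, by the chain rule for relative entropy, its entropic part is the specific relative entropy of $\bar P^x$ with respect to the Poisson process of intensity $\rho(x)$ plus the relative entropy of the latter with respect to the unit-intensity Poisson process — the latter contributing the term $\int_{\Omega}\rho\log\rho$ of~\eqref{formeI}, the affine-in-$\rho$ remainder being common to all probability densities and hence irrelevant after subtracting $\min\fbarbeta$. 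Minimising fiber by fiber at prescribed intensity $\rho(x)$ — using Jensen's inequality to reduce the non-constant case to the constant one, and taking $\bar P^x$ to realize the local minimum — and adding the confinement term $\beta\int_{\Omega} V\rho$ produces precisely $f_\beta(\rho(x))$ and the expression~\eqref{formeI}.

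\textbf{Uniqueness and almost sure convergence.} Finally, $\Ibeta$ is strictly convex on the set of probability densities on $\Omega$: $\int_{\Omega}\rho\log\rho$ is strictly convex, $\rho\mapsto\int_{\Omega} f_\beta(\rho)\rho$ is convex by the scaling properties of $f_\beta$ established earlier, and $\beta\int_{\Omega} V\rho$ is affine. Hence $\Ibeta$ has a unique minimiser $\rho_\beta$, and the LDP gives $\femp_{N,\beta}(\{\dist(\rho,\rho_\beta)\ge\delta\})\le e^{-cN}$ for every $\delta>0$ and all large $N$, so the Borel–Cantelli lemma yields $\emp(\XN)\to\rho_\beta$ almost surely. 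I expect the continuity/truncation step for the non-continuous intensity map to be the main technical point, with the convexity input for uniqueness a close second.
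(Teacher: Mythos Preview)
Your proposal is correct and follows essentially the same route as the paper: contract the LDP of Theorem~\ref{theo:LDPemp} through the intensity map, identify the contracted rate function with $\Ibeta$ by fiberwise minimisation, and use exponential equivalence to pass from $\bIntens(\bEmp_N(\XN))$ to $\emp(\XN)$.

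The one noteworthy difference is how continuity of the contraction map is handled. You treat this as the main obstacle and resolve it by restricting to sub-level sets of $\fbarbeta$, where the moment bound $\Esp_P[N_0^{1+s/d}]\le C\,\WsP(P)$ (established in the proof of Lemma~\ref{lem:WsP}) gives uniform integrability of the local point count and hence continuity of the intensity; you then invoke goodness of the rate function to make this suffice. The paper instead simply asserts that the concrete representative $\tIntens:\bP\mapsto\int_\Ac \delta_x\,\Esp_{\bPx}[|\C\cap\carr_1|]$ is continuous on $\psbconfig$ and applies the contraction principle directly, without discussing the unboundedness of $|\C\cap\carr_1|$. Your treatment is more careful on this point; the paper's argument implicitly relies on the same moment control but does not spell it out. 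Both arrive at the same conclusion, and your remaining steps (strict convexity from Proposition~\ref{prop:ratefunction}, Borel--Cantelli for almost sure convergence) match what the paper does, partly in Section~\ref{sec:LDPempir}.
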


The rate function $\Ibeta$ is quite complicated to study in general. However, \ed{thanks to the convexity of $f_\beta$ and elementary properties of the standard entropy we may characterize its minimizer in some particular cases (see Section \ref{sec:addproofs} for the proof)}:
\begin{prop} \label{prop:muVbeta}
Let $\muVb$ be the unique minimizer of $\Ibeta$.

\begin{enumerate}
\item If $V = 0$ and $\Ac$ is bounded, then $\muVb$ is the uniform probability measure on $\Ac$ for any $\beta > 0$.
\item If $V$ is arbitrary and $\Ac$ is bounded, $\muVb$ converges to the uniform probability measure on $\Ac$ as $\beta \to 0$.
\item If $V$ is arbitrary, $\muVb$ converges to $\mu_{V,\infty}$ as $\beta \to + \infty$, where $\mu_{V, \infty}$ is the limit empirical measure for energy minimizers as defined in the paragraph below.
\end{enumerate}
\end{prop}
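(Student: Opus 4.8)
The plan is to exploit the convex structure of $\Ibeta$ displayed in \eqref{formeI}. On the convex set of probability densities on $\Ac$, the functional $\Ibeta$ is strictly convex: the field term $\beta\int V\rho$ is linear, the entropy term $\int\rho\log\rho$ is strictly convex, and the energy term $\int\rho\, f_\beta(\rho)$ is convex because $t\mapsto t f_\beta(t)$ is convex on $[0,\infty)$ (from $(t f_\beta(t))''=2f_\beta'(t)+t f_\beta''(t)\ge 0$, using the convexity and monotonicity of $f_\beta$ established in Section~\ref{sec:ratefunction}). This recovers the uniqueness asserted in Theorem~\ref{theo:LDPmesure}. For case (1), with $V\equiv 0$ one has $\Ibeta(\rho)=\int_\Ac\phi(\rho)\,dx$ where $\phi(t):=t f_\beta(t)+t\log t$ is strictly convex; Jensen's inequality with the normalized Lebesgue measure $|\Ac|^{-1}dx$ and $\int_\Ac\rho=1$ yields $\Ibeta(\rho)\ge|\Ac|\,\phi(|\Ac|^{-1})=\Ibeta(\rhounif)$, with equality iff $\rho\equiv|\Ac|^{-1}$, so $\muVb=\rhounif$.

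For case (2) ($\Ac$ bounded, $V$ arbitrary, $\beta\to 0$) I would first show that the microscopic minimization problem defining $f_\beta$ satisfies $\sup_{0\le t\le T} f_\beta(t)\to 0$ as $\beta\to 0$ for each $T>0$ — using a trial stationary point process of the prescribed intensity with a hard core of radius $\to 0$, whose energy is finite but divergent (killed by the factor $\beta$) and whose specific relative entropy tends to $0$. Writing $\rho_\beta$ for the density of $\muVb$ and testing against $\rhounif$,
\[
-\log|\Ac| \;\le\; \int_\Ac \rho_\beta\log\rho_\beta\,dx \;\le\; \Ibeta(\rho_\beta) \;\le\; \Ibeta(\rhounif) \;\le\; f_\beta(|\Ac|^{-1}) + \beta\|V\|_{L^\infty(\Ac)} - \log|\Ac| ,
\]
the first inequality being Jensen for the entropy. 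Letting $\beta\to 0$ forces $\int_\Ac\rho_\beta\log\rho_\beta\to-\log|\Ac|=\min\int_\Ac\rho\log\rho$. Since $\{\int_\Ac\rho\log\rho\le C\}$ is weakly compact (de la Vallée--Poussin) and $\rho\mapsto\int\rho\log\rho$ is weakly lower semicontinuous and strictly convex with unique minimizer $\rhounif$, every weak limit point of $\{\rho_\beta\}$ equals $\rhounif$, i.e.\ $\muVb\to\rhounif$.

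For case (3) ($V$ arbitrary, $\beta\to\infty$) I would rescale and analyze $\beta^{-1}\Ibeta$. The expected behaviour of the defining problem is $\beta^{-1}f_\beta(t)\to \Csd\, t^{s/d}$ locally uniformly, the entropy contribution being $O(\beta^{-1})$ and the energy contribution converging, by homogeneity, to $t^{s/d}$ times the minimal value $\Csd$ of $\bWsP$ at unit intensity. Then $\beta^{-1}\Ibeta$ should $\Gamma$-converge to the ``zero-temperature'' functional $I_\infty(\rho):=\Csd\int\rho^{1+s/d}\,dx+\int V\rho\,dx$, with equicoercivity provided — also when $\Ac$ is unbounded — by the confinement \eqref{ass:croissanceV} through $\int V\rho$ together with the $L^{1+s/d}$ control from $\int\rho^{1+s/d}$. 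Since $I_\infty$ is precisely the functional whose unique minimizer is the limit empirical measure $\mu_{V,\infty}$ of energy minimizers of $\HN$ (the characterization recalled below, cf.\ \cite{Hardin:2016kq}), the fundamental theorem of $\Gamma$-convergence gives $\muVb\to\mu_{V,\infty}$.

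The main obstacle lies in cases (2) and (3): one must control the microscopic minimization defining $f_\beta$ in the two temperature limits — in particular, identify the constant $\Csd=\min\bWsP$ arising as $\beta\to\infty$, and make the $\Gamma$-convergence of $\beta^{-1}\Ibeta$ (with its equicoercivity in the non-compact setting) rigorous. These should follow from quantitative trial-configuration constructions together with the lower bounds used to construct $\fbarbeta$ and analyze $\bWsP$ in Section~\ref{sec:ratefunction}, but assembling them carefully is the crux; case (1) is comparatively soft, resting only on the convexity and monotonicity of $f_\beta$.
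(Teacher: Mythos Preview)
Your argument for case (1) is essentially the paper's: both reduce to Jensen's inequality via the convexity of $t\mapsto t f_\beta(t)$ (equivalently, of the map $\alpha\mapsto\inf_P(\beta\alpha^{1+s/d}\WsP(P)+\alpha\,\ERS[P|\Poisson])$).

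For case (2) your route is correct but takes an unnecessary detour. You drop the energy term on the left via $f_\beta\ge 0$ and then must kill the residual $f_\beta(|\Ac|^{-1})$ on the right by proving $f_\beta\to 0$, which forces the hard--core construction. The paper avoids this entirely by reusing the key observation from case (1): the first integral $\int_\Ac\rho\, f_\beta(\rho)\,dx$ is minimized by $\rhounif$ \emph{for every} $\beta$. Thus from $I_\beta(\mu_{V,\beta})\le I_\beta(\rhounif)$ one subtracts the energy terms (the difference is nonnegative) and the $V$ terms (the difference is $O(\beta)$), obtaining directly
\[
\int_\Ac \mu_{V,\beta}\log\mu_{V,\beta}-\int_\Ac\rhounif\log\rhounif=O(\beta),
\]
and concludes by Csisz\'ar--Kullback--Pinsker. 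No analysis of the microscopic problem as $\beta\to 0$ is needed. Your approach works, but the paper's is both shorter and yields the stronger total-variation convergence immediately.

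For case (3) your $\Gamma$-convergence strategy is a legitimate alternative; the paper argues more directly. It constructs (exactly the ``trial configurations'' you allude to) a sequence $P_k\in\psunconfig$ with $\WsP(P_k)\to\Csd$ and $\ERS[P_k|\Poisson]<\infty$, plugs $\mu_{V,\infty}$ and $P_k$ into $I_\beta$, and after choosing $k=k(\beta)\to\infty$ suitably obtains
\[
\Csd\!\int_\Ac\mu_{V,\beta}^{1+s/d}+\int_\Ac V\mu_{V,\beta}\;\le\;\Csd\!\int_\Ac\mu_{V,\infty}^{1+s/d}+\int_\Ac V\mu_{V,\infty}+o_{\beta\to\infty}(1),
\]
whence convergence by strict convexity of the limiting functional. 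This is your $\Gamma$-$\limsup$ recovery sequence made explicit, combined with the easy $\Gamma$-$\liminf$ bound $\beta^{-1}f_\beta(t)\ge\Csd t^{s/d}$; packaging it as $\Gamma$-convergence is fine but adds bookkeeping (notably the equicoercivity in the unbounded case, which the paper sidesteps). The substantive input---existence of finite-entropy near-minimizers of $\WsP$---is the same in both approaches.
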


\subsubsection{The case of minimizers}
Our remaining results deal with energy minimizers (in statistical physics, this corresponds to setting $\beta = + \infty$). Let $\{\XN\}_N$ be a sequence of point configurations in $\Ac$ such that for any $N \geq 1$, $\XN$ has $N$ points and minimizes $\HN$ on $\Ac^N$. 

The macroscopic behavior is known from \cite{Hardin:2016kq}: there is a unique minimizer $\mu_{V, \infty}$ (the notation differs from \cite{Hardin:2016kq}) of the functional
  \begin{equation} \label{minimiserho}
  \Csd \int_{\Ac} \rho(x)^{1+s/d}\, dx+ \int_{\Ac} V(x) \rho(x)\, dx
  \end{equation}
   among probability densities $\rho$ over $\Ac$ ($\Csd$ is a constant depending on $s,d$ defined in \eqref{def:Csd1}), and the empirical measure $\emp(\XN)$ converges to $\mu_{V, \infty}$ as $N \ti$.   See \eqref{muVinfty} for an explicit formula for $\mu_{V, \infty}$. \ed{Note that the formula \eqref{minimiserho} is what one obtains when letting formally $\beta \to \infty$ in the definition of $\Ibeta$, and is resembling some of the terms arising in  Thomas-Fermi theory (cf. \cite{MR2583992} and \cite{Lieb-TF-Rev}).
 }

The notation for the next statement is given in Sections \ref{sec:pointconfig} and \ref{sec:energy}. Let us simply say that $\bWsP$ (resp. $\Ws$) is an energy functional defined for a random point configuration (resp. a point configuration), and that $\psunbconfig$ (resp. $\config_{\mu_{V,\infty}(x)}$) is some particular subset of random point configurations (resp. of point configurations in $\R^d$). The intensity measure of a random tagged point configuration is defined in Section \ref{sec:intensitymeasure}.
 \begin{prop} \label{prop:minimizers} We have:
\begin{enumerate}
\item  $\{\bEmp(\XN)\}_N$ converges weakly (up to extraction of a subsequence)  to some minimizer $\bP$ of $\bWsP$ over $\psunbconfig$. 
\item The intensity measure of $\bP$ coincides with $\mu_{V, \infty}$. 
\item For $\bP$-almost every $(x, \C)$, the point configuration $\C$ minimizes $\Ws(\C)$ within the class $\config_{\mu_{V,\infty}(x)}$.
\end{enumerate}
 \end{prop}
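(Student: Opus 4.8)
The plan is to treat Proposition~\ref{prop:minimizers} as the zero-temperature ($\beta=+\infty$) analogue of Theorems~\ref{theo:LDPemp} and~\ref{theo:LDPmesure}, re-using the two temperature-independent ingredients on which the proof of the large deviation principle rests: the \emph{energy lower bound} (the $\Gamma$-$\liminf$ inequality: if $\bEmp_N(\XN)\rightharpoonup\bP$ then $\liminf_N N^{-1-s/d}\HN(\XN)\ge\bWsP(\bP)$), and the matching \emph{upper bound}, an explicit near-minimizing construction exploiting the self-screening property of hypersingular kernels. Set $\mathcal{E}_N:=N^{-1-s/d}\inf_{\Ac^N}\HN$. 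The upper-bound construction gives $\limsup_N\mathcal{E}_N<\infty$ and, combined with the lower bound and with the local density reduction already implicit in \eqref{minimiserho}, yields $\lim_N\mathcal{E}_N=\inf_{\psunbconfig}\bWsP=:\mathcal{E}_\infty<\infty$. Fix a sequence $\{\XN\}_N$ as in the statement, so that $N^{-1-s/d}\HN(\XN)=\mathcal{E}_N\to\mathcal{E}_\infty$.

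First I would prove compactness. Writing
\[ N^{-1-s/d}\HN(\XN)=N^{-1-s/d}\sum_{i\neq j}|x_i-x_j|^{-s}+\int_{\Ac}V\,d\,\emp(\XN), \]
using $V\ge0$ and $\mathcal{E}_N=O(1)$ gives $\int_{\Ac}V\,d\,\emp(\XN)=O(1)$, hence by \eqref{ass:croissanceV} the family $\{\emp(\XN)\}_N$ is tight on $\Ac$; the hypersingularity of $|x-y|^{-s}$ also forces a uniform bound on the number of points of $\XN'$ in any fixed ball, so $\{\bEmp_N(\XN)\}_N$ is tight on $\Ac\times\config$. Along a subsequence, $\emp(\XN)\rightharpoonup\mu$ and $\bEmp_N(\XN)\rightharpoonup\bP$; stationarity and the normalization defining $\psunbconfig$ pass to the limit as in the proof of Theorem~\ref{theo:LDPemp}, so $\bP\in\psunbconfig$.

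For part (1), the $\Gamma$-$\liminf$ inequality along the extracted subsequence gives $\bWsP(\bP)\le\liminf_N N^{-1-s/d}\HN(\XN)=\mathcal{E}_\infty=\inf_{\psunbconfig}\bWsP$, so $\bP$ minimizes $\bWsP$ on $\psunbconfig$ (in particular the infimum is attained). For part (2), the macroscopic theorem of \cite{Hardin:2016kq} gives $\emp(\XN)\to\mu_{V,\infty}$, hence $\mu=\mu_{V,\infty}$; since the intensity measure of a tagged process is continuous along the tight sequences considered here and agrees in the limit with the empirical measure, the intensity measure of $\bP$ equals $\mu_{V,\infty}$. For part (3), disintegrate $\bP=\int_{\Ac}\bP^x\,dx$ over the tag variable; the fiberwise structure of the functional reads $\bWsP(\bP)=\int_{\Ac}\bigl(\E_{\bP^x}[\Ws(\C)]+V(x)\,\intensity(\bP^x)\bigr)\,dx$, and for any configuration $\C$ of intensity $\rho$ one has the pointwise bound $\Ws(\C)\ge\Csd\,\rho^{1+s/d}$ (the minimal renormalized energy at density $\rho$). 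Hence
\[ \bWsP(\bP)\ \ge\ \int_{\Ac}\Bigl(\Csd\,\intensity(\bP^x)^{1+s/d}+V(x)\,\intensity(\bP^x)\Bigr)\,dx\ \ge\ \mathcal{E}_\infty, \]
the last inequality because the Thomas--Fermi-type functional \eqref{minimiserho} is minimized by $\mu_{V,\infty}$ with minimal value $\mathcal{E}_\infty$. Since the left-hand side equals $\mathcal{E}_\infty$ by part (1), every inequality above is an equality: the second forces $\intensity(\bP^x)=\mu_{V,\infty}(x)$ for a.e.\ $x$ (uniqueness of the minimizer of \eqref{minimiserho}), and then the first forces $\Ws(\C)=\Csd\,\mu_{V,\infty}(x)^{1+s/d}$ for $\bP$-almost every $(x,\C)$, i.e.\ $\C$ minimizes $\Ws$ within $\config_{\mu_{V,\infty}(x)}$.

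The main obstacle is the rigidity step in part (3): upgrading the single scalar identity $\bWsP(\bP)=\mathcal{E}_\infty$ into the fiberwise, $\bP$-almost sure optimality statement. This requires that the three quantities $\bWsP(\bP)$, $\int_{\Ac}\bigl(\Csd\,\intensity(\bP^x)^{1+s/d}+V\,\intensity(\bP^x)\bigr)\,dx$ and the minimum of \eqref{minimiserho} be saturated simultaneously; it hinges on the strict convexity of $\rho\mapsto\Csd\rho^{1+s/d}$ (hence on the uniqueness and the explicit form of $\mu_{V,\infty}$) and on a measurable-selection/disintegration argument so that ``equality in the mean'' upgrades to ``equality a.e.''. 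A secondary, more routine difficulty is establishing the a priori energy bound and the tightness of $\{\bEmp_N(\XN)\}_N$ directly for minimizers; this follows the same lines as the positive-temperature analysis of Theorem~\ref{theo:LDPemp}, and is in fact simpler since there is no entropy term to control.
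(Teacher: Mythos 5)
Your proposal is correct and follows essentially the same route as the paper's proof: compactness together with the $\Gamma$-$\liminf$ inequality of Proposition \ref{prop:glinf}, the macroscopic results of \cite{Hardin:2016kq} (convergence of $\emp(\XN)$ to $\mu_{V,\infty}$ and the energy asymptotics \eqref{minimizerminimize}), and then saturation of the chain of inequalities built from $\Ws(\C)\ge \Csd\,\Dens(\C)^{1+s/d}$, the scaling properties and the Fatou-type equality of Corollary \ref{coro:Fatouegal}, which forces the intensity to be $\mu_{V,\infty}$ and the fiberwise minimality (no measurable-selection argument is needed: a.e.\ equality follows from equality of integrals of ordered integrands, plus strict convexity of $t\mapsto t^{1+s/d}$). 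The only caveat is notational but worth fixing: throughout you use $\bWsP$ as if it contained the potential term (both in $\mathcal{E}_\infty:=\inf_{\psunbconfig}\bWsP$ and in your fiberwise formula), i.e.\ what the paper denotes $\bWsP+\bV$; with the paper's definitions these identities fail when $V\not\equiv 0$, and what your argument (like the paper's own proof) actually shows is that $\bP$ minimizes $\bWsP+\bV$ over $\psunbconfig$.
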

 The first point expresses the fact that the tagged empirical processes associated to minimizers  converge to minimizers of the “infinite-volume” energy functional $\bWsP$. The second point is a rephrasing of the global result cited above, to which the third point adds some microscopic information.
 
The problem of minimizing the energy functionals $\bWsP$, $\WsP$ or $\Ws$ is hard in general. In dimension $1$, however, it is not too difficult to show that the “crystallization conjecture” holds, namely that the microscopic structure of minimizers is ordered and converge to a lattice:
\begin{prop} \label{prop:crystallization1d}
Assume $d=1$. The unique stationary minimizer of $\WsP$ is the law of $u + \Z$, where $u$ is a uniform choice of the origin in $[0,1]$.
\end{prop}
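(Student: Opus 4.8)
The plan is to describe a stationary point process on $\R$ through its Palm measure and the associated stationary sequence of gaps, and then to exploit the strict convexity of $t\mapsto t^{-s}$ via Jensen's inequality. Recall that $\WsP$ is minimized over the class $\psunconfig$ of stationary point processes on $\R$ of intensity $1$. Let $P\in\psunconfig$ and let $P^0$ denote its Palm measure, a probability measure on point configurations containing the origin. Writing such a configuration as $\mathcal C=\{x_i\}_{i\in\Z}$ with $x_0=0$ and $x_i<x_{i+1}$, the gaps $g_i:=x_{i+1}-x_i$, $i\in\Z$, form a stationary sequence (under the index shift, which corresponds to re-centering at the next point) with $\mathbb E_{P^0}[g_0]=1$. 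Since $s>d=1$, the sum $\sum_{q\in\mathcal C,\,q\neq 0}|q|^{-s}$ converges for configurations of bounded density, and by the definition of $\WsP$ from Section~\ref{sec:energyerandompoint} together with the ergodic theorem (or a Campbell-type identity), one has
\[
\WsP(P)\;=\;\mathbb E_{P^0}\!\left[\,\sum_{q\in\mathcal C,\,q\neq 0}|q|^{-s}\,\right]\;=\;\mathbb E_{P^0}\!\left[\,\sum_{k\geq 1}\Big(\sum_{j=0}^{k-1}g_j\Big)^{-s}\;+\;\sum_{k\geq 1}\Big(\sum_{j=1}^{k}g_{-j}\Big)^{-s}\,\right],
\]
using $|x_k|=g_0+\dots+g_{k-1}$ and $|x_{-k}|=g_{-1}+\dots+g_{-k}$ for $k\geq 1$.

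For the lower bound we may assume $\WsP(P)<+\infty$. The function $t\mapsto t^{-s}$ is strictly convex on $(0,+\infty)$, so for each fixed $k\geq 1$, Jensen's inequality together with the stationarity of the gap sequence gives
\[
\mathbb E_{P^0}\!\left[\Big(\sum_{j=0}^{k-1}g_j\Big)^{-s}\right]\;\geq\;\Big(\mathbb E_{P^0}\Big[\sum_{j=0}^{k-1}g_j\Big]\Big)^{-s}\;=\;k^{-s},
\]
and likewise $\mathbb E_{P^0}[(\sum_{j=1}^{k}g_{-j})^{-s}]\geq k^{-s}$. Summing over $k\geq 1$ yields $\WsP(P)\geq 2\zeta(s)$ (up to the normalization chosen for $\WsP$). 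Conversely, the law of $u+\Z$ with $u$ uniform on $[0,1]$ is stationary of intensity $1$, and since translations leave all pairwise distances unchanged, $\WsP(u+\Z)=\sum_{q\in\Z\setminus\{0\}}|q|^{-s}=2\zeta(s)$. Hence $\min_{\psunconfig}\WsP=2\zeta(s)$ and $u+\Z$ is a minimizer.

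It remains to establish uniqueness. If $P$ is a stationary minimizer, then equality must hold in every Jensen inequality above; in particular, for $k=1$, $\mathbb E_{P^0}[g_0^{-s}]=(\mathbb E_{P^0}[g_0])^{-s}$, which by the \emph{strict} convexity of $t\mapsto t^{-s}$ forces $g_0=1$ $P^0$-almost surely. By stationarity of the gap sequence, $g_i=1$ for all $i$, $P^0$-a.s., so $P^0=\delta_{\Z}$. By the refined Campbell theorem (testing against $f(x,\mathcal C')=\mathbf 1_{[0,1)}(x)\,\mathbf 1[\mathcal C'\neq\Z]$), it follows that $P$-almost every configuration equals $x+\Z$ when re-centered at any of its points, i.e. $P$ is supported on $\{t+\Z:t\in[0,1)\}$; the stationarity of $P$ then forces the law of the shift $t\bmod 1$ to be invariant under all translations of $\R/\Z$, hence to be the uniform measure. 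Therefore $P$ is the law of $u+\Z$, proving uniqueness.

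The only step that is not purely formal is the first displayed identity, namely checking that the definition of $\WsP$ for a random point configuration reduces, for stationary processes of intensity $1$, to the Palm expectation of the convergent sum $\sum_{q\neq 0}|q|^{-s}$ with no leftover regularization term; this is precisely where the hypothesis $s>d$ enters. All remaining ingredients (stationarity of the gap sequence and $\mathbb E_{P^0}[g_0]=1$ under the Palm measure of an intensity-$1$ process, the equality case in Jensen's inequality, and the Palm-inversion description of a stationary process with prescribed Palm measure) are standard facts from point process theory.
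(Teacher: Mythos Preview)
Your argument is correct and shares the same core idea as the paper's proof: decompose the pair interaction by $k$-th neighbor, use that the $k$-th neighbor lies at mean distance $k$ (for a stationary intensity-$1$ process), and apply Jensen with the strict convexity of $t\mapsto t^{-s}$. The difference is one of language. The paper works with the two-point correlation function, using the already-derived formula \eqref{Wrho2P2} and splitting $\rho_{2,P}=\sum_{k\ge 1}\rho_{2,P}^{(k)}$ into $k$-th neighbor correlation functions with $\int\rho_{2,P}^{(k)}=1$ and $\int x\,\rho_{2,P}^{(k)}(x)=k$; Jensen is then applied to $v\mapsto |v|^{-s}(1-|v|/R)$ against each $\rho_{2,P}^{(k)}$. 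You instead pass to the Palm measure and the stationary gap sequence $(g_i)$ with $\mathbb E_{P^0}[g_0]=1$, and apply Jensen directly to $t\mapsto t^{-s}$ against the law of $g_0+\dots+g_{k-1}$. These are equivalent descriptions of the same decomposition.

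What each buys: the paper's route stays closer to the definition of $\WsP$ established in the text (no Palm identity needs to be justified), while your route makes the equality case transparent and yields the uniqueness statement cleanly (strict convexity forces $g_0\equiv 1$, hence $P^0=\delta_{\Z}$ and $P=P_{\Z}$ by Palm inversion). The paper's sketch in fact only shows $\WsP(P)\ge \WsP(P_{\Z})$ and does not spell out uniqueness, so your argument supplies that part. The one point you flag yourself---that $\WsP(P)$ equals the Palm expectation $\mathbb E_{P^0}[\sum_{q\neq 0}|q|^{-s}]$---is routine once $\WsP(P)<\infty$: the process is then a.s.\ simple, and the refined Campbell formula combined with monotone convergence in $R$ gives the identity, precisely because $s>1$ makes the inner sum converge. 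Your hedge ``up to the normalization'' is unnecessary: with the paper's conventions one gets exactly $2\zeta(s)=C_{s,1}$, consistent with \eqref{Wlambda} and Proposition~\ref{prop:WS}.
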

In dimension $2$, it would be expected that minimizers are given by the triangular (or Abrikosov) lattice, we refer to \cite{blanc2015crystallization} for a recent review of such conjectures. \ed{In large dimensions, it is not expected that  lattices are minimizers.}

\ed{
\subsection{Outline of the method}
Our LDP result is phrased in terms of the empirical processes associated to point configurations, as in \cite{LebSer} and thus the objects we consider and the overall aproach are quite similar to \cite{LebSer}. It is however quite simplified by the fact that, because the interaction is short-range and we are in the non-potential case, we do not need to express the energy in terms of the ``electric potential" generated by the point configuration. The definition of the limiting microscopic interaction energy \edb{$\mathcal W_s(\mathcal C)$} is thus significantly simpler than in \cite{LebSer}, it suffices to take, for $\mathcal{C}$ an infinite configuration of points in the  whole space,
 $$\mathcal W_s(\mathcal C)= \liminf_{R\to \infty} \sum_{p,q \in \mathcal C\cap K_R, p \neq q} \frac{1}{|p-q|^s}$$
 where $K_R$ is the cube of sidelength $R$ centered at the origin. When considering this quantity, there is however no implicit knowledge of the average density of points, contrarily to the situation of \cite{LebSer}. This is then easily extended to an energy for point processes $\bWsP$ by taking expectations.
 
 As in \cite{LebSer}, the starting point  of the LDP proof is a Sanov-type result that states that the logarithm of the  volume of configurations whose empirical processes lie in a small ball around a given tagged point process $\bP$ can be expressed as $\edb{(-N)}$ times an entropy  denoted $\ERS({\bP}|\Poisson)$. 
 \edb{As we shall show, $N^{-1-s/d}\HN(\XN)\approx \bWsP(\bP) +\bV(\bP)$ for a sufficiently large set of configurations $\XN$  near $\P$,
  where $\bV(\bP)$ is  a term corresponding to the external potential $V$.  
 Then this will suffice to obtain the LDP  since  the logarithm of the probability of the empirical field being close to $\bP$  is nearly $N$ times 
 \begin{equation*}
 - \beta  \edb{(\bWsP(\bP)+\bV(\bP))} - \ERS({\bP}|\Poisson),
 \end{equation*}
 up to an additive constant.
 The entropy can be expressed in terms of   $\bP^x$ (the process centered at $x$) as
\begin{equation} \ERS({\bP}|\Poisson)=\int (\ERS(\bP^x|\Poisson)-1)\, dx +1,
\end{equation}
where  $\ERS(P |\Poisson)$ is a ``specific relative entropy" with respect to the Poisson point process $\Poisson$. 
  Assuming that $\bP^x$  has an intensity $\rho(x)$, then the scaling properties of the energy $\bWsP$ (the fact that the energy scales like $\rho^{1+s/d}$ where $\rho$ is the density) and of the specific relative entropy $\ERS$ allow to transform this into 
 \begin{multline*}
 -  \int_{\Omega}\left(\beta \rho^{s/d} \WsP(\bP^x) \edb{+ \ERS(\sigma_{\rho(x)}(\bP^x)|\Poisson)} + \beta V(x) \right) \rho(x) \, dx \\ 
 \edb{-}  \int_{\Omega} \rho(x) \log \rho(x) \, dx,
 \end{multline*}}
 which is the desired rate function.
 Minimizing over $P$'s of intensity $\rho$ allows to obtain the rate function $I_\beta$ of \eqref{def:IbetaA}. 

 To run through this argument, we encounter the same difficulties as in \cite{LebSer}, i.e. the difficulty in replacing $\HN$ by $\bWsP$ due to the fact that $\HN$ is not continuous for the topology on empirical processes that we are considering. The lack of continuity of the interaction near the origin is dealt with by a truncation and regularization argument, similarly as in \cite{LebSer}. The lack of continuity due to the locality of the topology is handled thanks to the short-range nature of the Riesz interaction, by showing that  large microscopic boxes effectively do not interact, the ``self-screening" property alluded to before, via a shrinking procedure borrowed from \cite{HSAdv}.  We refer to Section \ref{sec4} for more detail.
 
 }

\section{General definitions}
All the hypercubes considered will have their sides parallel to some fixed choice of axes in $\R^d$. For $R > 0$ we let $\carr_R$ be the hypercube of center $0$ and sidelength $R$. If $A \subset \R^d$ is a Borel set we denote by $|A|$ its Lebesgue measure, and if $A$ is a finite set we denote by $|A|$ its cardinal. 

\subsection{(Random) (tagged) point configurations}
\label{sec:pointconfig}
\subsubsection{Point configurations}
We refer to \cite{dvj} for further details and proofs of the claims.
\begin{itemize}
\item  If $A \subset \R^d$, we denote by $\config(A)$ the set of locally finite point configurations in $A$ or equivalently the set of non-negative, purely atomic Radon measures on $A$ giving an integer mass to singletons. We abbreviate $\config(\R^d)$ as $\config$.

\item For $\C \in \config$, we will often write $\C$ for the Radon measure $\sum_{p \in \mathcal C} \delta_p$.

\item The sets $\config(A)$ are endowed with the topology induced by the weak convergence of Radon measures (also known as vague convergence). These topological spaces are Polish, and we fix a distance  $\dconfig$  on $\config$ which is compatible with the topology on $\config$ (and whose restriction on $\config(A)$ is also compatible with the topology on $\config(A)$).

\item For $x \in \R^d$ and $\C \in \config$ we denote by $\theta_x \cdot \C$ “the configuration $\C$ centered at $x$” (or “translated by $-x$”), namely
\begin{equation}\label{actiontrans}
\theta_x \cdot \C := \sum_{p \in \C} \delta_{p - x}.
\end{equation}
We will use the same notation for the action of $\Rd$ on Borel sets: if $A \subset \R^d$, we denote by $\theta_x \cdot A$ the translation of $A$ by the vector $-x$.


\end{itemize}

\subsubsection{Tagged point configurations.}
\begin{itemize}
\item When $\Ac \subset \R^d$ is fixed, we define $\bconfig := \Ac \times \config$ as the set of “tagged” point configurations with tags in $\Ac$. 
\item We endow $\bconfig$ with the product topology and a compatible distance $\dbconfig$.
\end{itemize}

Tagged objects will usually be denoted with bars (e.g., $\bP$, $\overline{\mathbb{W}}$, \dots). 

\subsubsection{Random point configurations}
\begin{itemize}
\item We denote by $\pconfig$ the space of probability measures on $\config$; i.e., the set of laws of random point configurations.
\item The set $\pconfig$ is endowed with the topology of weak convergence of probability measures (with respect to the topology on $\config$), see \cite[Remark 2.7]{LebSer}.
\item We say that $P$ in $\pconfig$ is \textit{stationary} (and we write $P \in \psconfig$) if its law is invariant by the action of $\R^d$ on $\config$ as defined in \eqref{actiontrans}.
\end{itemize}

\subsubsection{Random tagged point configurations}
\label{sec:randomttaggedpoint}
\begin{itemize}
\item When $\Ac \subset \R^d$ is fixed, we define $\pbconfig$ as the space of measures $\bP$ on $\bconfig$ such that
\begin{enumerate}
\item The first marginal of $\bP$ is the Lebesgue measure on $\Ac$.
\item For almost every $x \in \Ac$, the disintegration measure $\bPx$ is an element of $\pconfig$.
\end{enumerate}
\item We say that $\bP$ in $\pbconfig$ is \textit{stationary} (and we write $\bP \in \psbconfig$) if $\bPx$ is in $\psconfig$ for almost every $x \in \Ac$.
\end{itemize}
Let us emphasize that, in general, the elements of $\pbconfig$ are \textit{not} probability measures on $\bconfig$ (e.g., the first marginal is the Lebesgue measure on $\Ac$).

\subsubsection{Density of a point configuration}
\label{sec:density}
\begin{itemize}
\item For $\C \in \config$, we define $\Dens(\C)$ (the \textit{density} of $\C$) as
\begin{equation}
\label{densDef}
\Dens(\C):=\liminf_{R\to\infty} \frac{\crd{\C \cap \carr_R}}{R^d}.
\end{equation}
\item For $m \in  [0, +\infty]$, we denote by  $\config_m$ the set of point configurations with density $m$. 
\item For $m \in (0, +\infty)$, the scaling map 
\begin{equation}
\label{def:sigmam} \sigma_m : \C \mapsto m\edb{^{1/d}} \C
\end{equation}
is a bijection of $\config_m$ onto $\config_1$, of inverse $\sigma_{1/m}$.
\end{itemize}

\subsubsection{Intensity of a random point configuration}
\begin{itemize} 
\item For $P \in \psconfig$, we define $\Intens(P)$ (the \textit{intensity} of $P$) as
\begin{equation*}
\Intens(P) := \Esp_{P} \left[ \Dens(\C) \right].
\end{equation*}

\item  We denote by $\psmconfig$ the set of laws of random point configurations $P \in \pconfig$ that are stationary and such that $\Intens(P) = m$. 
For $P\in \psmconfig$, the stationarity assumption implies the formula
\begin{equation*}
\Esp_P \left[ \int_{\R^d} \varphi\, d\C \right] = m \int_{\R^d} \varphi(x)\,  dx, \text{ for any $\varphi \in C^0_c(\R^d)$.}
\end{equation*}
\end{itemize}

\subsubsection{Intensity measure of a random tagged point configuration}
\label{sec:intensitymeasure}
\begin{itemize}
\item For $\bP$ in $\psbconfig$, we define $\bIntens(\bP)$ (the \textit{intensity measure} of $\bP$) as
\begin{equation*}
\bIntens(\bP)(x) = \Intens(\bPx),
\end{equation*}
which really should, in general, be understood in a dual sense: for any $f \in C_c(\R^d)$,
$$
\int f d\bIntens(\bP) := \int_{\Ac} f(x) \Intens(\bPx) dx.
$$
\item We denote by $\psunbconfig$ the set of laws of random tagged point configurations $\bP $ in  $\pbconfig$ which are stationary and such that 
\begin{equation*}
\int_{\Ac} \bIntens(\bP)(x)\, dx = 1
\end{equation*}
\item If $\bP$ has intensity measure $\rho$ we denote by $\bsigrho(\bP)$ the element of $\pbconfig$ satisfying
\begin{equation} \label{def:bsigrho}
\left(\bsigrho(\bP)\right)^x = \sigma_{\rho(x)}\left( \bP^x\right), \text{  for all $x \in \Ac$,}
\end{equation}
where $\sigma$ is as in \eqref{def:sigmam}.
\end{itemize}

\subsection{Specific relative entropy} \label{sec:ERS}
\begin{itemize}
\item Let $P$ be in $\psconfig$. The \textit{specific relative entropy} $\ERS[\Pst|\Poisson]$ of $\Pst$ with respect to $\Poisson$, the law of the Poisson point process of uniform intensity $1$, is given by
\begin{equation} \label{defERS}
\ERS[P|\Poisson] := \lim_{R \ti} \f{1}{|\carr_R|} \Ent\left(\Pst_{|\carr_R} | \Poisson_{|\carr_R} \right)
\end{equation}
where $ P_{|\carr_R}$ denotes the process induced on (the point configurations in) $\carr_R$, and   $\Ent( \cdot | \cdot)$ denotes the usual relative entropy (or Kullbak-Leibler divergence) of two probability measures defined on the same probability space.
\item  It is known (see e.g. \cite{MR3309619}) that
the  limit \eqref{defERS} exists as soon as $P$ is stationary, and also that the functional $P \mapsto \ERS[\Pst|\Poisson]$ is affine lower semi-continuous with compact sub-level sets (it is a good rate function).
\item Let us observe that the empty point process has specific relative entropy $1$ with respect to $\Poisson$.
\item If $P$ is in $\psmconfig$ we have (see \cite[Lemma 4.2.]{LebSer})
\begin{equation}\label{scalingent}
\ERS[P |\Poisson]=
\ERS [\sigma_m(P) |\Poisson]m + m \log m +1-m,
\end{equation}
where $\sigma_m(P)$ denotes the push-forward of $P$ by \eqref{def:sigmam}.
\end{itemize}

\subsection{Riesz energy of (random) (tagged) point configurations} \label{sec:energy}
\subsubsection{Riesz interaction}
We will use the notation $\Int$ (as “interaction”) in two slightly different ways:
\begin{itemize}
\item If $\C_1, \C_2$ are some fixed point configurations, we let $\Int[\C_1, \C_2]$ be the Riesz interaction between $\C_1$ and $\C_2$.
\begin{equation*}
\Int[\C_1, \C_2] := \sum_{p \in \C_1, \, q \in \C_2, p \neq q} \frac{1}{|p-q|^s}.
\end{equation*}
\item If $\C$ is a fixed point configuration and $A, B$ are two subsets of $\R^d$ we let $\Int[A, B](\C)$ to be the Riesz interaction between $\C \cap A$ and $\C \cap B$; i.e., 
\begin{equation*}
\Int[A,B](\C) := \Int[\C \cap A, \C \cap B] = \sum_{p \in \C \cap A, q \in \C \cap B, p \neq q} \frac{1}{|p-q|^s}.
\end{equation*}
\item Finally, if $\tau > 0$, we let $\Int_{\tau}$ be the truncation of the Riesz interaction  at distances less than $\tau$; i.e.,
\begin{equation} \label{def:Inttau}
\Int_{\tau}[\C_1, \C_2] := \sum_{p \in \C_1, q \in \C_2, |p-q| \geq \tau} \frac{1}{|p-q|^s}.
\end{equation}
\end{itemize}

\subsubsection{Riesz energy of a finite point configuration}
\label{sec:finitepointenergy}
\begin{itemize}
\item  Let $\om_N=(x_1,\ldots, x_N)$ be in $(\R^d)^N$. We define its Riesz $s$-energy as
\begin{equation} \label{def:Es}
 \Es(\om_N):=\Int[\om_N, \om_N] = \sum_{1 \leq i \neq j \leq N} \frac{1}{|x_i-x_j|^s}.
 \end{equation}
 \item For $A \subset \R^d$, we consider the {\em $N$-point minimal  $s$-energy}
 \begin{equation} \label{def:EsAN}
 \Es(A, N) := \inf_{\om_N \in A^N} \Es(\om_N).
 \end{equation}
 \item The asymptotic minimal energy $\Csd$ is defined as
 \begin{equation}\label{def:Csd1}
\Csd :=\lim_{N\to \infty}\frac{\Es(\carr_1,N)}{N^{1+s/d}}.
\end{equation}
The limit in \eqref{def:Csd1} exists as a positive real number (see \cite{HSNotices,HSAdv}).

\item By scaling properties of the $s$-energy, it follows that
\begin{equation} \label{def:Csd2}
\lim_{N\to \infty}\frac{\Es(\carr_R,N)}{N^{1+s/d}}=\Csd R^{-s}.
\end{equation}
\end{itemize}
 
\subsubsection{Riesz energy of periodic point configurations}
We first extend the definition of the Riesz energy to the case of periodic point configurations.
\begin{itemize}
\item We say that $\Lambda \subset \R^d$ is a $d$-dimensional Bravais lattice if $\Lambda = U \Z^d$, for some nonsingular $d\times d$ real matrix $U$. A fundamental domain for $\Lambda$ is given by  $\Oml = U [-\hal, \hal)^d$, and the co-volume of $\Lambda$ is $|\Lambda| :=\text{vol}(\Oml) = |\det U|$. 

\item If $\C$ is a point configuration (finite or infinite) and $\Lambda$ a lattice, we denote by $\C + \Lambda$ the configuration $\{ p + \lambda \mid p \in \C, \lambda \in \Lambda \}$. We say that $\C$ is $\Lambda$-periodic if $\C + \Lambda = \C$.

\item If $\C$ is $\Lambda$-periodic, it is easy to see that we have $\C = \left(\C \cap \Oml\right) + \Lambda$. The density of $\C$ is thus given by
\begin{equation*}
\Dens(\C) = \frac{ |\C \cap \Oml|}{|\Lambda|}
\end{equation*}
\end{itemize}

Let $\Lambda$ be a lattice and $\om_N = \{x_1, \dots, x_N\} \subset \Oml$. 
\begin{itemize}
\item We define, as in \cite{HSS} for $s > d$, the {\em $\Lambda$-periodic $s$-energy of $\om_N$} as
\begin{equation}\label{perEndef}
\Esl (\om_N):=\sum_{x \in \om_N} \sum_{\substack{y\in \omega_N+\Lambda\\ y\neq x}}\frac{1}{|x-y|^s}.
\end{equation}
\item It follows (cf. \cite{HSS}) that $\Esl(\om_N)$ can be re-written as
\begin{equation} \label{Eslwithzeta}
\Esl(\om_N) = N\zeta_{\Lambda}(s)+\sum_{x\neq y\in \omega_N}\zeta_{\Lambda}(s,x-y),
\end{equation}
where 
$$\zeta_{\Lambda}(s)=\sum_{0\neq v\in \Lambda} {|v|^{-s}}$$
 denotes the {\em Epstein zeta  function} and 
$$\zeta_{\Lambda}(s,x):=\sum_{v\in \Lambda} {|x+v|^{-s}}$$
 denotes the {\em Epstein-Hurwitz zeta function} for the lattice $\Lambda$.
\item Denoting the {\em minimum $\Lambda$-periodic   $s$-energy} by
\begin{equation}\label{minperEndef}
\EslN :=\min_{\om_N \in \Oml^N} \Esl(\om_N),
\end{equation}
it is shown in \cite{HSS} that
\begin{equation}\label{perEnLim}
\lim_{N\to \infty}\frac{\EslN}{N^{1+s/d}}= \Csd |\Lambda|^{-s/d},
\end{equation}
where $\Csd$ is as in \eqref{def:Csd1}.\\
\end{itemize}

The constant $\Csd$ for $s>d$ appearing in \eqref{def:Csd1} and \eqref{perEnLim} is known only in the case $d=1$  where  $C_{s,1}=\zeta_{\Z}(s)=2\zeta(s)$ and $\zeta(s)$ denotes the classical Riemann zeta function.
  For dimensions $d=2, 4, 8$, and $24$,  it has been conjectured (cf. \cite{cohn2007universally, brauchart2012next} and references therein) that $C_{s,d}$ for $s>d$ is also given by an Epstein zeta function, specifically, that  $\Csd=\zeta_{\Lambda_d}(s)$ for $\Lambda_d$ denoting the  equilateral triangular (or hexagonal) lattice, the $D_4$ lattice, the $E_8$ lattice, and the Leech lattice (all scaled to have co-volume 1) in the dimensions $d=2, 4, 8,$ and 24, respectively.  
 
\subsubsection{Riesz energy of an infinite point configuration}
\begin{itemize}
\item Let  $\C$ in $\config$ be an (infinite) point configuration.  We define its Riesz $s$-energy as 
\begin{equation}
\label{def:Ws}
\Ws( \C) := \liminf_{R\to \infty} \frac{1}{R^d} \sum_{p \neq q \in \C \cap \carr_R} \frac{1}{|p-q|^s} = \liminf_{R \ti} \frac{1}{R^d} \Int[\carr_R, \carr_R](\C).
\end{equation}
If $\C =\emptyset$, we define $\Ws(\C)=0$.  The $s$-energy is non-negative and can be $+ \infty$.
\item We have, for any $\C$ in $\config$ and any $m \in (0, + \infty)$
\begin{equation}
\label{scalingw1}
\Ws(\sigma_m \C)= m^{-(1+s/d)} \Ws(\mc{C}).
\end{equation}
\end{itemize}

It is not difficult to verify (cf. \cite[Lemma 9.1]{cohn2007universally}), that if $\Lambda$ is a lattice and $\om_N$ is a $N$-tuple of points in $\Oml$ we have
\begin{equation}\label{WELambda}
\Ws (\om_N +\Lambda) = \frac{1}{|\Lambda|} \Esl (\om_N).
\end{equation}
In particular, we have (in view of \eqref{Eslwithzeta})
\begin{equation}\label{Wlambda}
\Ws(\Lambda)=|\Lambda|^{-1} \zeta_{\Lambda}(s).
\end{equation} 

\subsubsection{Riesz energy for laws of random point configurations}
\label{sec:energyerandompoint}
\begin{itemize}
\item Let $P$ be in $\pconfig$, we define its Riesz $s$-energy as 
\begin{equation}
\label{def:WsP1a}
\WsP (P):= \liminf_{R \ti} \frac{1}{R^d} \E_{P}\left[ \Int[\carr_R, \carr_R](\C) \right].
\end{equation}
\item Let $\bP$ be in $\pbconfig$, we define its Riesz $s$-energy as
\begin{equation}
\label{def:bWsP}
\bWsP (\bP) := \int_{\Ac} \WsP( \bPx)\, dx.
\end{equation}
\item Let $\bP$ in $\pbconfig$ with intensity measure $\rho$. It follows from \eqref{scalingw1}, \eqref{def:bWsP} and the definition \eqref{def:bsigrho} that
\begin{equation} \label{scalingw2}
\bWsP \left( \bP \right) = \int_{\Ac} \rho(x)^{1+s/d}\, \WsP\left( \left(\bsigrho(\bP)\right)^x \right) dx.
\end{equation}
\end{itemize}

Let us emphasize that we define $\WsP$ as in \eqref{def:WsP1a} and \textit{not} by $\Esp_{P}[\Ws]$. Fatou's lemma easily implies that 
\begin{equation} \label{Wsfatou}
\Esp_{P} [ \Ws] \leq \WsP(P)
\end{equation}
 and in fact, in the stationary case, we may show that equality holds (see Corollary \ref{coro:Fatouegal}). 

\subsubsection{Expression in terms of the two-point correlation function}
Let $P$ be in $\pconfig$ and let us assume that the two-point correlation function of $P$, denoted by $\rho_{2, P}$ exists in some distributional sense. We may easily express the Riesz energy of $P$ in terms of $\rho_{2,P}$ as follows
\begin{equation} \label{Wrho2P}
\WsP(P) = \liminf_{R \ti} \frac{1}{R^d}  \int_{\carr_R \times \carr_R}  \frac{1}{|x-y|^s} \rho_{2,P}(x,y) dx dy.
\end{equation} 
If $P$ is stationary, the expression can be simplified as
\begin{equation} \label{Wrho2P2}
\WsP(P) = \liminf_{R \ti} \int_{[-R, R]^d} \frac{1}{|v|^s} \rho_{2,P}(v) \prod_{i=1}^d \left( 1 - \frac{|v_i|}{R} \right) dv \, ,
\end{equation}
where $\rho_{2,P}(v) = \rho_{2,P}(0,v)$ (we abuse notation and see $\rho_{2,P}$ as a function of one variable, by stationarity) and $v = (v_1, \dots, v_d)$. 
Both \eqref{Wrho2P} and \eqref{Wrho2P2} follow from the definitions and easy manipulations, proofs (in a slightly different context) can be found in \cite{leble2016logarithmic}. \ed{Let us emphasize that the integral in the right-hand side of \eqref{Wrho2P} is on two variables, whereas the one in \eqref{Wrho2P2} is a single integral, obtained by using stationarity and applying Fubini's formula, which gives the weight $\prod_{i=1}^d \left( 1 - \frac{|v_i|}{R} \right)$.}

\subsection{The rate functions}
\label{sec:ratefunction}
\subsubsection{Definitions}
\begin{itemize}
\item For $\P$ be in $\pbconfig$, we define 
\begin{equation*}
\bV(\bP) := \int V(x) d\left(\bIntens(\bP)\right)(x).
\end{equation*}
This is the energy contribution of the potential $V$.
\item For $\bP$ be in $\psunbconfig$, we define
\begin{equation} \label{def:fbarbeta}
\fbarbeta(\bP) := \beta \left(  \bWsP(\bP) + \bV(\bP) \right) + \int_{\Ac} \left(\ERS[\bPx | \Poisson] -1\right) dx +1.
\end{equation}
It is a free energy functional, the sum of an energy term $ \bWsP(\bP) + \bV(\bP)$ weighted by the inverse temperature $\beta$ and an entropy term.

\item If $\rho$ is a probability density we define $\Ibeta(\rho)$ as
\ed{
\begin{multline} \label{def:IbetaA}
\Ibeta(\rho) :=  \int_{\Ac}  \inf_{P \in \edb{\mathcal{P}_{stat,\rho(x)}(\config) }} \left(\beta \WsP(P) +\ERS[P|\Poisson] -1\right)dx \\ + \beta \int_{\Ac} \rho(x) V(x)\, dx +1,\qquad
\end{multline}
\edb{which can be written} as
\begin{multline} \label{def:Ibeta}
 \Ibeta(\rho) =  \int_{\Ac} \rho(x) \inf_{P \in \psunconfig} \left( \beta \rho(x)^{s/d} \WsP(P) +\ERS[P|\Poisson] \right)dx \\ + \beta \int_{\Ac} \rho(x) V(x)\, dx + \int_{\Ac}  \rho(x) \log \rho(x) \, dx.
\end{multline}
\edb{This last equation may seem} more complicated but \edb{note that} the $\inf$ inside the integral is taken on a fixed set, independent of $\rho$.}
The rate function $\Ibeta$ is obtained in Section \ref{sec:contractionprinciple} as a \textit{contraction} (in the language of Large Deviations theory, see e.g. \cite[Section 3.1]{MR3309619}) of the functional $\fbarbeta$, \ed{and \eqref{def:Ibeta} follows from \eqref{def:IbetaA} by scaling properties of $\WsP$ and $\ERS[ \cdot | \Poisson]$.}
\end{itemize}

\subsubsection{Properties}
\begin{prop} \label{prop:ratefunction}
For all $\beta > 0$, the functionals  $\fbarbeta$ and $\Ibeta$ are good rate functions. Moreover, $\Ibeta$ is strictly convex.
\end{prop}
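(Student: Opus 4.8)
The plan is to deal with $\fbarbeta$ first and then transfer everything to $\Ibeta$, and to treat strict convexity last.

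\textbf{$\fbarbeta$ is a good rate function.} I would check lower semicontinuity of the three terms of \eqref{def:fbarbeta} separately. For the interaction term, for each fixed $R$ the map $\C\mapsto\Int[\carr_R,\carr_R](\C)$ is a nonnegative lower semicontinuous function on $\config$, being the increasing limit as $\tau\downarrow0$ of the truncations $\Int_\tau[\carr_R,\carr_R]$; hence $P\mapsto\Esp_P[\Int[\carr_R,\carr_R](\C)]$ is lower semicontinuous on $\pconfig$ (expectation of a nonnegative lower semicontinuous function), $\WsP$ is lower semicontinuous as a $\liminf_R$ of such maps, and $\bWsP$ is lower semicontinuous on $\psunbconfig$ by Fatou along the disintegration $x\mapsto\bPx$. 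The same scheme applied to $\C\mapsto|\C\cap\carr_R|$ gives lower semicontinuity of the intensity measure, hence of $\bV$ (here $V\geq0$ is continuous). For the entropy term, $P\mapsto\ERS[P|\Poisson]$ is a good rate function on $\psconfig$ by the results recalled in Section \ref{sec:ERS}, and — exactly as in \cite{LebSer} — its averaged version $\bP\mapsto\int_{\Ac}(\ERS[\bPx|\Poisson]-1)\,dx+1$ is again a good rate function on $\psunbconfig$. Since $\bWsP,\bV\geq0$, the sub-level set $\{\fbarbeta\leq C\}$ is contained in a sub-level set of the entropy term, hence precompact; being closed by lower semicontinuity, it is compact. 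Boundedness below of $\fbarbeta$ (so that $\min\fbarbeta$ is finite) follows from the fact that the Poisson process of intensity $\rho(x)=\bIntens(\bP)(x)$ minimizes specific relative entropy at that intensity, so the entropy term dominates $\int_{\Ac}\rho\log\rho\,dx$.

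\textbf{$\Ibeta$ is a good rate function.} Optimizing the disintegration $\bPx$ in \eqref{def:fbarbeta} over $\mathcal{P}_{stat,\rho(x)}(\config)$ shows, as in \eqref{def:IbetaA}, that $\Ibeta(\rho)=\inf\{\fbarbeta(\bP):\bP\in\psunbconfig,\ \bIntens(\bP)=\rho\}$; that is, $\Ibeta$ is the image of the good rate function $\fbarbeta$ under the map $\bP\mapsto\bIntens(\bP)$. Since this map is continuous (this is carried out in Section \ref{sec:contractionprinciple}), the contraction principle guarantees that $\Ibeta$ is automatically a good rate function. (Alternatively one argues directly from \eqref{def:Ibeta}: $\rho\mapsto\int_{\Ac}\rho\log\rho$, $\rho\mapsto\beta\int_{\Ac}V\rho$ and $\rho\mapsto\int_{\Ac}\rho f_\beta(\rho)$ are lower semicontinuous — the last because the integrand is lower semicontinuous and convex in $\rho(x)$ — and coercivity of the sub-level sets comes from the entropy term together with the confinement term when $\Ac$ is unbounded.)

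\textbf{Strict convexity of $\Ibeta$.} Using \eqref{def:Ibeta}, where the infimum defining $f_\beta$ is over a fixed set independent of $\rho$, write
\[
\Ibeta(\rho)=\int_{\Ac}\rho(x)f_\beta(\rho(x))\,dx+\int_{\Ac}\rho(x)\log\rho(x)\,dx+\beta\int_{\Ac}V(x)\rho(x)\,dx .
\]
The last term is linear in $\rho$; the middle one is strictly convex, since $t\mapsto t\log t$ is strictly convex and two distinct probability densities differ on a set of positive measure. It therefore suffices to show that $\rho\mapsto\int_{\Ac}\rho f_\beta(\rho)$ is convex, i.e.\ that $m\mapsto mf_\beta(m)$ is convex on $[0,+\infty)$; then, for $\rho_0\neq\rho_1$ and $\rho_\lambda=\lambda\rho_0+(1-\lambda)\rho_1$, adding the three terms yields $\Ibeta(\rho_\lambda)<\lambda\Ibeta(\rho_0)+(1-\lambda)\Ibeta(\rho_1)$. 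That $f_\beta$ is nondecreasing is clear since $\WsP\geq0$. For the convexity I would exploit the homogeneity of the problem: using the scaling relations \eqref{scalingw1}, \eqref{scalingent} and Corollary \ref{coro:Fatouegal} (which gives $\WsP(P)=\Esp_P[\Ws]$ for stationary $P$), rescaling to intensity $m$ turns $mf_\beta(m)$ into $\inf_{P}\bigl(\beta\,\Esp_P[\Ws]+\ERS[P|\Poisson_m]\bigr)$ over $\mathcal{P}_{stat,m}(\config)$, where $\Poisson_m$ is the Poisson process of intensity $m$; the energy is now affine in $P$, the feasible set varies affinely with $m$, and one combines this with the joint convexity of (specific) relative entropy in the pair (law, reference measure).

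\textbf{Main obstacle.} The delicate point is exactly this convexity of the free-energy density $m\mapsto mf_\beta(m)$: it is the only step that requires going beyond soft large-deviations arguments and genuinely using the structure of $\WsP$ (its scaling homogeneity, its coincidence with $\Esp[\Ws]$ on stationary laws, and convexity of relative entropy). The subtlety to overcome is that a convex combination $\lambda\Poisson_{m_0}+(1-\lambda)\Poisson_{m_1}$ is a mixed Poisson (Cox) process rather than $\Poisson_{m_\lambda}$, and that the energy depends super-linearly on the density; so the interpolation argument for convexity must be set up carefully rather than by the naive mixture of near-minimizers.
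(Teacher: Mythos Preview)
Your treatment of $\fbarbeta$ and of $\Ibeta$ as good rate functions is essentially the paper's: lower semicontinuity term by term, then compactness of sub-level sets inherited from those of the specific relative entropy. (The paper argues $\Ibeta$ directly rather than via contraction, but your alternative is equally valid and in fact anticipates Section~\ref{sec:contractionprinciple}.)

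The gap is in the strict convexity step. You have correctly isolated the only nontrivial ingredient, convexity of $m\mapsto m f_\beta(m)$, but the route you propose --- rescaling so that the reference becomes $\Poisson_m$ and then appealing to joint convexity of specific relative entropy in the pair (law, reference) --- runs precisely into the obstruction you flag: $\lambda\Poisson_{m_0}+(1-\lambda)\Poisson_{m_1}$ is a Cox process, not $\Poisson_{m_\lambda}$, so joint convexity in $(P,Q)$ does not give convexity in $m$. You leave this unresolved. (In fairness, the paper's own argument at this point is also defective: it asserts that ``the infimum of a family of convex functions is also convex'', which is false in general --- take $\min(x^2,(x-2)^2)$.)

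The clean fix is to keep the reference Poisson at intensity~$1$ and work with~\eqref{def:IbetaA}. Set
\[
F(m)\;:=\;\inf_{P\in\mathcal P_{stat,m}(\config)}\bigl(\beta\,\WsP(P)+\ERS[P|\Poisson]\bigr).
\]
On stationary laws both $P\mapsto\WsP(P)$ and $P\mapsto\ERS[P|\Poisson]$ are \emph{affine}: the first because the $\liminf$ in~\eqref{def:WsP1a} is a genuine limit by Lemma~\ref{lem:WsP}, the second as recalled in Section~\ref{sec:ERS}. The constraint $\Intens(P)=m$ is linear in $P$. Hence if $P_i$ is (near-)optimal for $m_i$, the \emph{naive} mixture $P_\lambda:=\lambda P_0+(1-\lambda)P_1$ is feasible for $m_\lambda=\lambda m_0+(1-\lambda)m_1$ with objective value exactly $\lambda F(m_0)+(1-\lambda)F(m_1)$, giving $F(m_\lambda)\le\lambda F(m_0)+(1-\lambda)F(m_1)$. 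So the ``careful interpolation'' you were searching for is in fact the naive one, once you do not change the reference measure; combined with strict convexity of $\int_{\Ac}\rho\log\rho$, this finishes the proof.
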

\begin{proof}
It is proven in Proposition \ref{prop:lsc} that $\bWsP$ is lower semi-continuous on $\psunbconfig$. As for $\bV$, we may observe that, if $\bP \in \psunbconfig$
\begin{equation*}
\bV(\bP) = \int_{\Ac \times \config} \left( V(x) |\C \cap \carr_1| \right) d\bP(x, \C),
\end{equation*}
and that $(x, \C) \mapsto V(x) |\C \cap \carr_1|$ is lower semi-continuous  on $\bconfig$, thus $\bV$ is lower semi-continuous on $\psunbconfig$, moreover, it is known that $\ERS[\cdot | \Poisson]$ is lower semi-continuous (see Section \ref{sec:ERS}). Thus $\fbarbeta$ is lower semi-continuous. Since $\bWsP$  and $\bV$ are bounded below, the sub-level sets of $\fbarbeta$ are included in those of $\ERS[\cdot | \Poisson]$, which are known to be compact (see again Section \ref{sec:ERS}). Thus $\fbarbeta$ is a good rate function.

The functional $\Ibeta$ is easily seen to be lower semi-continuous, and since $\Ws$, $\ERS$ and $V$ are bounded below, the sub-level sets of $\Ibeta$ are included into those of $\int_{\Ac} \rho \log \rho$ which are known to be compact, thus $\Ibeta$ is a good rate function. 

To prove that $\Ibeta$ is strictly convex in $\rho$, it is enough to prove that the first term in the right-hand side of \eqref{def:Ibeta} is convex (the second one is clearly affine, and the last one is well-known to be strictly convex). We may observe that the map
$$
\rho \mapsto \beta \rho^{1+s/d}  \WsP(P) + \rho \,\ERS[P|\Poisson] - \rho
$$
is convex for all $P$ (because $\WsP(P)$ is non-negative), and the infimum of a family of convex functions is also convex, thus
$$
\rho \mapsto \inf_{P \in \psunconfig} \left( \beta \rho^{1+s/d} \WsP(P) + \rho \, \ERS[P|\Poisson] \right)
$$
is convex in $\rho$, which concludes the proof.
\end{proof}

\section{Preliminaries on the energy}
\subsection{General properties}

\subsubsection{Minimal energy of infinite point configurations}
In this section, we connect the minimization of $\Ws$ (defined at the level of infinite point configurations) with the asymptotics of the $N$-point minimal energy as presented in Section \ref{sec:finitepointenergy}. Let us recall that the class $\config_m$ of point configurations with mean density $m$ has been defined in Section \ref{sec:density}.
\begin{prop} \label{prop:WS}
We have
\begin{equation} \label{minA1}
\inf_{\C \in \config_1} \Ws(\C) = \min_{\C \in \config_1} \Ws(\C) = \Csd,
\end{equation}
where $\Csd$ is as in \eqref{def:Csd1}. Moreover, for any $d$-dimensional Bravais lattice $\Lambda$ of co-volume $1$, there exists a minimizing sequence $\{C_N\}_N$ for $\Ws$ over $\config_1$ such that $\C_N$ is $N^{1/d}\Lambda$-periodic for $N \geq 1$.
\end{prop}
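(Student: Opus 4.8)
The plan is to establish the three claims in \eqref{minA1} by relating the infinite-configuration energy $\Ws$ to the $N$-point minimal energy $\Es(\carr_R, N)$, whose normalized limit is $\Csd$ by \eqref{def:Csd1}--\eqref{def:Csd2}. I would prove the lower bound $\inf_{\C \in \config_1} \Ws(\C) \geq \Csd$ and the existence of a periodic minimizing sequence achieving $\Csd$ in the limit; together these give all assertions, including the fact that the infimum is attained (via the periodic sequence, after passing to a subsequential limit and using lower semicontinuity of $\Ws$, which is Proposition~\ref{prop:lsc}/the lsc statement referenced later, or a direct diagonal argument).

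For the lower bound: let $\C \in \config_1$ be arbitrary with $\Ws(\C) < \infty$. By definition \eqref{def:Ws}, $\Ws(\C) = \liminf_{R\to\infty} R^{-d} \Int[\carr_R,\carr_R](\C)$. For each $R$, the configuration $\C \cap \carr_R$ is a finite set of $N_R := |\C \cap \carr_R|$ points inside $\carr_R$, so $\Int[\carr_R,\carr_R](\C) = \Es(\C \cap \carr_R) \geq \Es(\carr_R, N_R)$. Since $\Dens(\C) = \liminf_R N_R / R^d = 1$, along a subsequence realizing the $\liminf$ in \eqref{def:Ws} we can further extract so that $N_R/R^d \to m$ for some $m \geq 1$ (using that $\liminf N_R/R^d \ge$ the density; one must be a little careful to choose the subsequence so that both the energy ratio converges to $\Ws(\C)$ and $N_R/R^d$ converges, possibly to $+\infty$, in which case the energy blows up and there is nothing to prove). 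Then by \eqref{def:Csd2}, $R^{-d}\Es(\carr_R,N_R) = (N_R/R^d)^{1+s/d}\cdot \Es(\carr_R,N_R)/N_R^{1+s/d} \to m^{1+s/d}\Csd R^{-s}\cdot R^{s}$... — more cleanly, writing $\Es(\carr_R, N_R)/N_R^{1+s/d} \to \Csd R^{-s}$ is not quite the right normalization; instead use scaling directly: $\Es(\carr_R, N_R) = R^{-s}\Es(\carr_1, N_R)$ by homogeneity of the kernel, so $R^{-d}\Es(\carr_R,N_R) = R^{-d-s}\Es(\carr_1,N_R) = (N_R/R^d)^{1+s/d}\big(\Es(\carr_1,N_R)/N_R^{1+s/d}\big) \to m^{1+s/d}\Csd \geq \Csd$ since $m \geq 1$. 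Taking the $\liminf$ gives $\Ws(\C) \geq \Csd$.

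For the upper bound and the periodic minimizing sequence: fix a co-volume-$1$ Bravais lattice $\Lambda$. For each $N$, let $\om_N \subset \Oml_{N}$ where $\Oml_N$ is a fundamental domain of $N^{1/d}\Lambda$ (which has co-volume $N$) be a minimizer of the $N^{1/d}\Lambda$-periodic energy $E_{s,N^{1/d}\Lambda}$, and set $\C_N := \om_N + N^{1/d}\Lambda$. Then $\C_N$ is $N^{1/d}\Lambda$-periodic with density $|\om_N|/|N^{1/d}\Lambda| = N/N = 1$, so $\C_N \in \config_1$. By \eqref{WELambda}, $\Ws(\C_N) = \Ws(\om_N + N^{1/d}\Lambda) = |N^{1/d}\Lambda|^{-1} E_{s,N^{1/d}\Lambda}(\om_N) = N^{-1} E_{s,N^{1/d}\Lambda}(N)$, the minimum periodic energy. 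Now $N^{1/d}\Lambda$ has co-volume $N$, so by the scaling relation \eqref{perEnLim} applied to the lattice $N^{1/d}\Lambda$ — or rather by \eqref{perEnLim} with the co-volume made explicit — $\lim_{N} E_{s,N^{1/d}\Lambda}(N)/N^{1+s/d} = \Csd |N^{1/d}\Lambda|^{-s/d} = \Csd N^{-s/d}\cdot$... here again I must track the scaling: $E_{s,cN^{1/d}\Lambda}$... The cleanest route is: by homogeneity, $E_{s,N^{1/d}\Lambda}(N) = N^{-s/d} E_{s,\Lambda}(N)$ (rescaling space by $N^{-1/d}$), hence $\Ws(\C_N) = N^{-1-s/d} E_{s,\Lambda}(N) = \big(E_{s,\Lambda}(N)/N^{1+s/d}\big) \to \Csd |\Lambda|^{-s/d} = \Csd$ by \eqref{perEnLim} since $|\Lambda| = 1$. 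Thus $\{\C_N\}$ is a minimizing sequence with $\Ws(\C_N) \to \Csd$, each $\C_N$ being $N^{1/d}\Lambda$-periodic, which combined with the lower bound proves $\inf_{\config_1}\Ws = \Csd$; the infimum is attained because the lower semicontinuity of $\Ws$ together with compactness (any sequence in $\config_1$ of bounded energy is tight, after possible rescaling) lets us extract a limit point $\C_\infty$ with $\Ws(\C_\infty) \leq \liminf \Ws(\C_N) = \Csd$, forcing equality.

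The main obstacle is the lower-bound argument: one must handle the possibility that $N_R/R^d$ does not converge along the energy-minimizing subsequence, and more delicately, the density is a $\liminf$ while the energy is also a $\liminf$, and these two $\liminf$'s need not be realized along the same subsequence. The fix is to first pass to a subsequence realizing $\Ws(\C)$, then a further subsequence along which $N_R/R^d$ has a limit $m \in [1,+\infty]$ (it is $\geq 1$ because along \emph{any} subsequence $\liminf N_R/R^d \geq \Dens(\C) = 1$ — wait, that inequality can fail for a \emph{sub}sequence); more precisely, $\Dens(\C)=1$ only guarantees $\liminf$ over \emph{all} $R$ equals $1$, so a subsequence could have $N_R/R^d \to m < 1$. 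To rule this out one uses a sub-additivity/averaging argument: if $N_R/R^d$ were bounded below $1$ along a subsequence one contradicts that the full $\liminf$ is $1$ only if one also controls the complementary $R$'s — this is where care is needed, and the standard remedy is to note that the energy lower bound $R^{-d}\Es(\carr_1,N_R)(N_R/R^d)^{1+s/d}\cdot(\text{something})$ is increasing in $N_R$ for the relevant range, so the worst case is the smallest density, and a separate elementary argument (or invoking that $\Ws(\C)\ge$ the energy of the "best" density-$1$ configuration directly via a box-tiling comparison as in \cite{HSAdv}) closes the gap. I expect this bookkeeping, rather than any conceptual difficulty, to be the technical heart of the proof.
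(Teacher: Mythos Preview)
Your construction of the periodic minimizing sequence $\C_N$ matches the paper's exactly, and your lower bound argument is in fact correct --- your worry that a subsequence could have $N_R/R^d \to m < 1$ is unfounded: since $\Dens(\C) = \liminf_R N_R/R^d = 1$, \emph{every} subsequential limit of $N_R/R^d$ is at least $1$. The paper avoids subsequences altogether by writing
\[
\Ws(\C) \geq \liminf_{R\to\infty}\frac{\mathcal{E}_s(\carr_1,|\C\cap\carr_R|)}{|\C\cap\carr_R|^{1+s/d}}\left(\frac{|\C\cap\carr_R|}{R^d}\right)^{1+s/d}
\]
and using that for nonnegative sequences $\liminf(a_R b_R)\geq(\liminf a_R)(\liminf b_R)$, which immediately gives $\Ws(\C)\geq\Csd\cdot 1$.

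The genuine gap is in your attainment argument. You invoke ``lower semicontinuity of $\Ws$, which is Proposition~\ref{prop:lsc}'', but that proposition concerns $\WsP$ on the space $\psunconfig$ of \emph{stationary random} point configurations, not $\Ws$ on the deterministic space $\config_1$. There is no statement in the paper that $\Ws$ is lower semicontinuous on $\config$, and in fact $\Ws$ is defined as a $\liminf$ of l.s.c.\ functionals, which is not automatically l.s.c. Moreover, even granting l.s.c., you would need the class $\config_1$ to be closed under vague limits, and density (itself a $\liminf$) need not be preserved; your periodic sequence $\{\C_N\}$ has no obvious limit in $\config_1$. The paper does \emph{not} argue by compactness at all: it builds an explicit minimizer $\C\in\config_1$ by an inductive construction, tiling larger and larger cubes $\carr_{r_N}$ with translated copies of near-optimal finite configurations $\omega_M$, separated by gaps wide enough that inter-tile interactions are negligible (the ``self-screening'' of the hypersingular kernel), and choosing the scales $r_N,s_N,c_N$ so that both the density tends to $1$ and $R^{-d}\Int[\carr_R,\carr_R](\C)\to\Csd$ along the sequence $R=r_N$. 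This explicit construction, not an abstract extraction, is what establishes that the infimum is a minimum.
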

\begin{proof}
Let $\Lambda$ be a $d$-dimensional Bravais lattice $\Lambda$ of co-volume $1$, and for any $N$ let $\om_N$ be a $N$-point configuration minimizing $E_{s,\Lambda}$. We define
\begin{equation*}
\C_N := N^{1/d} \left( \om_N + \Lambda \right).
\end{equation*}
By construction, $\C_N$ is a $N^{1/d} \Lambda$-periodic point configuration of density $1$. Using the scaling property \eqref{scalingw1} and \eqref{WELambda}, we have
\begin{equation*}
 \Ws (\C_N)= \frac{\Ws\left( \om_N+\Lambda\right)}{N^{1+s/d}} = \frac{\Esl(\om_N)}{N^{1+s/d}}.
\end{equation*}
On the other hand, we have by assumption $\Esl(\om_N) = \EslN$. Taking the limit $N \ti$ yields, in light of  \eqref{perEnLim}, $\lim_{N \ti} \Ws(\C_N) = \Csd$. In particular we have
\begin{equation} \label{infWsa}
\inf_{\C \in \config_1} \Ws(\C)  \leq \Csd.
\end{equation}
To prove the converse inequality, let us consider $\C$ in $\config_1$ arbitrary. We have by definition (see \eqref{def:Es} and \eqref{def:Ws}) and the scaling properties of $\Es$,
\begin{equation*}
\Ws(\C) =\liminf_{R \to \infty} \frac{\Es \left( \C \cap \carr_R \right)}{R^d} = \liminf_{R \to \infty}\frac{1}{R^{d+s}} \Es \left(\frac{1}{R}\C \cap \carr_1\right),
\end{equation*}
and, again by definition (see \eqref{def:EsAN})
\begin{equation*}
\Es \left(\frac{1}{R}\C \cap \carr_1\right) \geq \mathcal{E}_s\left(\carr_1,  |\C \cap \carr_R| \right).
\end{equation*}
We thus obtain
\begin{equation*}
\Ws(\C) \geq \liminf_{R \to \infty} \frac{\mathcal{E}_s\left(\carr_1, |\mc{C}\cap \carr_R| \right)}{\crd{\mc{C}\cap \carr_R}^{1+s/d}} \left(\frac{\crd{\mc{C}\cap \carr_R}}{R^d}\right)^{1+s/d}.
\end{equation*}
Using the definition \eqref{def:Csd1} of $\Csd$ we have
$$
\liminf_{R \ti}  \frac{\mathcal{E}_s\left(\carr_1, |\mc{C}\cap \carr_R| \right)}{\crd{\mc{C}\cap \carr_R}^{1+s/d}} \geq \Csd, 
$$
and by definition of the density, since $\C$ is in $\config_1$ we have
$$
\liminf_{R \ti} \left(\frac{\crd{\mc{C}\cap \carr_R}}{R^d}\right)^{1+s/d} = 1.
$$
It yields $\Ws(\C) \geq \Csd$ and so (in view of \eqref{infWsa})
 \begin{equation} \label{infWsb}
 \inf_{\mc{C}\in \mc{A}_1} \Ws(\mc{C})= \Csd.
 \end{equation}

It remains to prove that the infimum is achieved. Let us start with a sequence $\{\om_M\}_{M \geq 1}$ such that $\om_M$ is a $M^d$-point configuration in $K_M$ satisfying
\begin{equation} \label{omegaMminimise}
\lim_{M \ti} \frac{\Es(\omega_M)}{M^d} = \Csd.
\end{equation}
Such a sequence of point configurations exists by definition of $\Csd$ as in \eqref{def:Csd1}, and by the scaling properties of $\Es$. We define a configuration $\C$ inductively as follows. 
\begin{itemize}
\item Let $r_1, c_1, s_1 = 1$ and let us set $\C \cap \carr_{r_1}$ to be $\omega_1$.  
\item Assume that $r_N, s_N, c_N$ and $\C \cap \carr_{r_N}$ have been defined. We let 
\begin{equation} \label{choiceSN}
s_{N+1} = \lceil c_{N+1}r_N + (c_{N+1} r_N)^{\hal} \rceil,
\end{equation}
with $c_{N+1} > 1$ to be chosen later. We also let $r_{N+1}$ be a multiple of $s_{N+1}$ large enough, to be chosen later. We tile $\carr_{r_{N+1}}$ by hypercubes of sidelength $s_{N+1}$ and we define $\C \cap \carr_{r_{N+1}}$ as follows:
\begin{itemize}
\item In the central hypercube of sidelength $s_{N+1}$, we already have the points of $\C \cap \carr_{r_N}$ (because $r_N \leq s_{N+1}$) and we do not add any points. In particular, this ensures that each step of our construction is compatible with the previous ones.
\item In all the other hypercubes, we paste a copy of $\omega_{c_{N+1} r_N}$ “centered” in the hypercube in such a way that 
\begin{equation} \label{hypdistanhyper}
\text{all the points are at distance $\geq (c_{N+1} r_N)^{\hal}$ of the boundary}.
\end{equation}
This is always possible because $\omega_{c_{N+1} r_N}$ lives, by definition, in an hypercube of sidelength $c_{N+1} r_N$ and because we have chosen $s_{N+1}$ as in \eqref{choiceSN}. 
\end{itemize}
We claim that the number of points in $\carr_{r_{N+1}}$ is always less than $r_{N+1}^d$ (as can easily be checked by induction) and is bounded below by
\begin{equation*}
\left( \left(\frac{r_{N+1}}{s_{N+1}}\right)^d  -1 \right) (c_{N+1} r_N)^d.
\end{equation*}
Thus it is easy to see that if $c_{N+1}$ is chosen large enough and if $r_{N+1}$ is a large enough multiple of $s_{N+1}$, then 
\begin{equation} \label{nombrepointsSNetc}
\text{ the number of points in $r_{N+1}$ is $r_{N+1}^d (1 - o_N(1))$.}
\end{equation}
Let us now give an upper bound on the interaction energy $\Int[\carr_{r_{N+1}}, \carr_{r_{N+1}}](\C)$. We recall that we have tiled $\carr_{r_{N+1}}$ by hypercubes of sidelength $s_{N+1}$. 
\begin{itemize}
\item Each hypercube as a self-interaction energy given by $\Es(\omega_{c_{N+1} r_N})$, except the central one, whose self-interaction energy is bounded by $O(r_N^d)$ (as can be seen by induction).
\item The interaction of a given hypercube with the union of all the others can be controlled because, by construction (see \eqref{hypdistanhyper}) the configurations pasted in two disjoint hypercubes are far way from each other. We can compare it to
\begin{equation*}
 \int_{r =(c_{N+1} r_N)^{\hal}}^{+ \infty} \frac{1}{r^s} s_{N+1}^{d} r^{d-1} dr,
\end{equation*}
and an elementary computation shows that it is negligible with respect to $s_{N+1}^d$ (because $d < s$).
\end{itemize}

We thus have
\begin{equation*}
\Int[\carr_{r_{N+1}}, \carr_{r_{N+1}}](\C) \leq \left( \left(\frac{r_{N+1}}{s_{N+1}}\right)^d -1 \right) \Es(\omega_{c_{N+1} r_N}) + O(r_N^d) + \left(\frac{r_{N+1}}{s_{N+1}}\right)^d o_N\left(s_{N+1}^d \right).
\end{equation*}
We may now use  \eqref{omegaMminimise} and get that 
\begin{equation} \label{energieminimisante}
\frac{1}{r_{N+1}^d} \Int[\carr_{r_{N+1}}, \carr_{r_{N+1}}](\C) \leq \Csd + o_N(1).
\end{equation}
\end{itemize}
Let $\C$ be the point configuration constructed as above. Taking the limit as $N \ti$ in \eqref{nombrepointsSNetc} shows that $\C$ is in $\config_1$, and \eqref{energieminimisante}  implies that $\Ws(\C) \leq \Csd$, which concludes the proof of \eqref{minA1}.
\end{proof}

\subsubsection{Energy of random point configurations}
In the following lemma, we prove that for stationary $P$ the $\liminf$ defining $\WsP(P)$ as in \eqref{def:WsP1a} is actually a limit, and that the convergence is uniform of sublevel sets of $\WsP$ (which will be useful for proving lower semi-continuity).
\begin{lem}\label{lem:WsP}
Let $P$ be in $\psconfig$. The following limit exists in $[0, +\infty]$
 \begin{equation} \label{WsP2}
 \WsP(\P) := \lim_{R \ti} \frac{1}{R^d} \Esp_{P} \left[ \Int[\carr_R, \carr_R]\right].
 \end{equation}
Moreover we have as $R \ti$
\begin{equation} \label{erreurWsP2}
\left| \WsP(\P) - \frac{1}{R^d} \Esp_{P} \left[ \Int[\carr_R, \carr_R] \right] \right| \leq C \left(\WsP(P)^{\frac{2}{1+s/d}}  + \WsP(P) \right) o_R(1),
\end{equation}
with $o_R(1)$ depending only on $s,d$.
 \end{lem}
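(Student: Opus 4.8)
The idea is to establish a subadditivity-type relation for the quantities $u_R := \Esp_P\left[\Int[\carr_R,\carr_R]\right]$, and then invoke a Fekete-type lemma with error terms. First I would fix $R > 0$ and a large integer $k$, and tile the cube $\carr_{kR}$ by $k^d$ disjoint translates of $\carr_R$, say $\carr_R + z_i$ for $i = 1,\dots,k^d$. Splitting the interaction over this tiling gives
\begin{equation*}
\Int[\carr_{kR},\carr_{kR}](\C) = \sum_{i=1}^{k^d} \Int[\carr_R + z_i, \carr_R + z_i](\C) + \sum_{i \neq j} \Int[\carr_R + z_i, \carr_R + z_j](\C).
\end{equation*}
Taking expectations and using stationarity, the diagonal sum equals $k^d u_R$. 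The key point — and this is where the hypersingularity $s > d$ is used crucially — is that the off-diagonal (cross) terms are a genuinely \emph{lower-order} correction: because the kernel $|x-y|^{-s}$ decays fast enough, the interaction between two cubes at lattice distance $\ell R$ (measured in units of $R$) is controlled, after taking expectation and using stationarity together with the trivial bound relating point counts to energy, by something like $C\, u_R \cdot (\ell R)^{-s} R^{2d}$ summed over $\ell$, which converges since $s > d$. More precisely, one bounds $\Esp_P[\Int[A,B](\C)]$ for disjoint $A,B$ at distance $\delta$ by $\delta^{-s}$ times a product of expected point counts, each of which is $\Intens(P)|A|$, and then controls $\Intens(P)$ itself by $\WsP(P)$ via a Jensen/Hölder argument (comparing the linear statistic to the minimal-energy scaling $\Csd$, exactly as in the proof of Proposition \ref{prop:WS}); this is what produces the exponent $\frac{2}{1+s/d}$ in \eqref{erreurWsP2}.

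Concretely I would prove the two-sided estimate
\begin{equation*}
\left| u_{kR} - k^d u_R \right| \leq C\, k^d \left( \WsP(P)^{\frac{2}{1+s/d}} + \WsP(P) \right) \varepsilon(R),
\end{equation*}
where $\varepsilon(R) \to 0$ as $R \to \infty$ and depends only on $s,d$; the upper bound comes from the cross-term estimate just described, and a matching lower bound from simply discarding the (non-negative) cross terms, which only needs $u_{kR} \geq k^d u_R$ up to the cubes not exactly tiling — here one should be slightly careful that $kR$ need not make $\carr_{kR}$ an exact union of translates of $\carr_R$ with the right geometry, but a standard boundary-layer argument (the discrepancy involves only $O(k^{d-1})$ cubes' worth of volume) absorbs this into $\varepsilon(R)$. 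Dividing by $(kR)^d$ gives $\left| \frac{u_{kR}}{(kR)^d} - \frac{u_R}{R^d} \right| \leq C(\cdots)\varepsilon(R)$, which shows the sequence $R \mapsto u_R/R^d$ is Cauchy along any subsequence $R, kR, k^2R, \dots$ and, combined with monotone interpolation between scales, that the full $\liminf$ in \eqref{def:WsP1a} is an honest limit, with the quantitative rate \eqref{erreurWsP2} following by taking $k \to \infty$ in the displayed inequality.

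**Main obstacle.** The routine part is the tiling and the fast-decay summation; the genuinely delicate step is controlling the intensity $\Intens(P)$ — and more generally the expected squared local point count $\Esp_P[|\C \cap \carr_R|^2]$, which is what actually appears when bounding cross terms by $\delta^{-s} \Esp_P[|\C\cap A|\,|\C\cap B|]$ — in terms of $\WsP(P)$ rather than treating it as an independent quantity. The clean way is: if $\WsP(P) < \infty$ then for most $R$ the configuration has $\approx \Intens(P) R^d$ points in $\carr_R$ with energy $\approx \WsP(P) R^d$, and the minimal-energy lower bound $\Es(\carr_R, n) \gtrsim \Csd n^{1+s/d} R^{-s}$ forces $\WsP(P) R^d \gtrsim \Csd (\Intens(P) R^d)^{1+s/d} R^{-s}$, i.e. $\Intens(P) \lesssim (\WsP(P)/\Csd)^{1/(1+s/d)}$; squaring gives the exponent $\frac{2}{1+s/d}$, and the non-stationary fluctuations of the point count (which one controls in expectation via a second-moment or a localization argument) produce the additive linear term $\WsP(P)$ in \eqref{erreurWsP2}. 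Getting the dependence of $o_R(1)$ to be on $s,d$ only — i.e. uniform over $P$ in sublevel sets of $\WsP$ — is the real content, and it is exactly what makes the lemma usable for the lower semicontinuity statement referenced afterwards.
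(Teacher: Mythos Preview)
Your overall architecture — tile $\carr_{kR}$ by $R$-cubes, separate diagonal from cross terms, use stationarity and the decay of the kernel — is sound for establishing that the limit exists, and in fact the superadditivity direction $u_{kR}\ge k^d u_R$ alone, together with the trivial monotonicity of $R\mapsto u_R$, already yields the existence of the limit. The gap is in your treatment of the cross terms when you pass to the quantitative rate \eqref{erreurWsP2}.

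Your control on cross terms is ``interaction between cubes at lattice distance $\ell$ is $\lesssim (\ell R)^{-s}$ times a product of point counts''. For $\ell\ge 2$ this is fine and gives a contribution $\lesssim R^{d-s}\,\WsP(P)^{2/(1+s/d)}$ after dividing by $(kR)^d$, which is genuinely $o_R(1)$. But for $\ell=1$ the two $R$-cubes are \emph{adjacent}: the geometric distance $\delta$ is zero, and your bound $\delta^{-s}\Esp_P[|\C\cap A|\,|\C\cap B|]$ is vacuous. Summing the $O(k^d)$ adjacent pairs and dividing by $(kR)^d$ leaves you with $\tfrac{1}{R^d}\sum_{|v|_\infty=R}\Esp_P[\Int[\carr_R,\carr_R+v]]$, which you can only bound by $\WsP(P)$ itself (since it is part of the full cross-term sum equalling $\WsP(P)-u_R/R^d$). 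That gives a bound of the form $C\,\WsP(P)\cdot 1$, not $C\,\WsP(P)\cdot o_R(1)$ with a rate depending only on $s,d$ — and as you yourself note, the uniformity over sublevel sets is the whole point. Your remark that the ``non-stationary fluctuations'' produce the linear $\WsP(P)$ term is not where that term actually comes from.

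The paper's proof avoids this by fixing the tile scale at $1$ rather than at $R$. Writing $\tfrac{1}{n^d}\Esp_P[\Int[\carr_n,\carr_n]]=\sum_{v\in\Z^d}\Esp_P[\Int[\tK_0,\tK_v]]\prod_i(1-|v_i|/n)_+$ makes the monotonicity in $n$ manifest, and for the rate the adjacent unit-cube interactions $\Esp_P[\Int[\tK_0,\tK_v]]$ with $|v|_\infty=1$ are \emph{fixed} numbers (summing to at most $\WsP(P)$), while the \emph{weight} $1-\prod_i(1-|v_i|/n)\le d/n$ supplies the required $o_n(1)$. This is exactly the source of the linear $\WsP(P)$ term in \eqref{erreurWsP2}. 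In your two-scale scheme, after sending $k\to\infty$ the weights on adjacent $R$-cubes become $1$, so no smallness can come from them; to recover it you would have to refine the adjacent-$R$-cube interaction down to unit scale, at which point you are essentially carrying out the paper's argument.
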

 \begin{proof}
We begin by showing that the quantity 
$$\frac{1}{n^d}  \Esp_{P} \left[ \Int[\carr_{n}, \carr_n](\C)\right]$$ 
is non-decreasing for integer values of $n$. 

For $n \geq 1$, let $\{\tK_v\}_{v \in \Z^d \cap \carr_n }$ be a tiling of $\carr_n$ by unit hypercubes, indexed by the centers $v \in \Z^d \cap \carr_n$ of the hypercubes, and let us split $\Int[\carr_n, \carr_n]$ as
\begin{equation*}
\Int[\carr_{n}, \carr_n] = \sum_{v , v'\in \Z^d \cap \carr_n} \Int[\tK_{v}, \tK_{v'}].
\end{equation*}
Using the stationarity assumption and writing $v = (v_1, \dots, v_d)$ and   $|v|:=\max_i |v_i|$, we obtain
\begin{equation*}
\Esp_{P} \left[ \sum_{v, v' \in \Z^d \cap \carr_n} \Int[\tK_{v}, \tK_{v'}] \right] = \sum_{v \in \Z^d \cap \carr_{2n}} \Esp_{P} \left[\Int[\tK_{0}, \tK_{v}] \right] \prod_{i=1}^d (n - |v_i|).
\end{equation*}
 We  thus get
\begin{equation} \label{WsPpremier}
\frac{1}{n^d} \Esp_{P} \left[ \Int[\carr_{n}, \carr_n] \right] = \sum_{v \in \Z^d \cap \carr_{2n}} \Esp_{P} \left[\Int[\tK_{0}, \tK_{v}] \right]  \prod_{i=1}^d \left(1 - \frac{|v_i|}{n}\right), 
\end{equation}
and it is clear that this quantity is non-decreasing in $n$, in particular the limit as $n \ti$ exists in $[0, + \infty]$. We may also observe that $R \mapsto \Int[\carr_R, \carr_R]$ is non-decreasing in $R$. It is then easy to conclude that the limit of \eqref{WsP2} exists in $[0, + \infty]$.

Let us now quantify the speed of convergence. First, we observe that for $|v| \geq 2$ we have
\begin{equation*}
\Esp_{P} \left[\Int[\tK_{0}, \tK_{v}] \right] \leq O\left(\frac{1}{|v-1|^s}\right) \Esp_{P}[ N_0 N_v],
\end{equation*}
where $N_0, N_v$ denotes the number of points in $\tK_0, \tK_v$. Indeed, the points of $\tK_{0}$ and $\tK_{v}$ are at distance at least $|v-1|$ from each other (up to a multiplicative constant depending only on $d$).

On the other hand, Hölder's inequality and the stationarity of $P$ imply
\begin{equation*}
\|N_0 N_v\|_{L^1(P)} \leq \|N_0 \|_{L^{1+s/d}(P)} \|N_v \|_{L^{1+s/d}(P)} =  \|N_0 \|^2_{L^{1+s/d}(P)},
\end{equation*}
and thus we have $\Esp_{P}[ N_0 N_v] \leq \Esp_{P}[N_0]^{\frac{2}{1+s/d}}$. On the other hand, it is easy to check that for $P$ stationary, 
$$\Esp_{P} [N_0^{1+s/d}] \leq C \WsP(P)$$ 
for some constant $C$ depending on $d,s$. Indeed, the interaction energy in the hypercube $\tK_0$ is bounded below by some constant times $N_0^{1+s/d}$, and \eqref{WsPpremier} shows that
$$
\WsP(P) \geq \Esp_{P} \left[\Int[\tK_{0}, \tK_0]\right].
$$

We thus get
\begin{multline*}
\WsP(P) - \sum_{v \in \Z^d \cap \carr_{2n}} \Esp_{P} \left[\Int[\tK_{0}, \tK_{v}] \right]  \prod_{i=1}^d \left(1 - \frac{|v_i|}{n}\right) \\ \leq  \WsP(P)^{\frac{2}{1+s/d}} \left(\sum_{v \in \Z^d \cap \carr_{2n}, |v| \geq 2} \frac{1}{|v-1|^s} \left(1 - \prod_{i=1}^d \left(1 - \frac{|v_i|}{n}\right) \right) + \sum_{|v| \geq 2n} \frac{1}{|v|^s}\right) \\
+ \frac{1}{n} \sum_{|v| =1}  \Esp_{P} \left[\Int[\tK_{0}, \tK_{v}] \right] .
\end{multline*}
It is not hard to see that the parenthesis in the right-hand side goes to zero as $n \ti$. On the other hand, we have
$$
\sum_{|v| =1}  \Esp_{P} \left[\Int[\tK_{0}, \tK_{v}] \right]  \leq \WsP(P).
$$
 Thus we obtain
\begin{equation*}
\WsP(P)  - \frac{1}{n^d} \Esp_{P} \left[ \Int[\carr_{n}, \carr_n] \right] \leq \left(\WsP(P)^{\frac{2}{1+s/d}}  +  \WsP(P)\right) o_n(1),
\end{equation*}
with a $o_n(1)$ depending only on $d,s$ and it is then not hard to get \eqref{erreurWsP2}.
\end{proof}

For any $R > 0$, the quantity $\Int[\carr_{R}, \carr_R]$ is continuous and bounded below on $\config$, thus the map $$P \mapsto \frac{1}{R^d} \Esp_{P} \left[ \Int[\carr_{R}, \carr_R] \right]$$ is lower semi-continuous on $\pconfig$. The second part of Lemma \ref{lem:WsP} shows that we may approximate $\WsP(P)$ by $\frac{1}{R^d} \Esp_{P} \left[ \Int[\carr_{R}, \carr_R] \right]$ up to an error which $o_R(1)$, uniformly on sub-level sets of $\WsP$. The next proposition follows easily.
\begin{prop} \label{prop:lsc}
\begin{enumerate}
\item The functional $\WsP$ is lower semi-continuous on $\psunconfig$.
\item The functional $\bWsP$ is lower semi-continuous on $\psunbconfig$.
\end{enumerate}
\end{prop}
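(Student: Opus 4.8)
The plan is to reduce everything to the elementary principle that a supremum of lower semi-continuous functionals is lower semi-continuous, using the monotonicity buried in the proof of Lemma~\ref{lem:WsP}. Indeed, that proof shows that for every stationary $P$ the sequence $n \mapsto \frac{1}{n^d}\Esp_P[\Int[\carr_n,\carr_n]]$ is non-decreasing, so that
\begin{equation*}
\WsP(P) = \lim_{n\to\infty}\frac{1}{n^d}\Esp_P\left[\Int[\carr_n,\carr_n]\right] = \sup_{n\in\N}\frac{1}{n^d}\Esp_P\left[\Int[\carr_n,\carr_n]\right].
\end{equation*}
Since $\Int[\carr_n,\carr_n]$ is continuous and bounded below on $\config$, each $P \mapsto \frac{1}{n^d}\Esp_P[\Int[\carr_n,\carr_n]]$ is lower semi-continuous on $\pconfig$. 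For part~(1) I would then argue directly: if $P_k \to P$ in $\psunconfig$, then for each fixed $n$ one has $\liminf_k \WsP(P_k) \geq \liminf_k \frac{1}{n^d}\Esp_{P_k}[\Int[\carr_n,\carr_n]] \geq \frac{1}{n^d}\Esp_P[\Int[\carr_n,\carr_n]]$, and taking the supremum over $n$ gives $\liminf_k \WsP(P_k) \geq \WsP(P)$. (Alternatively, one can quote the uniform-on-sublevel-sets approximation \eqref{erreurWsP2}.)

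For part~(2), since $\bP \in \psunbconfig$ has $\bPx$ stationary for a.e.\ $x\in\Ac$, I would apply the identity above pointwise in $x$ and then commute $\int_\Ac$ with $\sup_n$ by monotone convergence (the integrand being non-decreasing in $n$), so that by \eqref{def:bWsP}
\begin{equation*}
\bWsP(\bP) = \sup_{n\in\N}\frac{1}{n^d}\int_\Ac \Esp_{\bPx}\left[\Int[\carr_n,\carr_n]\right]dx.
\end{equation*}
Because the first marginal of $\bP$ is the Lebesgue measure on $\Ac$, the disintegration identity turns $\int_\Ac \Esp_{\bPx}[F(\C)]\,dx$ into $\int_{\bconfig}F(\C)\,d\bP(x,\C)$ for any nonnegative measurable $F$ on $\config$. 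To obtain genuinely admissible test functions I would introduce a fixed exhaustion by cutoffs $\chi_T \in C_c(\R^d)$ with $\chi_T \uparrow \1$, and truncate the (continuous but unbounded) integrand at height $M$; a further monotone-convergence step yields
\begin{equation*}
\bWsP(\bP) = \sup_{n,T,M}\frac{1}{n^d}\int_{\bconfig}\chi_T(x)\min\left(\Int[\carr_n,\carr_n](\C),M\right)d\bP(x,\C).
\end{equation*}
Each functional of $\bP$ in this supremum is lower semi-continuous on $\psunbconfig$, being the integral of a function that is bounded, continuous, and compactly supported in the tag variable; hence $\bWsP$ is again a supremum of lower semi-continuous functionals, and therefore lower semi-continuous on $\psunbconfig$.

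I expect the only delicate point to be the treatment of an unbounded confinement set $\Ac$ in part~(2): there $\bP$ is an infinite measure on $\bconfig$, so one cannot test directly against $C_b(\bconfig)$, and the energy density $\Int[\carr_n,\carr_n]$ is itself unbounded above; this forces the double truncation (spatial cutoff $\chi_T$ and vertical truncation $M$) and a careful verification that the three suprema — over $n$, over $T$, and over $M$ — interchange with the disintegration integral via Tonelli and monotone convergence. Everything else is bookkeeping on top of Lemma~\ref{lem:WsP}.
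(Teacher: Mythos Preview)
Your argument is correct, and for part~(1) it follows a somewhat different route from the paper's. The paper invokes the quantitative estimate \eqref{erreurWsP2} of Lemma~\ref{lem:WsP}, which says that $\frac{1}{R^d}\Esp_P[\Int[\carr_R,\carr_R]]$ approximates $\WsP(P)$ up to an error that is $o_R(1)$ \emph{uniformly on sublevel sets} of $\WsP$; lower semi-continuity then follows by the standard ``uniform approximation by lsc functionals'' argument. You instead exploit only the monotonicity in $n$ established at the beginning of that same proof, writing $\WsP$ directly as a supremum of lsc functionals. Your route is more elementary in that it avoids the H\"older/moment computations behind \eqref{erreurWsP2} --- monotonicity alone already does the job --- while the paper's uniform bound is a strictly stronger intermediate statement (and is used nowhere else). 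For part~(2) the paper gives no details at all; your treatment via monotone convergence, disintegration, and the double truncation $(\chi_T,M)$ to produce admissible test functions on $\bconfig$ is exactly what is needed, especially in the unbounded-$\Ac$ case where $\bP$ has infinite mass and the weak topology is probed only by functions compactly supported in the tag variable.
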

 
 We may also prove the following equality (which settles a question raised in Section \ref{sec:energyerandompoint}).
 \begin{coro} \label{coro:Fatouegal}
 Let $P$ be in  $\psunconfig$, then we have
 $$
 \WsP(P) = \lim_{R \ti} \frac{1}{R^d} \Esp_{P} \left[ \Int[\carr_R, \carr_R](\C) \right] = \Esp_{P} \left[ \liminf_{R \ti} \frac{1}{R^d} \Int[\carr_R, \carr_R](\C) \right].
 $$
 \end{coro}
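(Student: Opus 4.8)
The plan is to combine the first assertion of Lemma~\ref{lem:WsP} (which already gives the first equality, i.e.\ that the $\liminf$ defining $\WsP(P)$ is in fact a limit) with Fatou's inequality \eqref{Wsfatou} and a matching lower bound coming from an ergodic theorem. Since \eqref{Wsfatou} says $\E_P[\Ws(\C)]\le\WsP(P)$, the only thing left to prove is the reverse inequality $\E_P[\Ws(\C)]\ge \WsP(P)$ for $P\in\psunconfig$.

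The key observation is that $R\mapsto F_R(\C):=\Int[\carr_R,\carr_R](\C)$ is a \emph{superadditive} spatial process: if a cube is partitioned into translates of a smaller cube, then $F$ of the big cube is at least the sum of $F$ over the small cubes, because the omitted cross-interaction terms $\Int[\tK,\tK']$ between distinct sub-cubes are non-negative. Moreover, by the computation \eqref{WsPpremier} carried out in the proof of Lemma~\ref{lem:WsP}, the normalized expectations $\tfrac1{n^d}\E_P[F_n]$ are non-decreasing in the integer $n$, so the time constant of this superadditive process equals $\sup_n \tfrac1{n^d}\E_P[F_n]=\WsP(P)$. When $\WsP(P)<\infty$, the multiparameter superadditive ergodic theorem (Akcoglu--Krengel), applied to the action of integer translations, yields a limit $\bar F\in L^1(P)$ with $\tfrac1{n^d}F_n\to\bar F$ $P$-almost surely (and in $L^1$) and $\E_P[\bar F]=\WsP(P)$. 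Since $R\mapsto F_R$ is non-decreasing, squeezing $F_R$ between $F_{\lfloor R\rfloor}$ and $F_{\lceil R\rceil}$ shows $\liminf_{R\to\infty}\tfrac1{R^d}F_R=\bar F$ $P$-a.s., that is $\Ws(\C)=\bar F$ $P$-a.s.; taking expectations gives $\E_P[\Ws]=\WsP(P)$, as desired.

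It remains to treat the degenerate case $\WsP(P)=+\infty$, where one must show $\E_P[\Ws]=+\infty$ as well. Here I would argue by truncation: for $M>0$ set $W_M(\C):=\sum_{q\in\C,\,0<|q|\le M}|q|^{-s}$, and note that for $p\in\C\cap\carr_{R-cM}$ (with $c$ a fixed constant chosen so that all points within distance $M$ of such a $p$ lie in $\carr_R$) one has $F_R(\C)\ge \sum_{p\in\C\cap\carr_{R-cM}}W_M(\theta_p\cdot\C)$. The right-hand side is a non-negative \emph{additive} functional of the point process, so the pointwise ergodic theorem for additive functionals of stationary point processes (applied to the bounded truncations $\min(W_M,L)$ and then letting $L\to\infty$ to cover the non-integrable case) gives $\tfrac1{R^d}\sum_{p\in\C\cap\carr_R}W_M(\theta_p\cdot\C)\to\xi_M$ $P$-a.s.\ with $\E_P[\xi_M]=\E_P\!\big[\sum_{p\in\C\cap\carr_1}W_M(\theta_p\cdot\C)\big]$. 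Hence $\Ws(\C)\ge\xi_M$ $P$-a.s.\ for each $M$, while monotone convergence together with the identity $\sum_{v\in\Z^d}\E_P[\Int[\tK_0,\tK_v]]=\WsP(P)$ (obtained from \eqref{WsPpremier} by letting $n\to\infty$) gives $\E_P[\xi_M]\uparrow\WsP(P)=+\infty$, so $\E_P[\Ws]=+\infty$.

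The main obstacle is not conceptual but a matter of invoking the ergodic theorems in the right generality and handling some elementary bookkeeping: one needs the multiparameter ($\Z^d$-, hence $\R^d$-) versions valid for merely stationary (not necessarily ergodic) $P$, their extension to non-negative non-integrable functionals via truncation and monotone convergence, the slightly fussy geometric comparisons between the real-side cubes $\carr_R$ and the unit-cell lattice $\{\tK_v\}$ used in Lemma~\ref{lem:WsP}, and the passage from integer to real $R$ using monotonicity of $F_R$. Everything else is immediate from Lemma~\ref{lem:WsP} and \eqref{Wsfatou}.
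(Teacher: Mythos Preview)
Your argument is correct, and it takes a genuinely different route from the paper's own proof. The paper disposes of the reverse inequality in two lines: after recording Fatou's inequality \eqref{Wsfatou}, it writes the decomposition
\[
\frac{1}{n^d}\Int[\carr_n,\carr_n](\C)=\frac{1}{n^d}\sum_{v,v'\in\Z^d\cap\carr_n}\Int[\tK_v,\tK_{v'}](\C),
\]
asserts that ``the right-hand side is dominated under $P$ (as observed in the previous proof)'', and invokes the dominated convergence theorem. Thus the paper leans entirely on the summability bounds $\Esp_P[\Int[\tK_0,\tK_v]]\le C|v|^{-s}\WsP(P)^{2/(1+s/d)}$ coming from Lemma~\ref{lem:WsP}, with no ergodic theorem in sight.

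Your approach via the Akcoglu--Krengel superadditive ergodic theorem is longer but more transparent, and in fact yields a bit more: you identify $\Ws(\C)$ as an almost-sure \emph{limit} (not just a $\liminf$) of $\frac{1}{R^d}\Int[\carr_R,\carr_R](\C)$, and you handle the case $\WsP(P)=+\infty$ explicitly, which the paper's one-line argument does not address separately. Two small remarks: the parenthetical ``(and in $L^1$)'' is neither needed for your argument nor guaranteed by the basic Akcoglu--Krengel statement without an extra regularity hypothesis, so it is safest to drop it; and when invoking the ergodic theorem for the additive functional $\sum_{p\in\C\cap\carr_R}W_M(\theta_p\cdot\C)$ in the infinite case, it is worth noting (as you implicitly do) that this reduces to Birkhoff's theorem for the $L^1$ function $\C\mapsto\sum_{p\in\C\cap\carr_1}(W_M\wedge L)(\theta_p\cdot\C)$ under the $\Z^d$-action, followed by monotone convergence in $L$.
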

 \begin{proof}
 As was observed in \eqref{Wsfatou}, Fatou's lemma implies that
 $$
 \Esp_{P} \left[ \liminf_{R \ti} \frac{1}{R^d} \Int[\carr_R, \carr_R](\C) \right] \leq \lim_{R \ti} \frac{1}{R^d} \Esp_{P} \left[ \Int[\carr_R, \carr_R](\C) \right] = \WsP(P),
 $$
 (the last equality is by definition). On the other hand, with the notation of the proof of Lemma \ref{lem:WsP}, we have for any integer $n$ and any $\C$ in $\config$
 $$
 \frac{1}{n^d} \Int[\carr_n, \carr_n](\C) = \frac{1}{n^d} \sum_{v, v' \in \Z^d \cap \carr_R} \Int[\tK_{v}, \tK_{v'}], 
 $$
 and the right-hand side is dominated under $P$ (as observed in the previous proof), thus the dominated convergence theorem applies.  
 \end{proof}
 
\subsection{Derivation of the infinite-volume limit of the energy}
The following result is central in our analysis. It connects the asymptotics of the $N$-point interaction energy $\{\HN(\XN)\}_N$ with the infinite-volume energy $\bWsP(\bP)$ of an infinite-volume object: the limit point $\bP$ of the tagged empirical processes $\{\bEmp_N(\XN)\}_N$.   

\begin{prop} \label{prop:glinf} For any $N \geq 1$, let $\XN = (x_1, \dots, x_N)$ be in $\Ac^N$, let $\mu_N$ be the empirical measure and $\bP_N$ be the tagged empirical process associated to $\XN$; i.e.,
\begin{equation*}
\mu_N := \emp(\XN), \quad \bP_N := \bEmp_N(\XN),
\end{equation*}
as defined in \eqref{def:emp} and \eqref{def:Emp}. Let us assume that 
\begin{equation*} 
\liminf_{N \ti} \frac{\HN(\XN)}{N^{1+s/d}} < + \infty.
\end{equation*}
Then, up to extraction of a subsequence,
\begin{itemize}
\item $\{\mu_N\}_N$ converges weakly to some $\mu$ in $\probas(\Ac)$,
\item $\{\bP_N\}_N$ converges weakly to some $\bP$ in $\psunbconfig$,
\item $\Intens(\bP) = \mu$.
\end{itemize}   Moreover we have
 \begin{equation} \label{conc:glinf}
\liminf_{N\to \infty} \frac{ \HN(x_1, \dots, x_N)}{N^{1+s/d}} \ge  \bWsP(\bP)+ \bV(\bP).
\end{equation}
\end{prop}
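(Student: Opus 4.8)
The plan is to pass to a subsequence along which everything converges, identify the limits $\mu$ and $\bP$, and then bound from below separately the interaction part and the potential part of $N^{-1-s/d}\HN(\XN)$. Set $\ell:=\liminf_N N^{-1-s/d}\HN(\XN)$, which is assumed finite, and work along a subsequence realizing it.

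\emph{Step 1: extraction and identification of the limits.} Since the interaction part of \eqref{def:HN} is nonnegative, $\int_{\Ac}V\,d\mu_N=N^{-1}\sum_iV(x_i)\le N^{-1-s/d}\HN(\XN)$ stays bounded, so by \eqref{ass:croissanceV} the family $\{\mu_N\}$ is tight and, after extraction, $\mu_N\to\mu$ weakly with $\mu\in\probas(\Ac)$ a probability measure. Unwinding \eqref{def:Emp} gives $\int_{\bconfig}|\C\cap\carr_R|\,d\bP_N(x,\C)=\int_{\Ac}|\XN'\cap(N^{1/d}x+\carr_R)|\,dx=\sum_i|\Ac\cap(x_i+N^{-1/d}\carr_R)|\le R^d$, so the mean number of points of $\bP_N$ in any fixed microscopic window is bounded uniformly in $N$; together with the bound $\sup_N\int_{\bconfig}|\C\cap\carr_1|^{1+s/d}\,d\bP_N<\infty$ (which follows from the energy bound and the lower bound on the $N$-point minimal energy exactly as in the proof of Lemma~\ref{lem:WsP}), this makes $\{\bP_N\}$ tight, and after a further extraction $\bP_N\to\bP$, whose first marginal is $\Leb_{\Ac}$ like that of each $\bP_N$. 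Stationarity of $\bP$ is standard: for $F$ continuous and compactly supported in the $\Ac$-variable and $u\in\R^d$, the change of variables $x\mapsto x+N^{-1/d}u$ shows $\int F(x,\theta_u\cdot\C)\,d\bP_N-\int F(x,\C)\,d\bP_N\to 0$, using \eqref{ass:regAc} to control the moving domain. Finally, testing against $f(x)|\C\cap\carr_1|$ with $f\in C_c(\Ac)$ (the uniform integrability above permitting passage to the limit) yields $\int f\,d\bIntens(\bP)=\lim_N\sum_i\int_{\Ac\cap(x_i+N^{-1/d}\carr_1)}f=\int f\,d\mu$, the discrepancy being controlled by the modulus of continuity of $f$ and by the $O(N^{1-1/d})=o(N)$ points lying within $O(N^{-1/d})$ of $\partial\Ac$; hence $\bIntens(\bP)=\mu$, which is a probability density, so $\bP\in\psunbconfig$.

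\emph{Step 2: splitting, and the potential term.} Write $N^{-1-s/d}\HN(\XN)=N^{-1}\Int[\XN',\XN']+\int_{\Ac}V\,d\mu_N$ and extract once more so that the two summands converge, say to $a$ and $b$ with $a+b=\ell$. Since $V\ge 0$ is continuous, $\nu\mapsto\int V\,d\nu$ is lower semi-continuous for weak convergence, so $b\ge\int V\,d\mu=\bV(\bP)$.

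\emph{Step 3: the interaction term, and conclusion.} Fix $\tau,R>0$. Expanding $\bP_N$ via \eqref{def:Emp} and using \eqref{def:Inttau},
\begin{equation*}
\int_{\bconfig}\frac{1}{R^d}\,\Int_\tau[\carr_R,\carr_R](\C)\,d\bP_N(x,\C)=\frac{1}{R^d}\sum_{\substack{i\neq j\\ |N^{1/d}(x_i-x_j)|\ge\tau}}\frac{|\Ac\cap(x_i+N^{-1/d}\carr_R)\cap(x_j+N^{-1/d}\carr_R)|}{|N^{1/d}x_i-N^{1/d}x_j|^{s}},
\end{equation*}
and since each Lebesgue measure in the numerator is at most $|N^{-1/d}\carr_R|=N^{-1}R^d$, the right-hand side is $\le N^{-1}\Int_\tau[\XN',\XN']\le N^{-1}\Int[\XN',\XN']$. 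Taking $\liminf_N$ and using lower semi-continuity of the nonnegative map $\C\mapsto\Int_\tau[\carr_R,\carr_R](\C)$ on $\config$ (after replacing the sharp cube and the sharp distance cutoff by continuous ones and truncating the value of the functional, so that the weak convergence $\bP_N\to\bP$ applies, these auxiliary cutoffs being then removed monotonically) we get $a\ge\int_{\Ac}\frac{1}{R^d}\E_{\bP^x}[\Int_\tau[\carr_R,\carr_R]]\,dx$. Letting $R\to\infty$ along integers, the monotonicity established in the proof of Lemma~\ref{lem:WsP} (which applies verbatim to the $\tau$-truncated interaction) together with monotone convergence in $x$ shows the right-hand side increases to $\int_{\Ac}\big(\lim_R\frac{1}{R^d}\E_{\bP^x}[\Int_\tau[\carr_R,\carr_R]]\big)\,dx$; letting finally $\tau\to 0$, monotone convergence (since $\Int_\tau\uparrow\Int$) turns the inner limit into $\WsP(\bP^x)$ as in \eqref{def:WsP1a}, whence $a\ge\int_{\Ac}\WsP(\bP^x)\,dx=\bWsP(\bP)$ by \eqref{def:bWsP}. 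Combining with Step 2 gives $\ell=a+b\ge\bWsP(\bP)+\bV(\bP)$, which is \eqref{conc:glinf}.

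\textbf{Main obstacle.} The crux is Step 3, where one must ensure that no energy is lost — neither near the diagonal (mass carried by pairs at small microscopic distance, handled by $\tau$) nor ``at infinity'' in the rescaled picture (handled by $R$). The argument is organized so that every truncation ($\tau$, $R$, the continuous replacements of the sharp cutoffs, the value-truncation needed to invoke weak convergence) only \emph{decreases} the quantity under study, so that it suffices to prove the lower bound for the truncated functionals and recover $\bWsP(\bP)$ by sending the truncation parameters to their limits monotonically; this works precisely because we only need one inequality. The remaining points — the lower semi-continuity bookkeeping for $\Int_\tau[\carr_R,\carr_R]$ and the handling of an unbounded $\Ac$ in Step 1 — are routine under \eqref{ass:regAc}--\eqref{ass:integr}.
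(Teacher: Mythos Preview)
Your proof is correct and follows essentially the same route as the paper's: extract limits via tightness, identify $\bIntens(\bP)=\mu$ using the $(1+s/d)$-moment bound, and obtain the energy lower bound through a Fubini-type inequality combined with truncation and weak convergence. The paper writes the Fubini step more tersely (``Using the positivity and scaling properties\dots and a Fubini-type argument'') and truncates only by $\wedge M$, whereas you additionally insert the $\Int_\tau$ layer and explicitly smooth the cube boundary; your extra care here is justified, since $\Int[\carr_R,\carr_R]\wedge M$ is not literally continuous across $\partial\carr_R$, a point the paper glosses over.

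One small imprecision: the claim that only $O(N^{1-1/d})$ points lie within $O(N^{-1/d})$ of $\partial\Ac$ is not true for arbitrary configurations. What your moment bound $\sum_i n_i^{1+s/d}=O(N)$ actually gives, via H\"older over the $O(N^{1-1/d})$ boundary cubes, is that the number of such points is $O\big(N^{d/(d+s)}\cdot N^{(1-1/d)s/(d+s)}\big)=o(N)$, which is all you need. (Alternatively, testing against $f\in C_c(\Ac)$ supported away from $\partial\Ac$ avoids the boundary issue entirely and still determines the measure.)
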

\begin{proof}
Up to extracting a subsequence, we may assume that $\HN(\XN)= O(N^{1+s/d})$. 
First, by positivity of the Riesz interaction, we have for $N \geq 1$
\begin{equation*}
\int_{\Ac} V \, d\mu_N \leq \frac{\HN(\XN)}{N^{1+s/d}},
\end{equation*}
and thus $\int_{\Ac} V \, d\mu_N$ is bounded. By \eqref{ass:regV} and \eqref{ass:croissanceV} we know that $V$ is bounded below and has compact sub-level sets. An easy application of Markov's inequality shows that $\{\mu_N\}_N$ is tight, and thus it converges (up to another extraction). It is not hard to check that $\{\bP_N\}_N$ converges (up to extraction) to some $\bP$ in $\pbconfig$ (indeed the average number of points per unit volume is constant, which implies tightness, see e.g. \cite[Lemma 4.1]{LebSer}) whose stationarity is clear (see again e.g. \cite{LebSer}).

Let $\brho$ be the intensity measure of $\bP$ (in the sense of Section \ref{sec:intensitymeasure}), we want to prove that $\brho = \mu$ (which will in particular imply that $\bP$ is in $\psunbconfig$). It is a general fact that $\brho \leq \mu$ (see e.g. \cite[Lemma 3.7]{leble2015large}), but it could happen that a positive fraction of the points cluster together, resulting in the existence of a singular part in $\mu$ which is missed by $\brho$ so that $\brho < \mu$. However, in the present case, we can easily bound the moment (under $\bP_N$) of order $1 + s/d$ of the number of points in a given hypercube $\carr_R$. Indeed, let $\{\tilde{K}_i\}_{i \in I}$ be a covering of $\Ac$ by disjoint  hypercubes of sidelength $RN^{-1/d}$, and let $n_i = N\mu_N\left(\tilde{K}_i\right)$ denote the number of points from $\XN$ in $\tilde{K}_i$. We have, by positivity of the Riesz interaction
\begin{equation*}
\HN(\XN) \geq \sum_{i \in I} \Int[\tilde{K}_i, \tilde{K}_i] \geq C\sum_{i \in I}  \frac{n_i^{1+s/d}N^{s/d}}{R^s},
\end{equation*}
for some constant $C>0$ (depending only on $s$ and $d$) because the minimal interaction energy of $n$ points in $\tilde{K}_i$ is proportional to $\frac{n^{1+s/d}N^{s/d}}{R^s}$ (see \eqref{def:Csd1}, \eqref{def:Csd2}). Since $\HN(\XN) = O(N^{1+s/d})$ by assumption, we get that $\sum_{i \in I} n_i^{1+s/d} = O(N)$, with an implicit constant depending only on $R$. It implies that $x \mapsto N\mu_N \left( B(x, RN^{-1/d}) \right)$ is uniformly (in $N$) locally integrable on $\Ac$ for all $R > 0$, and arguing as in \cite[Lemma 3.7]{leble2015large} we deduce that $\brho = \mu$.

We now turn to proving \eqref{conc:glinf}. Using the positivity and scaling properties of the Riesz interaction and a Fubini-type argument we may write, for any $R > 0$
\begin{equation*}
\Int[\Ac, \Ac](\XN) \geq N^{1+s/d} \int_{\Ac \times \config} \frac{1}{R^d} \Int[\carr_R, \carr_R](\C) d\bP_N(x, \C).
\end{equation*}

Of course we have, for any $M > 0$,
\begin{equation*}
\int_{\Ac \times \config} \frac{1}{R^d} \Int[\carr_R, \carr_R](\C)d\bP_N(x, \C) \geq \int_{\Ac \times \config} \frac{1}{R^d} \left( \Int[\carr_R, \carr_R](\C) \wedge M\right) d\bP_N(x, \C),
\end{equation*}
and thus the weak convergence of $\bP_N$ to $\bP$ ensures that
\begin{equation*}
\int_{\Ac \times \config} \frac{1}{R^d} \Int[\carr_R, \carr_R](\C) d\bP_N(x, \C) \geq \int_{\Ac \times \config} \frac{1}{R^d} \left(\Int[\carr_R, \carr_R](\C) \wedge M\right) d\bP(x, \C) + o_N(1).
\end{equation*}
Since this is true for all $M$ we obtain
\begin{equation*}
\liminf_{N \ti} \frac{\Int[\Ac, \Ac](\XN)}{N^{1+s/d}} \geq \int_{\Ac \times \config} \frac{1}{R^d} \left( \Int[\carr_R, \carr_R](\C) \right) d\bP(x, \C).
\end{equation*}
Sending $R$ to $+ \infty$ and using Proposition \ref{prop:WS} we get
\begin{equation} \label{limiteint}
\liminf_{N \ti} \frac{\Int[\Ac, \Ac](\XN)}{N^{1+s/d}} \geq \liminf_{R \ti} \int_{\Ac \times \config} \frac{1}{R^d} \left( \Int[\carr_R, \carr_R](\C) \right) d\bP(x, \C) =: \bWsP(\bP).
\end{equation}
On the other hand, the weak convergence of $\mu_N$ to $\mu$ and Assumption \ref{ass:regV} ensure that
\begin{equation} \label{limiteV}
\liminf_{N \ti} \int_{\Ac} V\, d\mu_N \geq \int_{\Ac} V\, d\mu.
\end{equation}
Combining \eqref{limiteint} and \eqref{limiteV} gives \eqref{conc:glinf}.
\end{proof}

Proposition \ref{prop:glinf} can be viewed as a $\Gamma$-$\liminf$ result (in the language of $\Gamma$-convergence). We will prove later (e.g. in Proposition \ref{quasicontinu}, which is in fact a much stronger statement) the corresponding $\Gamma$-$\limsup$.

\section{Proof of the large deviation principles}\label{sec4}
As in \cite{LebSer}, the main obstacle for proving Theorem \ref{theo:LDPemp} is to deal with the lack of upper semi-continuity of the interaction, namely that there is no upper bound of the type 
\begin{equation*}
\HN(\XN) \lesssim N^{1+s/d} \left( \bWsP(\bP) + \bV(\bP) \right)
\end{equation*}
which holds in general under the mere condition that $\bEmp_N(\XN) \approx \bP$ (cf. \eqref{def:Emp} for a definition of the tagged empirical process). This yields a problem for proving the large deviations lower bound (in contrast, \textit{lower} semi-continuity holds and the proof of the large deviations upper bound is quite simple). Let us briefly explain why.

Firstly, due its singularity at $0$, the interaction is not uniformly continuous with respect to the topology on the configurations. Indeed a pair of points at distance $\epsilon$ yields a $\epsilon^{-s}$ energy but a pair of points at distance $2 \epsilon$ has energy $(2\epsilon)^{-s}$, with $|\epsilon^{-s} - (2\epsilon)^{-s} | \to \infty$, although these two point configurations are very close for the topology on $\config$.

Secondly, the energy is non-additive: we have in general
\begin{equation*}
\Int[\C_1 \cup \C_2, \C_1 \cup \C_2] \neq \Int[\C_1, \C_1] + \Int[\C_2, \C_2]. 
\end{equation*}
Yet the knowledge of $\bEmp_N$ (through the fact that $\bEmp_N(\XN) \in B(\bP, \epsilon)$) yields only   \textit{local} information on  $\XN$, and does not allow one to reconstruct $\XN$ \textit{globally}. Roughly speaking, it is like partitioning $\Ac$ into hypercubes and having a family of point configurations, each belonging to some hypercube, but without knowing the precise configuration-hypercube pairing. Since the energy is non-additive (there are non trivial hypercube-hypercube interactions in addition to the hypercubes' self-interactions), we cannot (in general) deduce $\HN(\XN)$ from the mere knowledge of the tagged empirical process.

In Section \ref{sec:LDPLB}, the singularity problem is dealt with by using a regularization procedure similar to that of \cite{LebSer}, while the non-additivity is shown to be negligible due to the short-range nature of the Riesz potential for $s > d$. 

\subsection{A LDP for the reference measure}
Let $\Leba$ be the Lebesgue measure on $\Ac^N$, and let $\bQN$ be the push-forward of $\Leba$ by the “tagged empirical process” map $\bEmp_N$ defined in \eqref{def:Emp}. Let us recall that $\Ac$ is not necessarily bounded, hence $\Leba$ may have an infinite mass and thus there is no natural way of making $\bQN$ a probability measure.

\begin{prop} \label{prop:Sanovreference}
Let $\bP$ be in $\psunbconfig$. We have
\begin{multline} \label{SanovReference}
\lim_{\epsilon \t0} \liminf_{N \ti} \frac{1}{N} \log \bQN\left( B(\bP, \epsilon) \right) = \lim_{\epsilon \t0} \limsup_{N \ti} \frac{1}{N} \log \bQN\left( B(\bP, \epsilon) \right) 
\\ = - \int_{\Ac} \left(\ERS[\bPx | \Poisson] -1\right) dx - 1.
\end{multline}
\end{prop}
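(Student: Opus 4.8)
The plan is to establish \eqref{SanovReference} as a Sanov-type statement for the reference (Lebesgue) measure, following the strategy of \cite[Section 5]{LebSer} but simplified by the absence of any density constraint coming from an equilibrium measure. The key observation is that $\bEmp_N(\XN)$ only records, for each tag $x$, the rescaled configuration $\theta_{N^{1/d}x}\cdot\XN'$ seen near $x$; so $\bQN(B(\bP,\epsilon))$ is (up to the $N^{1/d}$-rescaling Jacobian) the Lebesgue volume of those $\XN\in\Ac^N$ whose local empirical picture, tagged by position, is $\epsilon$-close to $\bP$. First I would reduce to the case where $\Ac$ is replaced by a large bounded box (using the confinement and that $\bP$ has total intensity $1$, the mass outside a large box is negligible at exponential scale $N$), partition that box into mesoscopic cubes $\{Q_k\}$ of sidelength going to $0$ slowly, and note that on each $Q_k$ the target process is essentially the constant process $\bP^{x_k}$ for a sample point $x_k\in Q_k$. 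On each such cube the number of points is $\approx N|Q_k|\,\Intens(\bP^{x_k})$, and the volume of configurations in $Q_k^{n_k}$ whose rescaled empirical field is close to a stationary process $P$ with the right intensity is governed, by the standard large-deviations/Sanov estimate for the Poisson process (equivalently, the classical computation that $\log$ of the volume of configurations of $n$ points in a box of volume $n/m$ looking like $P$ equals $-n\,\ERS[P|\Poisson]/m + o(n)$ plus the volume of the box, see \cite[Lemma 5.x]{LebSer} and the references there), by $-N|Q_k|\big(\ERS[\bP^{x_k}|\Poisson]-1\big) + o(N|Q_k|)$, where the $+1$ comes precisely from the $|Q_k^{n_k}|$ Lebesgue volume of the block itself (the empty process has specific entropy $1$). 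Summing over $k$ and passing to the Riemann-sum limit yields $-\int_\Ac(\ERS[\bPx|\Poisson]-1)\,dx$, and tracking the global rescaling Jacobian $N^{-N/d}$ against $d\XN$ produces exactly the extra $-1$ in \eqref{SanovReference}.

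In more detail, I would organize the argument into three steps. Step 1 (upper bound): for fixed $\epsilon$, cover $B(\bP,\epsilon)$ by finitely many neighborhoods each forcing, on each mesoscopic cube $Q_k$, both the local point count and the local microscopic picture; use subadditivity of $\log$-volume over the disjoint cubes together with the known large-deviations \emph{upper} bound for the Poisson/Lebesgue reference on a single box (which is where lower semicontinuity and compact sublevel sets of $\ERS[\cdot|\Poisson]$ enter) to get $\limsup_N \frac1N\log\bQN(B(\bP,\epsilon)) \le -\int_\Ac(\ERS[\bPx|\Poisson]-1)dx - 1 + \delta(\epsilon)$ with $\delta(\epsilon)\to0$. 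Step 2 (lower bound): construct an explicit subset of $\Ac^N$ of the right volume by, on each cube $Q_k$, placing $n_k\approx N|Q_k|\Intens(\bP^{x_k})$ points independently with the law whose rescaled empirical field is close to $\bP^{x_k}$ — concretely one discretizes $\bP^{x_k}$ and uses that a positive-volume set of configurations realizes each such local pattern — and checks that the tagged empirical process of the resulting $\XN$ is in $B(\bP,\epsilon)$ for $N$ large; the volume of this set is bounded below by $\prod_k \exp\!\big(-N|Q_k|(\ERS[\bP^{x_k}|\Poisson]-1)+o(N|Q_k|)\big)$. Step 3: let the mesoscopic scale tend to $0$ after $N\to\infty$, then $\epsilon\to0$, to identify both the $\liminf$ and the $\limsup$ with the claimed limit; the continuity in $x\mapsto\bP^x$ needed to replace $\bP^{x_k}$ by $\bPx$ on $Q_k$ is handled by the measurability of the disintegration and a standard approximation of $\bP$ by processes with $x$-piecewise-constant disintegration, exactly as in \cite{LebSer}.

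\textbf{Main obstacle.} The crux is Step 2, the lower bound: one must produce \emph{enough} configurations, in Lebesgue measure, whose tagged empirical process is genuinely $\epsilon$-close to $\bP$, and simultaneously control that the independent block construction does not create anomalous point clusters near cube boundaries that would spoil either the closeness of $\bEmp_N(\XN)$ or the entropy count. This requires a careful two-scale construction (microscopic pattern inside each mesoscopic cube, with a thin guard region near $\partial Q_k$ left empty so blocks do not interact in the definition of the empirical field) and an argument that emptying the guard regions costs only $o(N|Q_k|)$ in log-volume — which is true because the guard volume fraction can be sent to $0$, and the empty process has finite specific entropy $1$. A secondary technical point is the passage from $\Ac$ possibly unbounded to a bounded box: here one uses assumption \eqref{ass:integr} to show $\Leba$-almost all relevant mass sits in a fixed large box, so the tails contribute negligibly at speed $N$; and one uses that $\bP\in\psunbconfig$ has $\int_\Ac\bIntens(\bP)=1$, so only a Lebesgue-negligible set of tags carries the nonempty part of $\bP$ and the $+1$ bookkeeping (empty process, specific entropy $1$) accounts correctly for all remaining tags. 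None of these steps requires new ideas beyond \cite{LebSer}; the hypersingularity of the kernel plays no role in this proposition since $\bQN$ ignores the energy entirely.
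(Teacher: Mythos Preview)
Your proposal is essentially the approach the paper takes: the paper simply defers to \cite[Section~7.2, Lemma~7.8]{LebSer} for bounded $\Ac$ (noting only the harmless normalization difference), and your Steps~1--3 are a reasonable sketch of that very argument. So for bounded $\Ac$ your outline is correct and aligned with the paper.

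There is, however, one concrete slip in your treatment of the unbounded case. You invoke assumption~\eqref{ass:integr} to ``show $\Leba$-almost all relevant mass sits in a fixed large box,'' but \eqref{ass:integr} is a statement about $\int e^{-MV}\,dx$ and concerns the Gibbs weight; it has no bearing whatsoever on $\bQN$, which is the push-forward of the \emph{unweighted} Lebesgue measure $\Leba$ and does not see $V$ at all. For unbounded $\Ac$, $\Leba$ has infinite mass and there is no confinement mechanism available at the level of the reference measure. The paper handles this differently and more simply: the topology on $\pbconfig$ is defined through test functions that are compactly supported in $\Ac$, so membership in $B(\bP,\epsilon)$ only constrains the configuration over compact subsets of $\Ac$, and the bounded-domain analysis transfers directly. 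Your remark that $\int_\Ac\bIntens(\bP)=1$ forces the intensity to be small at infinity is correct and is what makes the right-hand side of \eqref{SanovReference} finite, but it is the topology, not \eqref{ass:integr}, that lets you localize the volume estimate.
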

\ed{We recall that $\bPx$ is the disintegration measure of $\bP$ at the point $x$, or the “fiber at $x$" (which is a measure on $\config$) of $\bP$ (which is a measure on $\Omega \times \config$), see Section \ref{sec:randomttaggedpoint}.}

\begin{proof} 
If $\Ac$ is bounded, Proposition \ref{prop:Sanovreference} follows from the analysis of \cite[Section 7.2]{LebSer}, see in particular \cite[Lemma 7.8]{LebSer}. The only difference is that the Lebesgue measure on $\Ac$ used in \cite{LebSer} is normalized, which yields an additional factor of $\log |\Ac|$ in the rate function. The proof extends readily to a non-bounded $\Ac$ because the topology of weak convergence on $\pbconfig$ is defined with respect to test functions which are compactly supported on $\Ac$. 
\end{proof}

\subsection{A LDP upper bound}
\begin{prop} \label{prop:LDPUB}
Let $\bP$ be in $\psunbconfig$.
We have
\begin{equation} \label{LDPUB}
\lim_{\epsilon \t0} \limsup_{N \ti} \frac{1}{N} \log \fPNbeta ( B(\bP, \epsilon)) \leq - \fbarbeta(\bP) + \limsup_{N \ti} \left(- \frac{\log \ZNbeta}{N}\right) .
\end{equation}
\end{prop}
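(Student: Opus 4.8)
The plan is to combine the explicit form of the Gibbs measure with the $\Gamma$-$\liminf$ lower bound on the energy from Proposition \ref{prop:glinf} and the reference-measure estimate of Proposition \ref{prop:Sanovreference}. First I would use the definitions \eqref{def:PNbeta}--\eqref{def:ZNbeta} of $\PNbeta$ and the fact that $\fPNbeta$ is the push-forward of $\PNbeta$ by $\bEmp_N$ to write
\begin{equation*}
\fPNbeta\big(B(\bP,\epsilon)\big)=\frac{1}{\ZNbeta}\int_{\{\XN\in\Ac^N:\ \bEmp_N(\XN)\in B(\bP,\epsilon)\}}\exp\big(-\beta N^{-s/d}\HN(\XN)\big)\,d\XN,
\end{equation*}
bound $\HN$ from below by its infimum over the domain of integration, and recall that $\bQN$ is the push-forward of $\Leba$ by $\bEmp_N$; this gives
\begin{equation*}
\frac1N\log\fPNbeta\big(B(\bP,\epsilon)\big)\le -\frac{\log\ZNbeta}{N}-\beta\inf_{\bEmp_N(\XN)\in B(\bP,\epsilon)}\frac{\HN(\XN)}{N^{1+s/d}}+\frac1N\log\bQN\big(B(\bP,\epsilon)\big).
\end{equation*}
The matter then reduces to bounding the infimum from below by (essentially) $\bWsP(\bP)+\bV(\bP)$ and letting $\epsilon\to0$.

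The main step is to upgrade Proposition \ref{prop:glinf} from a sequential statement to a uniform one. Proposition \ref{prop:glinf} implies that whenever $\bEmp_N(\XN)\to\bP$ along a subsequence with $\liminf_N N^{-1-s/d}\HN(\XN)<\infty$, one has $\liminf_N N^{-1-s/d}\HN(\XN)\ge\bWsP(\bP)+\bV(\bP)$: one extracts a further subsequence realizing the $\liminf$ as a limit, applies Proposition \ref{prop:glinf}, and uses that the weak limit of the tagged empirical processes is then necessarily $\bP$. From this I would deduce, by a routine contradiction argument, that for every $K<\bWsP(\bP)+\bV(\bP)$ there are $\epsilon_0>0$ and integers $N_0(\epsilon)$ (for $\epsilon\in(0,\epsilon_0]$) such that $\HN(\XN)\ge K N^{1+s/d}$ whenever $N\ge N_0(\epsilon)$ and $\bEmp_N(\XN)\in B(\bP,\epsilon)$. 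Indeed, otherwise one could choose $\epsilon_m\downarrow0$, $N_m\uparrow\infty$ and configurations violating this bound, producing a sequence whose tagged empirical processes converge to $\bP$ while $\liminf_m N_m^{-1-s/d}\HN<K<\bWsP(\bP)+\bV(\bP)$, contradicting the previous assertion. I expect this passage from the sequential lower bound to the uniform one---together with keeping track of how the various $\liminf$'s interact with the subsequence extractions---to be the only delicate point; everything else is soft.

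For the last step I would fix $K<\bWsP(\bP)+\bV(\bP)$, insert the uniform bound into the displayed inequality to obtain, for $N\ge N_0(\epsilon)$,
\begin{equation*}
\frac1N\log\fPNbeta\big(B(\bP,\epsilon)\big)\le -\frac{\log\ZNbeta}{N}-\beta K+\frac1N\log\bQN\big(B(\bP,\epsilon)\big),
\end{equation*}
take $\limsup_{N\to\infty}$ term by term, then let $\epsilon\to0$ (all the $\epsilon$-limits exist since $\epsilon\mapsto B(\bP,\epsilon)$ is nondecreasing), and use Proposition \ref{prop:Sanovreference} to identify $\lim_{\epsilon\to0}\limsup_N\frac1N\log\bQN(B(\bP,\epsilon))$ with $-\int_{\Ac}(\ERS[\bPx | \Poisson]-1)\,dx-1$. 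This produces
\begin{equation*}
\lim_{\epsilon\to0}\limsup_{N\to\infty}\frac1N\log\fPNbeta\big(B(\bP,\epsilon)\big)\le\limsup_{N\to\infty}\Big(-\frac{\log\ZNbeta}{N}\Big)-\beta K-\int_{\Ac}\big(\ERS[\bPx | \Poisson]-1\big)\,dx-1.
\end{equation*}
Letting $K\uparrow\bWsP(\bP)+\bV(\bP)$ and reading off the definition \eqref{def:fbarbeta} of $\fbarbeta$, the right-hand side becomes $\limsup_N(-\log\ZNbeta/N)-\fbarbeta(\bP)$, which is precisely \eqref{LDPUB}; if $\bWsP(\bP)+\bV(\bP)=+\infty$ the same limit drives the right-hand side to $-\infty$, so there is nothing to prove. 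Note that the only soft facts used about $\HN$ are its nonnegativity (via $V\ge0$ and positivity of the Riesz kernel), so no integrability subtlety coming from an unbounded $\Ac$ enters beyond what is already contained in Propositions \ref{prop:glinf} and \ref{prop:Sanovreference}.
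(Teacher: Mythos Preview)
Your argument is correct and follows essentially the same route as the paper: write out the Gibbs integral, bound $\HN$ from below via Proposition~\ref{prop:glinf}, and identify the reference-measure contribution via Proposition~\ref{prop:Sanovreference}. The only minor variation is in how the energy lower bound is made uniform over $B(\bP,\epsilon)$: the paper fixes $\epsilon$, lets the limit point of $\bEmp_N(\XN)$ be some $\bP'\in\overline{B(\bP,\epsilon)}$, and then invokes the lower semi-continuity of $\bWsP+\bV$ (Proposition~\ref{prop:lsc}) to recover $\bWsP(\bP)+\bV(\bP)$ up to $o_\epsilon(1)$; you instead run a diagonal argument with $\epsilon_m\downarrow 0$ so that the limit point is forced to be $\bP$ itself, which lets you appeal to Proposition~\ref{prop:glinf} directly without ever citing Proposition~\ref{prop:lsc}. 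Both are equally valid and the overall structure is the same.
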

\begin{proof}
Using the definition of $\fPNbeta$ as the push-forward of $\PNbeta$ by $\bEmp_N$ we may write
\begin{equation*}
\fPNbeta ( B(\bP, \epsilon)) = \frac{1}{\ZNbeta} \int_{\Ac^N \cap \{\bEmp_N(\XN) \in B(\bP, \epsilon)\}} \exp\left(-\beta N^{-s/d} \HN(\XN)\right) d\XN.
\end{equation*}
From Proposition \ref{prop:glinf} and Proposition \ref{prop:lsc} we know that for any sequence $\XN$ such that $\bEmp_N(\XN) \in B(\bP, \epsilon)$ we have
\begin{equation*}
\liminf_{N \ti} \frac{\HN(\XN)}{N^{1+s/d}} \geq \bWsP(\bP) + \bV(\bP) + o_{\epsilon}(1).
\end{equation*}
We may thus write
\begin{multline*}
\limsup_{N \ti} \frac{1}{N} \log \fPNbeta ( B(\bP, \epsilon)) \leq   - \beta \left(  \bWsP(\bP) + \bV(\bP) \right)   \\
+  \limsup_{N \ti}  \int_{\Ac^N \cap  \{\bEmp_N(\XN) \in B(\bP, \epsilon)\}} d\XN + \limsup_{N \ti}  \left(  - \frac{\log \ZNbeta}{N}  \right) + o_{\epsilon}(1).
\end{multline*}
Using Proposition \ref{prop:Sanovreference} we know that
\begin{equation*}
\limsup_{N \ti} \frac{1}{N} \log \int_{\Ac^N \cap  \{\bEmp_N(\XN) \in B(\bP, \epsilon)\}} d\XN =  - \int_{\Ac} \left(\ERS[\bPx | \Poisson] -1\right) - 1 + o_{\epsilon}(1).
\end{equation*}
We thus obtain, sending $\epsilon \t0$
\begin{multline*}
\limsup_{N \ti} \frac{1}{N} \log \fPNbeta ( B(\bP, \epsilon)) \leq   - \beta \left(  \bWsP(\bP) + \bV(\bP) \right) - \int_{\Ac} \left(\ERS[\bPx | \Poisson] -1\right) - 1 \\
+ \limsup_{N \ti} \left(- \frac{\log \ZNbeta}{N}\right),
\end{multline*}
which, in view of the definition of $\fbarbeta$ as in \eqref{def:fbarbeta}, yields \eqref{LDPUB}.
\end{proof}

\subsection{A LDP lower bound} \label{sec:LDPLB}
The goal of the present section is to prove a matching LDP lower bound:
\begin{prop} \label{prop:LDPLB}
Let $\bP$ be in $\psunbconfig$.
We have
\begin{equation} \label{LDPLB}
\lim_{\epsilon \t0} \liminf_{N \ti} \frac{1}{N} \log \fPNbeta ( B(\bP, \epsilon)) \geq - \fbarbeta(\bP) + \liminf_{N \ti} \left(- \frac{\log \ZNbeta}{N}\right).
\end{equation}
\end{prop}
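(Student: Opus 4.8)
\noindent
We may assume $\fbarbeta(\bP) < +\infty$, since otherwise $\bWsP(\bP)=+\infty$ or $\int_{\Ac}\ERS[\bPx|\Poisson]\,dx=+\infty$ and the right-hand side of \eqref{LDPLB} is $-\infty$ (recall $\ZNbeta<\infty$ for $N$ large). Writing, as in Proposition \ref{prop:LDPUB},
\begin{equation*}
\fPNbeta(B(\bP,\epsilon)) = \frac{1}{\ZNbeta}\int_{\{\bEmp_N(\XN)\in B(\bP,\epsilon)\}}\exp\left(-\beta N^{-s/d}\HN(\XN)\right)d\XN ,
\end{equation*}
the plan is to exhibit, for every small $\epsilon>0$, a measurable $\mathcal{G}_{N,\epsilon}\subset\{\XN\in\Ac^N: \bEmp_N(\XN)\in B(\bP,\epsilon)\}$ with
\begin{equation*}
\sup_{\mathcal{G}_{N,\epsilon}}\HN \leq N^{1+s/d}\left(\bWsP(\bP)+\bV(\bP)\right) + o_\epsilon(N^{1+s/d}) + o_N(N^{1+s/d})
\end{equation*}
and $\frac1N\log|\mathcal{G}_{N,\epsilon}| \geq -\int_{\Ac}(\ERS[\bPx|\Poisson]-1)\,dx - 1 + o_\epsilon(1) + o_N(1)$; inserting these, taking $\tfrac1N\log$, then $\liminf_{N\ti}$, then $\lim_{\epsilon\t0}$, and using \eqref{def:fbarbeta} yields \eqref{LDPLB}.

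First I would reduce, via the lower semi-continuity of $\fbarbeta$ (Proposition \ref{prop:ratefunction}) and $\Gamma$-$\limsup$-type approximation, to a \emph{regular} $\bP$: $\Ac$ bounded, the intensity $\rho:=\bIntens(\bP)$ bounded above and below on its support, $x\mapsto\WsP(\bPx)$ essentially bounded, and $\bPx$-a.e.\ configuration having a uniform minimal interpoint distance; the last point is enforced by working with the truncation $\Int_\tau$ of the Riesz interaction (for which the number of $\tau$-close pairs is controlled by $\WsP$ itself, as in \cite{LebSer}) and by mollifying $V$, each reduction costing an error that vanishes in the appropriate limit. The volume bound is then essentially Proposition \ref{prop:Sanovreference}: the Lebesgue measure of $\mathcal{A}_{N,\epsilon}:=\{\XN:\bEmp_N(\XN)\in B(\bP,\epsilon)\}$ equals $\exp\big(N(-\int_{\Ac}(\ERS[\bPx|\Poisson]-1)\,dx-1+o_\epsilon(1)+o_N(1))\big)$, and — from the local estimates of \cite[Section~7.2]{LebSer} behind that statement — this volume is carried by configurations that, cube by cube at the scale $N^{-1/d}R$, are distributed like $\bPx$ restricted to $\carr_R$. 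It then remains to see that a non-exponentially-small fraction of them can be made to obey the energy bound.

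The heart of the matter, and the step I expect to be the main obstacle, is this energy bound, obtained by a \emph{screening} argument. Tile $\Ac$ by hypercubes of sidelength $N^{-1/d}R$ for a large parameter $R$; in the rescaled picture $\XN'$ these are hypercubes of sidelength $R$ with $\approx\rho(z_m)R^d$ points, $z_m$ the tag of the $m$-th cube. Apply the \emph{shrinking} operation of \cite{HSAdv} to the configurations of $\mathcal{A}_{N,\epsilon}$: dilate each cube's points toward its centre by a factor $1-\delta$, carving a moat of microscopic width $\asymp\delta R$ around each cube. This is an explicit map, preserving the number of points, with Jacobian $(1-\delta)^{dN}=e^{-d\delta N}$ (negligible as $\delta\t0$), which moves the configuration only inside its own cube and so changes $\bEmp_N(\XN)$ by $o_\delta(1)$ in $\dbconfig$, keeping images in $B(\bP,2\epsilon)$ for $\delta$ small relative to $\epsilon$. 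By positivity, $\Int[\Ac,\Ac](\XN)=N^{s/d}\Int[\XN',\XN']$ splits into within-cube and cross-cube parts. Points in cubes at combinatorial distance $k$ now lie at microscopic distance $\gtrsim(k-1)R$ for $k\geq2$ and $\gtrsim\delta R$ for $k=1$, so — precisely because $s>d$ makes $\sum_{k\geq2}k^{d-1}(k-1)^{-s}$ converge — the total cross-cube interaction is $\leq C\rho^2 N^{1+s/d}R^{d-s}\delta^{-s}=o_R(1)\,N^{1+s/d}$; this is the ``self-screening''. The within-cube part, after shrinking, is $(1-\delta)^{-s}$ times the un-shrunk self-interactions, each of which has expectation $\leq R^d\WsP(\bP^{z_m})(1+o_R(1))$ by Lemma \ref{lem:WsP} (using the uniform estimate \eqref{erreurWsP2} and the essential boundedness of $\WsP(\bPx)$); summing over cubes is a Riemann sum converging to $N\bWsP(\bP)$. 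With the potential term contributing $N^{1+s/d}(\bV(\bP)+o_\epsilon(1))$ (from $\emp(\XN)\to\rho$ and \eqref{ass:regV}), the \emph{expected} energy of the shrunk family is $\leq N^{1+s/d}\big((1-\delta)^{-s}(1+o_R(1))\bWsP(\bP)+\bV(\bP)+o_\epsilon(1)+o_R(1)\big)$. A Markov inequality then produces $\mathcal{G}_{N,\epsilon}$ of volume at least a (dimensionless, $\epsilon$-dependent) constant times that of the shrunk family — so $\tfrac1N\log$ of that constant is $o_N(1)$ — on which the energy exceeds its average by at most $o_\epsilon(N^{1+s/d})$. Choosing $R=R(\epsilon)\ti$ and $\delta=\delta(\epsilon)\t0$ with $R^{d-s}\delta^{-s}\t0$, and then letting $N\ti$ and $\epsilon\t0$, delivers the two displayed bounds.

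The delicate point is the simultaneous calibration of the truncation scale $\tau$, the moat width $\delta$ and the cube size $R$: the moat must be wide enough that cross-cube interactions are negligible at order $N^{1+s/d}$, yet narrow enough that the shrinking disturbs neither $\bEmp_N$ (beyond $B(\bP,2\epsilon)$) nor the volume/entropy count (beyond a $e^{o(N)}$ factor); the within-cube energies must genuinely converge to $R^d\WsP(\bPx)$, which is why the a~priori regularization of $\bP$ is needed; and the non-integrable singularity of $|x|^{-s}$ at the origin must be absorbed by passing through $\Int_\tau$ and sending $\tau\t0$, exploiting that the energy itself dominates the count of close pairs. Once these are arranged, the rest is bookkeeping together with the order-of-limits argument sketched above.
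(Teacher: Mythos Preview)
Your overall architecture --- produce a family in $B(\bP,\epsilon)$ with controlled energy and entropy-correct volume, use shrinking for self-screening of cross-cube interactions, then plug in --- is the same as the paper's, and your treatment of the cross-cube term is correct. The genuine gap is in the within-cube energy bound. You take $\mathcal{A}_{N,\epsilon}=\{\bEmp_N(\XN)\in B(\bP,\epsilon)\}$ under (normalized) Lebesgue measure and claim that the self-interaction in cube $m$ ``has expectation $\leq R^d\WsP(\bP^{z_m})(1+o_R(1))$ by Lemma~\ref{lem:WsP}''. But Lemma~\ref{lem:WsP} controls $\Esp_{\bP^{z_m}}[\Int[\carr_R,\carr_R]]$, an expectation under the stationary process $\bP^{z_m}$, not an average under Lebesgue measure on $\mathcal{A}_{N,\epsilon}$; these two measures are only close in the \emph{weak} sense, which says nothing about expectations of the unbounded functional $\Int$ (or even $\Int_\tau$, since the number of points per cube is not a priori bounded on $\mathcal{A}_{N,\epsilon}$). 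Your regularization is of $\bP$, not of the configurations, so it does not help here: even if $\bP^x$-a.e.\ $\C$ has minimal spacing $\tau_0$, configurations with $\bEmp_N(\XN)\in B(\bP,\epsilon)$ need not. This is exactly the failure of upper semi-continuity flagged at the start of Section~\ref{sec4}, and it blocks the Markov-inequality step.

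The paper circumvents this by \emph{not} working with all of $\mathcal{A}_{N,\epsilon}$. Instead it uses the explicit microstates construction of \cite[Lemma~6.3]{LebSer} (Step~1 of Proposition~\ref{prop:quasicontinu}) to build a family $\mathcal{A}$ whose cube-by-cube laws are genuinely samples from $\bP^{z_m}_{|K_R}$ conditioned on $|\C\cap\carr_R|\leq MR^d$; this conditioning makes $\Int_\tau\wedge\frac{(2MR^d)^2}{\tau^s}$ bounded and continuous, so weak convergence of the discretized empirical process controls its average (Step~2). The close-range part $\Int-\Int_\tau$ is then handled by physically modifying the \emph{configurations} (the regularization of \cite[Lemma~5.11]{LebSer}, Step~3), not by approximating $\bP$. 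Only after these two steps is the shrinking applied (Step~4). Your sentence ``this volume is carried by configurations that, cube by cube\ldots are distributed like $\bP^x$'' is a correct description of the \emph{constructed} family, not of $\mathcal{A}_{N,\epsilon}$ as a whole; if you replace $\mathcal{A}_{N,\epsilon}$ by that constructed family throughout, your argument essentially becomes the paper's.
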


For $N \geq 1$ and $\delta > 0$, let us define the set $T_{N, \delta}(\bP)$ as
\begin{equation} \label{def:TNdelta}
T_{N, \delta}(\bP) = \left\lbrace \XN   \mid   \frac{\HN(\XN)}{N^{1+s/d}} \leq \fbarbeta(\bP) + \delta \right\rbrace.
\end{equation}
We will rely on the following result:
\begin{prop}
\label{prop:quasicontinu}
Let $\bP$ be in $\psunbconfig$. For all $\epsilon, \delta >0$  we have
\begin{equation}
\label{quasicontinu}
\begin{split}
\liminf_{N \ti} \frac{1}{N} \log \Leba &\left( \bEmp_N \in B(\bP, \epsilon) \cap  \XN \in T_{N, \delta}(\bP) \right)\\ & \geq  - \int_{\Ac} \left(\ERS[\bPx | \Poisson] -1\right) dx - 1.
\end{split}
\end{equation}
\end{prop}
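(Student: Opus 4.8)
The plan is to prove this lower bound by an explicit construction: produce, for each small $\eta>0$ and all large $N$, a family of configurations $\XN$ which are $B(\bP,\epsilon)$-close in tagged empirical process, satisfy $\HN(\XN)\le N^{1+s/d}(\bWsP(\bP)+\bV(\bP)+o_\eta(1))$ --- hence lie in $T_{N,\delta}(\bP)$ once $\eta$ is small and $N$ large --- and whose $\Leba$-volume has $\frac1N\log$ at least $-\int_\Ac(\ERS[\bPx|\Poisson]-1)\,dx-1-o_\eta(1)-o_\epsilon(1)$. Since $T_{N,\delta}(\bP)\subset\Ac^N$ makes the reverse inequality automatic from Proposition~\ref{prop:Sanovreference}, the real content is that imposing the energy bound costs no exponential volume; Proposition~\ref{prop:quasicontinu} is thus the quantitative $\Gamma$-$\limsup$ companion of Proposition~\ref{prop:glinf}. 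First I would dispose of the degenerate cases --- if $\int_\Ac(\ERS[\bPx|\Poisson]-1)\,dx=+\infty$ there is nothing to prove, and if $\bWsP(\bP)=+\infty$ the constraint defining $T_{N,\delta}(\bP)$ is vacuous and the statement is Proposition~\ref{prop:Sanovreference} itself --- so we may assume both $\bWsP(\bP)$ and $\int_\Ac\ERS[\bPx|\Poisson]\,dx$ finite. Then, by a standard approximation step (mollifying $\bP$ in the tag variable, which averages both $\bWsP$ and $\ERS[\cdot|\Poisson]$ since these are affine, together with their lower semicontinuity), I would reduce to $\bP$ with $x\mapsto\bPx$ continuous, intensity $\rho$ continuous and compactly supported in the interior of $\Ac$ with $\inf_{\supp\rho}\rho>0$, and uniformly bounded local number of points.

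For the construction I would use a mesoscopic grid: tile $\supp\rho$ by hypercubes $C_1,\dots,C_k$ of sidelength $\eta$ with centres $y_i$, and on $C_i$ replace $\bP$ by the stationary process $\bP^{y_i}$ of intensity $\rho(y_i)$. After rescaling by $N^{1/d}$, $C_i$ becomes a box of side $\eta N^{1/d}$, which I fill with a single realization of $\bP^{y_i}$ restricted to that box, so that the interaction internal to the cell is automatically counted. The Sanov-type analysis underlying Proposition~\ref{prop:Sanovreference} (carried out in \cite[Section~7.2]{LebSer}) shows that the $\Leb$-volume of microscopic configurations inside that box lying within $\epsilon$, in the local topology, of such a typical realization is $\exp(-(\eta N^{1/d})^d(\ERS[\bP^{y_i}|\Poisson]-1+o(1)))$, up to a harmless global $+1$; multiplying over the $O(\eta^{-d})$ cells and recognizing a Riemann sum yields the asserted volume bound. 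One checks $\bEmp_N\in B(\bP,\epsilon)$ because around each $x\in C_i$ the rescaled pattern is a sample of $\bP^{y_i}\approx\bP^x$ up to error $o_\eta(1)+o_\epsilon(1)$.

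The energy count is where the hypersingular hypothesis $s>d$ is decisive. The self-energy of the realization in the rescaled cell $C_i$ is, for typical realizations, $\approx(\eta N^{1/d})^d\rho(y_i)^{1+s/d}\WsP(\sigma_{\rho(y_i)}\bP^{y_i})$ by the definition of $\WsP$ and the scaling law \eqref{scalingw1}, so the Riemann sum over $i$ converges to $N^{1+s/d}\bWsP(\bP)$ by \eqref{scalingw2}. The interaction \emph{between} two distinct rescaled cells $C_a,C_b$ is at most $(\#C_a)(\#C_b)\dist(C_a,C_b)^{-s}=O(N^2)\dist(C_a,C_b)^{-s}$, which after division by $N^{1+s/d}$ is a negative power of $N$ per pair, hence $o_\eta(1)$ after summing the $O(\eta^{-2d})$ pairs --- precisely because $s>d$, the ``self-screening''. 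To make this legitimate for adjacent cells, where two points could a priori be close across a shared face, I would insert a shrinking step as in \cite{HSAdv}: contract each cell's configuration toward its centre by a factor $1-o(1)$, opening a buffer of width a small multiple of $\eta$; this multiplies each intra-cell energy only by $1+o(1)$, keeps inter-cell distances macroscopically positive, and perturbs $\bEmp_N$ too little to leave $B(\bP,\epsilon)$. Since $N^{s/d}\sum_iV(x_i)$ divided by $N^{1+s/d}$ is $\int V\,d\emp(\XN)\to\int V\,d\rho=\bV(\bP)$, we obtain $\HN(\XN)/N^{1+s/d}\le\bWsP(\bP)+\bV(\bP)+o_\eta(1)+o(1)$, so the construction lands in $T_{N,\delta}(\bP)$.

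I expect the main difficulty to be honouring the two constraints simultaneously, which entangles the origin-singularity of $|x|^{-s}$ with the non-additivity of the energy. Sampling from $\bP^{y_i}$ produces, with non-negligible volume, realizations carrying an occasional very close pair of energy $\sim N^{s/d}$; following \cite{LebSer} I would work with the truncated interaction $\Int_\tau$, discard from each cell the near-realizations of $\bP^{y_i}$ whose $\Int_\tau$-energy exceeds its $\bP^{y_i}$-mean by more than $\delta$ --- a Chebyshev bound using $\E_{\bP^{y_i}}[\Int_\tau[\carr_L,\carr_L]]\lesssim L^d\WsP(\sigma_{\rho(y_i)}\bP^{y_i})$ shows this removes at most a fixed, non-exponential fraction per cell and hence an irrelevant factor $O(1)^{O(\eta^{-d})}$ overall --- then remove the rare pairs at distance $<\tau N^{-1/d}$ and send $\tau\to0$; the cross-cell non-additivity is absorbed by the shrinking step above. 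Interleaving the discarding, the shrinking and the $\tau\to0$ limit so that the surviving configurations remain simultaneously $B(\bP,\epsilon)$-close, of energy $\le\bWsP(\bP)+\bV(\bP)+\delta$, and exponentially numerous is the technical crux; the remaining pieces are localized, rescaled versions of the corresponding steps in \cite{LebSer}, together with the elementary $s>d$ tail estimates above.
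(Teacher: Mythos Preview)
Your outline is correct and matches the paper's strategy: reduce to a nice $\bP$ by approximation, tile, populate boxes with samples of the appropriate $\bP^{x}$, truncate the interaction at scale $\tau$, shrink each box's contents to kill inter-box interactions via the $s>d$ ``self-screening'', and invoke the Sanov-type lower bound of \cite[Section 7.2]{LebSer} for the volume. Two tactical points differ and are worth flagging. First, the paper tiles $\Ac_N=N^{1/d}\Ac$ by hypercubes of \emph{fixed microscopic} side $R$ (so $|I|\sim N/R^d$ cells, with $R\to\infty$ after $N\to\infty$) and then quotes \cite[Lemma 6.3]{LebSer} verbatim for the joint construction of microstates with the right discretized empirical field $\bPd$ and volume; you instead use $O(\eta^{-d})$ mesoscopic cells of macroscopic side $\eta$, independent of $N$, which is what allows your Chebyshev-discarding step to cost only a bounded-in-$N$ multiplicative factor. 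Second, for the remainder $\Int-\Int_\tau$ the paper does not remove close pairs but applies the regularization procedure of \cite[Lemma 5.11]{LebSer}, which \emph{relocates} each cluster of nearby points onto a tiny lattice inside its own $6\tau$-subcube; this keeps $|\XN|=N$ exactly and bounds $\Int-\Int_\tau$ by an explicit moment of the local point count, shown to be $o_\tau(1)$. Your ``remove the rare pairs at distance $<\tau$'' step would need a compensating reinsertion elsewhere to preserve the total number of points, so the paper's lattice regularization is the cleaner device at that spot; apart from these two choices the arguments are interchangeable.
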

\begin{proof}[Proof of Proposition \ref{prop:quasicontinu}.]
We may assume that $\Ac$ is compact and that the intensity measure of $\bP$, denoted by $\brho$, is continuous, compactly supported and bounded below. Indeed we can always approximate $\bP$ by random point processes satisfying these additional assumptions. For any $N \geq 1$, we let $\brho_N(x) := \brho(x N^{-1/d})$ and we let $\Ac_N := N^{1/d} \Ac$.

In fact, for simplicity we will assume that $\Ac$ is some large hypercube. The argument below readily extends to the case where $\Ac$ can be tiled by small hypercubes, and any $C^1$ domain can be tiled by small hypercubes up to some “boundary parts” which are negligible for our concerns (a precise argument is given e.g. in \cite[Section 6]{LebSer}). 
\medskip

For $R > 0$, we let $\{ \tK_i \}_{i \in I}$ be a partition of $\Ac_N$ by hypercubes of sidelength $R$. For $R, M$, we denote by $\bP_{R, M}$ the restriction\footnote{That is, $\bP_{R, M}\in \overline{\mathcal M}(\Ac\times \config[\carr_R]).$} to $\carr_R$ of $\bP$, conditioned to the event 
\begin{equation} \label{conditioning}
\left\lbrace \left|\C \cap \carr_R\right| \leq MR^d\right\rbrace.
\end{equation}

\medskip

\textbf{Step 1.} \textit{Generating microstates.} \ \\
For any  $\epsilon  > 0$, for any $M, R > 0$, for any $\nu > 0$, for any $N \geq 1$, there exists a family $\Aabs = \Aabs(\epsilon, M, R, \nu, N)$ of point configurations $\Cabs$ such that: 
\begin{enumerate}
\item $\Cabs = \sum_{i \in I} \Cabs_i$ where $\Cabs_i$ is a point configuration in $\tK_i$.
\item $| \Cabs | = N$.
\item The “discretized” empirical process is close to $\bP_{R, M}$
\begin{equation}
\label{bPdbelo} \bPd(\Cabs)  :=  \frac{1}{|I|} \sum_{i \in I} \delta_{(N^{-1/d} x_i, \,\theta_{x_i} \cdot \Cabs_i)} \text{ belongs to } B(\bP_{R, M}, \nu),
\end{equation}
where $x_i$ denotes the center of $\tK_i$.
\item The associated empirical process is close to $\bP$
\begin{equation}
\label{bPcbelo} \bPc(\Cabs)  :=  \int_{\Ac} \delta_{(x,\, \theta_{N^{1/d}x} \cdot \Cabs)} \, dx \text{ belongs to } B(\bP, \epsilon).
\end{equation}
Note that $\bPc(\Cabs) =\bEmp_N( N^{-1/d}\Cabs) $.
\item  The volume of $\Aabs$ satisfies, for any $\epsilon > 0$
\begin{equation} \label{bonvolume}
 \liminf_{M \ti} \liminf_{R \ti} \frac{1}{R^d} \lim_{\nu \to 0}  \lim_{N \ti} \frac{1}{|I|} \log \Leb_{\Ac_N^N} \left( \Aabs \right) \geq - \int_{\Ac} \left(\ERS[\bPx | \Poisson] -1\right) - 1.
\end{equation}
\end{enumerate}
This is essentially \cite[Lemma 6.3]{LebSer} with minor modifications (e.g. the Lebesgue measure in \cite{LebSer} is normalized, which yields an additional logarithmic factor in the formulas).

We will make the following assumption on $\Aabs$
\begin{equation} \label{conditioning2}
|\Cabs_i| \leq 2MR^d \text{ for all } i \in I.
\end{equation}
Indeed for fixed $M$, when $\bPd$ is close to $\bP_{R,M}$ (for which \eqref{conditioning} holds), the fraction of hypercubes on which \eqref{conditioning2} fails to hold as well as the ratio of excess points over the total number of points (namely $N$) are both small. We may then “redistribute” these excess points among the other hypercubes without affecting \eqref{bPcbelo} and changing the energy estimates below only by a negligible quantity.
\medskip

\textbf{Step 2.} \textit{First energy estimate.}  \ \\
For any $R, M, \tau > 0$, the map defined by
\begin{equation*}
\Cabs\in\config(\carr_R) : \,  \longrightarrow  \Int_{\tau}[\Cabs,\Cabs] \wedge \frac{(2MR^d)^2}{\tau^s}
\end{equation*}
(where $\Int_{\tau}$ is as in \eqref{def:Inttau}) is continuous on $\config(\carr_R)$ and \textit{bounded} (this is precisely the reason for conditioning  that  the number of points are bounded). We may thus write, in view of \eqref{conditioning} and \eqref{bPdbelo}, \eqref{conditioning2},
\begin{multline*}
\int_{\Ac \times \config(\carr_R)} \Int_{\tau}\,  d\bPd = \int_{\Ac \times \config(\carr_R)} \Int_{\tau} \wedge \frac{(2MR^d)^2}{\tau^s} d\bPd \\
= \int_{\Ac \times \config(\carr_R)} \Int_{\tau} \wedge \frac{(2MR^d)^2}{\tau^s} d\bP_{R, M} + o_{\nu}(1) =  \int_{\Ac \times \config(\carr_R)} \Int_{\tau}\,  d\bP_{R, M} + o_{\nu}(1).
\end{multline*}
Moreover we have
\begin{equation*}
\lim_{M \ti} \lim_{R \ti} \frac{1}{R^d} \int_{\Ac \times \config(\carr_R)} \Int_{\tau} d\bP_{R, M} = \bWsP(\bP) + o_{\tau}(1), 
\end{equation*}
thus we see that, with \eqref{bPdbelo}
\begin{equation} \label{energietronqueeconverge}
\lim_{M \ti, R \ti} \lim_{\nu \to 0} \lim_{N \ti} \frac{1}{N} \sum_{i \in I} \Int_{\tau}[\Cabs_i, \Cabs_i] = \bWsP(\bP) + o_{\tau}(1).
\end{equation}
\medskip

\textbf{Step 3.} \textit{Regularization.} \ \\
In order to deal with the short-scale interactions that are not captured in $\Int_{\tau}$, we apply the regularization procedure of \cite[Lemma 5.11]{LebSer}. Let us briefly present this procedure:
\begin{enumerate}
\item We partition $\Ac_N$ by small hypercubes of sidelength $6\tau$.
\item If one of these hypercubes $\mc{K}$ contains more than one point, or if it contains a point and  one of the adjacent hypercubes also contains a point, we replace the point configuration in $\mc{K}$ by one with the same number of points but confined in the central, smaller hypercube $\mc{K}' \subset \mc{K}$ of side length $3 \tau$ and that lives on a lattice (the spacing of the lattice depends on the initial number of points in $\mc{K}$). 
\end{enumerate}
This allows us to control the difference $\Int - \Int_{\tau}$ in terms of the number of points in the modified hypercubes. 

In particular we replace $\Aabs$ by a new family of point configurations, such that
\begin{equation} \label{truncationerror}
\frac{1}{N} \sum_{i \in I} \left( \Int - \Int_{\tau} \right) [\C_i, \C_i] \leq C \tau^{-s-d}\Esp_{\bPd} \left[ \left( \left( \left|\C \cap \carr_{12\tau} \right| \right)^{2+s/d} - 1 \right)_+ \right].
\end{equation}
The right-hand side of \eqref{truncationerror} should be understood as follows: any group of points which were too close to each other (without any precise control) have been replaced by a group of points with the same cardinality, but whose interaction energy is now similar to that of a lattice. The energy of $n$ points in a lattice of spacing $\frac{\tau}{n^{1/d}}$ scales like $n^{2+ s/d} \tau^{-s}$, and taking the average over all small hypercubes, is similar to computing $\frac{1}{\tau^d} \Esp_{\bP_d}$.

As $\nu \to 0$ we may then compare the right-hand side of \eqref{truncationerror} with the same quantity for $\bP$, namely
\begin{equation*}
\tau^{-s-d} \Esp_{\bP} \left[ \left( \left(\left|\C \cap \carr_{12\tau} \right| \right)^{2+s/d}  - 1 \right)_+ \right]
\end{equation*}
which can be shown to be $o_{\tau}(1)$ (following the argument of \cite[Section 6.3.3]{LebSer}), because it is in turn of the same order as 
$$
\Esp_{\bP} \left[ \left( \Int - \Int_{\tau} \right) [K_1, K_1]  \right],
$$ 
which goes to zero as $\tau \to 0$ by dominated convergence.

We obtain 
\begin{equation} \label{errtrunpetite}
\lim_{\tau \to 0} \lim_{M, R \ti} \lim_{\nu \to 0} \frac{1}{N} \sum_{i \in I} \left( \Int - \Int_{\tau} \right) [\C_i, \C_i] =0
\end{equation}
and combining \eqref{errtrunpetite} with \eqref{energietronqueeconverge} we get that
\begin{equation} \label{avantscaling}
\lim_{\tau \to 0} \lim_{M \ti, R \ti} \lim_{\nu \to 0} \lim_{N \ti} \frac{1}{N} \sum_{i \in I} \Int[\C_i, \C_i] \leq \bWsP(\bP).
\end{equation}

\textbf{Step 4.} \textit{Shrinking the configurations.}  \edb{
This procedure is  borrowed from \cite{HSAdv}. It rescales the configuration by a factor less than one (but very close to $1$) effectively shrinking it and creating an empty boundary layer around each cube. Thus points belonging to different cubes are sufficiently well-separated so that interactions between the cubes are negligible--a much simpler approach to screening than that in the long range case.}

For $R > 0$ we let $\Rp := R^{\sqrt{d/s}}$.

It is not true in general that $\Int[\C, \C]$ can be split as the sum $\sum_{i \in I} \Int[\C_i, \C_i]$. However since the Riesz interaction decays fast at infinity it is approximately true if the configurations $\C_i$ are separated by a large enough distance. To ensure that, we “shrink” every configuration $\C_i$ in $\tK_i$, namely we rescale them by a factor $1 - \frac{\Rp}{R}$. This operation affects the discrete average \eqref{bPdbelo} but not the empirical process; i.e., for any $\epsilon > 0$, if $M, R$ are large enough and $\nu$ small enough, we may still assume that \eqref{bPcbelo} holds. The interaction energy in each hypercube $\tK_i$ is multiplied by $\left(1- \frac{\Rp}{R}\right)^{-s} = 1 + o_R(1)$, but the configurations in two distinct hypercubes are now separated by a distance at least $\Rp$. Since \eqref{conditioning2} holds, an elementary computation implies that we have, for any $i$ in $I$
\begin{equation*}
\Int[\C_i, \sum_{j \neq i} \C_j] = M^2 R^{d} \frac{R^d}{\Rp^s} O(1),
\end{equation*}
with a $O(1)$ depending only on $d,s$. We thus get 
\begin{equation*}
\Int[\C, \C] = \sum_{i \in I} \Int[\C_i, \C_i] +  N M^2 \frac{R^d}{\Rp^s} O(1),
\end{equation*}
but $\frac{R^d}{\Rp^s} = o_R(1)$ by the choice of $\Rp$ (and the fact that $d < s$) and thus (in view of \eqref{avantscaling} and the effect of the scaling on the energy)
\begin{equation}
\lim_{\tau \to 0} \lim_{M \ti, R \ti} \lim_{\nu \to 0} \lim_{N \ti} \frac{1}{N} \Int[\C, \C] \leq \bWsP(\bP).
\end{equation}
We have thus constructed a large enough (see \eqref{bonvolume}) volume of point configurations in $\Ac_N$ whose associated empirical processes converge to $\bP$ and such that
\begin{equation*}
\frac{1}{N} \Int[\C, \C] \leq \bWsP(\bP) + o(1).
\end{equation*}
We may view these configurations at the original scale by applying a homothety of factor $N^{-1/d}$, this way we obtain point configurations $\XN$ in $\Ac$ such that 
\begin{equation*}
\frac{1}{N^{1+s/d}} \Es(\XN) \leq \bWsP(\bP)+ o(1).
\end{equation*}
It is not hard to see that the associated empirical measure $\mu_N$ converges to the intensity measure of $\bP$ and since $V$ is continuous we also have
\begin{equation*}
\frac{1}{N} \int_{\R} V d\mu_N = \bV(\bP) + o(1).
\end{equation*}
This concludes the proof of Proposition \ref{prop:quasicontinu}.

\end{proof}

We may now prove the LDP lower bound.
\begin{proof}[Proof of Proposition \ref{prop:LDPLB}.]
Proposition \ref{prop:quasicontinu} implies \eqref{LDPLB}, indeed we have
\begin{multline*}
\fPNbeta ( B(\bP, \epsilon)) = \frac{1}{\ZNbeta} \int_{\Ac^N \cap \{\bEmp_N(\XN) \in B(\bP, \epsilon)\}} \exp\left(-\beta N^{-s/d} \HN(\XN)\right) d\XN \\ \geq \frac{1}{\ZNbeta} \int_{\Ac^N \cap  \{\bEmp_N(\XN) \in B(\bP, \epsilon) \cap T_{N, \delta}(\bP)\}} \exp\left(-\beta N^{-s/d} \HN(\XN)\right) d\XN \\
\geq \frac{1}{\ZNbeta} \exp \left(-\beta N \left(\fbarbeta(\bP) + \delta\right) \right) \int_{\Ac^N \cap \{\bEmp_N(\XN) \in B(\bP, \epsilon) \cap T_{N, \delta}(\bP)\}} d\XN,
\end{multline*}
and \eqref{quasicontinu} allows us to bound below the last integral as
\begin{equation*}
\liminf_{\delta \to 0, \epsilon \to 0, N \ti} \frac{1}{N} \log \int_{\Ac^N \cap \{\bEmp_N(\XN) \in B(\bP, \epsilon) \cap T_{N, \delta}(\bP)\}} d\XN \geq -\int_{\Ac} \left(\ERS[\bPx | \Poisson] -1\right) - 1.
\end{equation*}
\end{proof}

\subsection{Proof of Theorem \ref{theo:LDPemp} and Corollary \ref{coro:ZNbeta}}
From Proposition \ref{prop:LDPUB} and Proposition \ref{prop:LDPLB}, the proof of Theorem \ref{theo:LDPemp} is standard. Exponential tightness of $\fPNbeta$ comes for free (see e.g. \cite[Section 4.1]{LebSer}) because the average number of points is fixed, and we may thus improve the weak large deviations estimates \eqref{LDPUB}, \eqref{LDPUB} into the following: for any $A \subset \psunbconfig$ we have
\begin{multline*}
- \inf_{\mathring{A}} \fbarbeta + \liminf_{N \ti} \left(- \frac{\log \ZNbeta}{N}\right)  \\ \leq \liminf_{N \ti}\frac{1}{N} \log \fPNbeta(A) \leq \limsup_{N \ti}\frac{1}{N} \log \fPNbeta(A) \\ \leq - \inf_{\overline{A}} \fbarbeta \limsup_{N \ti} \left(- \frac{\log \ZNbeta}{N}\right).
\end{multline*}
We easily deduce that
\begin{equation*}
\lim_{N \ti} \frac{\log \ZNbeta}{N} = - \min_{\psunbconfig} \fbarbeta, 
\end{equation*}
which proves Corollary \ref{coro:ZNbeta}, and that the LDP for $\fPNbeta$ holds as stated in Theorem \ref{theo:LDPemp}.

\subsection{Proof of Theorem \ref{theo:LDPmesure}}
\label{sec:contractionprinciple}
\begin{proof}
Theorem \ref{theo:LDPmesure} follows from an application of the “contraction principle” (see e.g. \cite[Section 3.1]{MR3309619}). Let us consider the map $\pbconfig \to \probas(\Ac)$ defined by
\begin{equation*}
\tIntens : \bP \mapsto \int_{\Ac} \delta_x \Esp_{\bPx}[\C \cap \carr_1].
\end{equation*}
It is continuous on $\psbconfig$ and coincides with $\bIntens$. By the contraction principle, the law of $\tIntens(\bEmp(\XN))$ obeys a large deviation principle governed by 
\begin{equation*}
\rho \mapsto \inf_{\bIntens(\bP) = \rho} \fbarbeta(\bP), 
\end{equation*}
which is easily seen to be equal to $\Ibeta(\rho)$ as defined in \eqref{def:IbetaA}.

For technical reasons (a boundary effect), it is not true in general that $\tIntens(\bEmp(\XN)) = \emp(\XN)$, however we have
\begin{equation*}
\dist_{\probas(\Ac)} \left(\tIntens(\bEmp(\XN)),  \bEmp(\XN)\right) = o_N(1),
\end{equation*}
uniformly for $\XN \in \Ac$. In particular, the laws of $\tIntens(\bEmp(\XN))$ and of $\emp(\XN)$ are exponentially equivalent (in the language of large deviations), thus any LDP can be transferred from one to the other. This proves Theorem \ref{theo:LDPmesure}.
\end{proof}

\section{Additional proofs: Propositions \ref{prop:muVbeta}, \ref{prop:minimizers} and \ref{prop:crystallization1d}}
\label{sec:addproofs}
\subsection{Limit of the empirical measure} \label{sec:LDPempir}
From Theorem \ref{theo:LDPmesure} and the fact that $\Ibeta$ is strictly convex we deduce that  $\emp(\XN)$ converges almost surely to the unique minimizer of $\Ibeta$. 
\begin{proof}[Proof of Proposition \ref{prop:muVbeta}.]  \ \\
 First, if $V = 0$ and $\Omega$ is bounded, $\Ibeta$ can be written as
\begin{multline*}
 \Ibeta(\rho) :=  \int_{\Ac} \rho(x) \inf_{P \in \psunconfig} \left( \beta \rho(x)^{s/d} \WsP(P) +\ERS[P|\Poisson] \right)dx \\ + \int_{\Ac}  \rho(x) \log \rho(x) \, dx.
\end{multline*}
We claim that both terms in the right-hand side are minimized when $\rho$ is the uniform probability measure on $\Ac$ (we may assume $|\Ac| = 1$ to simplify, without loss of generality). This property is well-known for the relative entropy term $\int_{\Ac} \rho \log \rho$, and we now prove it for the energy term. 
\ed{First, let us  observe that 
$$
\alpha \mapsto \inf_{P \in \psunconfig} \left( \beta \alpha^{1+s/d} \WsP(P) + \alpha \ERS[P|\Poisson]\right)
$$
is convex in $\alpha$ \edb{since it is the infimum over a family of convex functions  (recall } that $\alpha \mapsto \alpha^{1+s/d}$ is convex in $\alpha$ and that $\WsP$ is always positive). Since $|\Ac| = 1$ we have, by Jensen's  inequality, 
\begin{multline*}
\int_{\Ac}  \inf_{P \in \psunconfig} \left( \beta \rho(x)^{1+s/d} \WsP(P) + \rho(x)\ERS[P|\Poisson] \right)dx 
\\
\geq \inf_{P \in \psunconfig} \left( \beta \left(\int_{\Ac} \rho(x) \right)^{1+s/d} \WsP(P)  + \left(\int_{\Ac} \rho(x)\right) \ERS[P|\Poisson] \right),
\end{multline*}
and since $\int_{\Ac} \rho = 1$, we conclude that $\Ibeta$ is minimal for $\rho \equiv 1$.} Thus the empirical measure converges almost surely to the uniform probability measure on $\Ac$, which proves the first point of Proposition \ref{prop:muVbeta}. 

 Next, let us assume  that $V$ is arbitrary and $\Omega$ bounded. It is not hard to see that for the minimizer $\mu_{V, \beta}$ of $\Ibeta$ we have, as $\beta \to 0$.
\begin{equation*}
\Ibeta(\mu_{V, \beta}) \geq \Ibeta(\rhounif) + O(\beta),  
\end{equation*}
where $\rhounif$ is the uniform probability measure on $\Ac$. Moreover it is also true (as proven above) that the first term in the definition of $\Ibeta$ is minimal for $\rho = \rhounif$. We thus get that, as $\beta \to 0$
$$
\int_{\Ac} \mu_{V, \beta} \log \mu_{V, \beta} - \int_{\Ac} \rhounif \log \rhounif = O(\beta),
$$
\ed{in other words the relative entropy of $\mu_{V, \beta}$ with respect to $\rhounif$ converges to $0$ as $\beta \to 0$. The Csiszár-Kullback-Pinsker's inequality allows us to bound the square of the total variation distance between $\mu_{V, \beta}$ and $\rhounif$ by the relative entropy (up to a multiplicative constant), and thus $\mu_{V, \beta}$ converges (in total variation) to the uniform probability measure on $\Ac$ as $\beta \to 0$.} This proves the second point of Proposition \ref{prop:muVbeta}.

 Finally for $V$ arbitrary, the problem of minimizing of $\Ibeta$ is, as $\beta \ti$, similar to minimizing
$$
\beta \left( \int_{\Ac} \rho(x)^{1+s/d} \min \WsP  dx + \int_{\Ac} \rho(x) V(x)  dx \right).
$$
Since $\min \WsP = \Csd$ we recover (up to a multiplicative constant $\beta > 0$) the minimization problem studied in \cite{Hardin:2016kq}, namely the problem of minimizing
$$
\Csd \int_{\Ac} \rho(x)^{1+s/d}  dx + \int_{\Ac} \rho(x) V(x)  dx,
$$
among probability densities, whose (unique) solution is given by $\mu_{V, \infty}$. 

In order to prove that $\mu_{V, \beta}$ converges to $\mu_{V, \infty}$ as $\beta \ti$, we need to make that heuristic rigorous, which requires an adaptation of \cite[Section 7.3, Step 2]{leble2016logarithmic}. We claim that there exists a sequence $\{P_k\}_{k \geq 1}$ in $\psunconfig$ such that
\begin{equation} \label{Pkapprox}
\lim_{k \ti} \WsP(P_k) = \Csd, \quad \forall k \geq 1,  \ERS[P_k|\Pi] < + \infty.
\end{equation}
We could think of taking $P_k = P$ where $P$ is some minimizer of $\WsP$ among $\psunconfig$, but it might have infinite entropy (e.g., if $P$ was the law of the stationary process associated to a lattice, as in dimension $1$). We thus need to “expand” $P$ (e.g., by making all the points vibrate independently in small balls as described in \cite[Section 7.3, Step 2]{leble2016logarithmic} in the case of the one-dimensional lattice). We may then write that, for any $\beta > 0$ and $k \geq 1$, 
\begin{multline*}
\Ibeta(\mu_{V, \beta}) \leq \Ibeta(\mu_{V, \infty}) \leq  \beta \left( \int_{\Ac} \mu_{V, \infty}(x)^{1+s/d} \WsP(P_k) + \int_{\Ac} \mu_{V, \infty}(x) V(x) dx \right) \\ + \ERS[P_k|\Pi] + \int_{\Ac} \mu_{V, \infty}(x) \log \mu_{V, \infty}(x) \\
\leq \beta \left(\int_{\Ac} \mu_{V, \infty}(x)^{1+s/d} \Csd + \int_{\Ac} \mu_{V, \infty}(x) V(x) dx \right) + \ERS[P_k|\Pi] + \beta o_k(1),
\end{multline*}
where we have used \eqref{Pkapprox} in the last inequality. Choosing $\beta$ and $k$ properly \ed{so that $k \to \infty$ as $\beta \to \infty$, while assuring that the $\beta o_k(1)$ term goes to zero, we have}
\begin{multline*}
\Csd \int_{\Ac} \mu_{V, \beta}(x)^{1+s/d} + \int_{\Ac} \mu_{V, \beta}(x) V(x) \leq \Csd \int_{\Ac} \mu_{V, \infty}(x)^{1+s/d}  dx + \int_{\Ac} \mu_{V, \infty}(x) V(x)  dx \\ 
+ o_{\beta \ti }(1).
\end{multline*}
By convexity, it implies that $\mu_{V, \beta}$ converges to $\mu_{V, \infty}$ as $\beta \ti$.
\end{proof}

\subsection{The case of minimizers}
\begin{proof}[Proof of Proposition \ref{prop:minimizers}]
Let $\{\XN\}_N$ be a sequence of $N$-point configuration such that for all $N \geq 1$, $\XN$ minimizes $\HN$. From Proposition \ref{prop:glinf} we know that (up to extraction), $\{\bEmp(\XN)\}_N$ converges to some $\bP \in \psunbconfig$ such that
\begin{equation} \label{minimlsci}
\bWsP(\bP) + \bV(\bP) \leq \liminf_{N \ti} \frac{\HN(\XN)}{N^{1+s/d}},
\end{equation}
and we have, by \eqref{Wsfatou}, \eqref{minA1} and the scaling properties of $\WsP$
\begin{equation} \label{minimplusgrand}
\bWsP(\bP) + \bV(\bP) \geq \Csd \int_{\Ac} \rho(x)^{1+s/d} dx + \int_{\Ac} V(x) \rho(x) dx.
\end{equation}
where $\rho = \bIntens(\bP)$. We also know that the empirical measure $\emp(\XN)$ converges to the intensity measure $\rho = \bIntens(\bP)$. 

On the other hand, from \cite[Theorem 2.1]{Hardin:2016kq} we know that $\emp(\XN)$ converges to some measure $\mu_{V, \infty}$ which is defined as follows: define $L$ to be the unique solution of 
 \begin{equation*}
 \int_{\Ac} \left[\frac{L-V(x)}{\Csd(1+s/d)}\right]_+^{d/s} dx=1.  
 \end{equation*}
 and let then $\mu_{V, \infty}$ be given by
  \begin{equation}\label{muVinfty}
\mu_{V, \infty}(x):= \left[\frac{L-V(x)}{\Csd (1+s/d)}\right]_+^{d/s} \qquad (x\in  \Ac).
 \end{equation}
It is proven in \cite{Hardin:2016kq} that $\mu_{V, \infty}$ minimizes the quantity 
 \begin{equation}  \label{queminimim}
 \Csd \int_{\Ac} \rho(x)^{1+s/d}\, dx + \int V(x) \rho(x)\, dx,
 \end{equation}
among all probability density functions $\rho$ supported on $\Ac$. It is also proven that 
\begin{equation} \label{minimizerminimize}
\lim_{N \ti} \frac{\HN(\XN)}{N^{1+s/d}} = \Csd \int_{\Ac} \mu_{V, \infty}(x)^{1+s/d}\, dx + \int V(x) \mu_{V, \infty}(x)\, dx,
\end{equation}

By unicity of the limit we have $\rho := \bIntens(\bP) = \mu_{V, \infty}$. In view of \eqref{minimlsci}, \eqref{minimplusgrand}, \eqref{minimizerminimize} and by the fact that $\mu_{V, \infty}$ minimizes \eqref{queminimim} we get that
$$
\bWsP(\bP) + \bV(\bP) = \Csd \int_{\Ac} \mu_{V, \infty}(x)^{1+s/d} dx + \int_{\Ac} V(x) \mu_{V, \infty}(x) dx,
$$
and that $\bP$ is in fact a minimizer of $\bWsP + \bV$. We must also have 
$$\bWsP(\bP) = \Csd \int_{\Ac} \mu_{V, \infty}(x)^{1+s/d}\, dx$$
hence (in view of \eqref{Wsfatou}) we get 
\begin{equation*}
\Ws(\C) = \Csd \mu_{V, \infty}(x)^{1+s/d} = \min_{\config_{\mu_{V, \infty}(x)}} \Ws, \text{ for $\bP$-a.e. $(x,\C)$,}
\end{equation*}
which concludes the proof.
\end{proof}

\subsection{The one-dimensional case}
Proposition \ref{prop:crystallization1d} is very similar to the first statement of \cite[Theorem 3]{leble2016logarithmic}, and we sketch its proof here. 
\begin{proof}[Proof of Proposition \ref{prop:crystallization1d}]
First, we use the expression of $\WsP$ in terms of the two-point correlation function, as presented in \eqref{Wrho2P2}
$$
\WsP(P) = \liminf_{R \ti} \int_{[-R, R]^d} \frac{1}{|v|^s} \rho_{2,P}(v) \left( 1 - \frac{|v|}{R} \right) dv.
$$
Then, we split $\rho_{2, P}$ as the sum 
$$
\rho_{2, P} = \sum_{k=1}^{+\infty} \rho_{2,P}^{(k)},
$$
where $\rho_{2,P}^{(k)}$ is the correlation function of the $k$-th neighbor (which makes sense only in dimension $1$). It is not hard to check that
$$\int \rho_{2,P}^{(k)}(x) = 1 \text{ and } \int x \rho_{2,P}^{(k)}(x) = k$$
(the last identity holds because $P$ has intensity $1$ and is stationary). Using the convexity of 
$$
v \mapsto \frac{1}{|v|^s}  \left( 1 - \frac{|v|}{R} \right), 
$$
we obtain that for any $k \geq 1$ it holds
$$
\int  \frac{1}{|v|^s}  \left( 1 - \frac{|v|}{R} \right) \rho_{2,P}^{(k)} dv \geq \int \frac{1}{|v|^s}  \left( 1 - \frac{|v|}{R} \right) \delta_{k}(v) dv = \int \frac{1}{|v|^s}  \left( 1 - \frac{|v|}{R} \right) \rho_{2,P_{\Z}}^{(k)}(v) dv,
$$
where $P_{\Z} = u + \Z$ (with $u$ uniform in $[0,1]$) thus we have
$$
\WsP(P) \geq \WsP(P_{\Z}), 
$$
which proves that $\WsP$ is minimal at $P_{\Z}$.
\end{proof}

\noindent
{\bf Acknowledgements:}  The authors thank the referee for a careful reading and helpful suggestions.

 \bibliographystyle{alpha}

\bibliography{LDPHyper_bibfile}

 \end{document}